\newcommand{\approptoinn}[2]{\mathrel{\vcenter{
  \offinterlineskip\halign{\hfil$##$\cr
    #1\propto\cr\noalign{\kern2pt}#1\sim\cr\noalign{\kern-2pt}}}}}
\newcommand{\sectiontitle}[1]{\textbf{#1:}}
\begin{document}
\title{Lowering Connectivity Requirements For Bivariate Bicycle Codes Using Morphing Circuits}
\author{Mackenzie H. Shaw}
\affiliation{QuTech, Delft University of Technology, Lorentzweg 1, 2628 CJ Delft, The Netherlands}
\affiliation{Delft Institute of Applied Mathematics, Delft University of Technology, Mekelweg 4, 2628 CD Delft, The Netherlands}
\author{Barbara M. Terhal}
\affiliation{QuTech, Delft University of Technology, Lorentzweg 1, 2628 CJ Delft, The Netherlands}
\affiliation{Delft Institute of Applied Mathematics, Delft University of Technology, Mekelweg 4, 2628 CD Delft, The Netherlands}
\newtheorem{theorem}{Theorem}[section]
\newtheorem{corollary}[theorem]{Corollary}
\newtheorem{lemma}[theorem]{Lemma}
\newtheorem{proposition}[theorem]{Proposition}
\newtheorem{criterion}{Criterion}
\newtheorem{appendixcriterion}{Criterion}[section]
\theoremstyle{definition}
\newtheorem{assumption}{Assumption}
\theoremstyle{remark}
\newtheorem*{remark}{Remark}
\theoremstyle{definition}
\newtheorem{definition}[theorem]{Definition}
%\numberwithin{equation}{section} %numbers equations by section, i.e. (sec.eqno)
\Crefname{criterion}{Criterion}{Criteria}
\crefname{criterion}{Crit.}{Crit.}
\begin{abstract}
In Ref.~\cite{Bravyi24}, Bravyi \textit{et al.}~found examples of Bivariate Bicycle (BB) codes with similar logical performance to the surface code but with an improved encoding rate.
%Bravyi \textit{et al.}~\cite{Bravyi24} proposed a set of small quantum Bivariate Bicycle (BB) codes that achieve a similar circuit-level error rate to the surface code but with an improved encoding rate.
In this work, we generalize a novel parity-check circuit design principle called \textit{morphing circuits} and apply it to BB codes. We define a new family of BB codes
%with identical $[[n,k,d]]$ parameters to the BB codes in Ref.~\cite{Bravyi24}
whose parity check circuits require a qubit connectivity of degree five instead of six while maintaining their numerical performance. %Intriguingly, each parity check circuit requires only 6 rounds of CNOT gates --- one fewer than in Ref.~\cite{Bravyi24} --- even though our new codes have weight-9 stabilizers.
Logical input/output to an ancillary surface code is also possible in a biplanar layout.
%The new codes perform at least as well as those of Ref.~\cite{Bravyi24} under uniform circuit-level noise when decoded using BP-OSD.
Finally, we develop a general framework for designing morphing circuits and present a sufficient condition for its applicability to two-block group algebra codes.
%and therefore provide an exciting new target for medium-term error-correction experiments using LDPC codes.
\end{abstract}
\maketitle

%\cite{Geh_r_2024}
\sectiontitle{Introduction}
Quantum error correction (QEC) is crucial for achieving fault-tolerant universal quantum computation. One of the most widely-studied QEC codes is the surface code~\cite{Kitaev03,bravyi1998,Cleland22}, whose strengths include its planar qubit connectivity and good performance at experimentally achievable error rates. However, since only one logical qubit is encoded in each surface code patch, the qubit overhead becomes extremely high at the low error rates required for practical quantum algorithms.

One alternative approach is to use low-density parity check (LDPC) codes which encode more than one logical qubit per code block, at the expense of no longer having a purely planar connectivity. 
%LDPC codes can significantly reduce the qubit overhead compared to the surface code by encoding more logical qubits in a similar number of physical qubits. 
%However, these advantages come at the cost of having higher-weight, non-local stabilizer generators that must be measured. 
Recently, Bravyi \textit{et al.}~\cite{Bravyi24} introduced a set of LDPC codes called bivariate bicycle (BB) codes --- a subset of the more general two-block group algebra (2BGA) codes~\cite{Kovalev13,Lin23}--- that, for the first time, match the logical performance of surface codes even at relatively high physical error rates. As was noted in Ref.~\cite{Bravyi24}, physically implementing these BB codes using, for example, superconducting qubits, presents an additional experimental challenge: each qubit performs a CNOT gate with six other qubits in a biplanar layout.

In this work, we simplify the experimental requirements by constructing a set of closely related BB codes whose physical implementation only requires each qubit to interact with five other qubits in a biplanar layout. The new codes are designed using a recently proposed parity check circuit design philosophy that we refer to as \textit{morphing}. Originally called ``middle-out'' circuits, these morphing circuits have been applied to both surface codes and color codes to reduce the connectivity requirements in those codes \cite{McEwen23,Gidney23}. We choose the name morphing circuits as the procedure is also related to the concept of \textit{morphing} quantum codes from Ref.~\cite{Vasmer22}.

Our contribution is to generalise the idea of morphing circuits to generate parity check circuits for general codes and apply this methodology to the BB codes in Ref.~\cite{Bravyi24}. The procedure takes as input a known code $C$ and outputs a pair of new ``end-cycle'' codes $\tilde{C}_{1}$ and $\tilde{C}_{2}$ along with a pair of parity check circuits. Each parity check circuit measures all of the stabilizer generators of one end-cycle code $\tilde{C}_{i}$ while simultaneously transforming into the other code $\tilde{C}_{i'}$, see Fig.~\ref{fig:overview}. Moreover, mid-way through the parity check circuit the joint state of the data and ancilla qubits is encoded in the original, known code $C$. Despite this, the end-cycle codes may bear little resemblance to the codes from which they are derived --- indeed, in our case, the weight of the stabilizers of the end-cycle codes is nine instead of six.

To demonstrate the practicality of our new codes, we investigate their performance against uniform circuit-level depolarising noise using the BP-OSD decoder \cite{Panteleev21,Roffe20}. We find that the new codes perform at least as well as those in Ref.~\cite{Bravyi24}, and therefore provide the same overhead savings versus the surface code as those in Ref.~\cite{Bravyi24}. Moreover, we demonstrate that the input and output (I/O) of arbitrary logical qubits from the new codes to the surface code is possible using morphing circuits with a biplanar graph layout. To the best of our knowledge, our I/O construction is the first time morphing circuits have been used to perform a lattice surgery operation.

\sectiontitle{Mid- and End-cycle Codes} We begin by introducing some terminology (following Ref.~\cite{McEwen23}) for an arbitrary parity check circuit that measures the stabilizer generators of the code. Each round of parity checks begins and ends with a measurement and reset of a set of ancilla qubits. During this time, we say that the remaining data qubits are encoded in the \textit{end-cycle} code. Next, a circuit of Clifford gates is performed. At each step during this circuit, the data and ancilla qubits are also encoded in some QEC code. In particular, we define the \textit{mid-cycle} code as the code that arises precisely mid-way through the circuit~\footnote{Strictly speaking of course the ``mid-way'' point of a parity check circuit is only well-defined if the circuit has an even depth --- which is satisfied for all the new circuits constructed in this Letter.}. The stabilizer generators of this mid-cycle code can be determined using the Gottesman-Knill theorem~\cite{Aaronson04} and originate from two sources: the stabilizers of the end-cycle code, and the reset of the ancilla qubits at the start of the QEC cycle. This latter set of stabilizers ensures that the mid-cycle code has the same number of logical qubits as the end-cycle code despite being encoded across both the data and ancilla qubits.

%We first describe a standard parity check circuit beginning with a known CSS code $C$, with code parameters $[[n,k,d]]$ and set of stabilizer generators $S$. Such a standard circuit uses $|S|$ ancilla qubits and begins with a reset of the ancilla qubits, followed by CNOT gates between the data and ancilla qubits, and ends by measuring the ancilla qubits. During the measurement of the ancilla qubits, we say that the data qubits are encoded in the \textit{end-cycle} code, which in this case is simply the known code $C$. However, at each step during the circuit, the data and ancilla qubits are also encoded in a QEC code with the same number of logical qubits as $C$. In particular, we define the \textit{mid-cycle} code as the code that arises precisely mid-way through the syndrome extraction circuit. The stabilizer generators of this mid-cycle code can be determined using the Gottesman-Knill theorem~\cite{Aaronson04} and originate from one of two sources: the stabilizers $S$ of the end-cycle code $C$, or the reset of the ancilla qubits at the start of the QEC cycle. This latter set of stabilizers ensure that the mid-cycle code has the same number of logical qubits as $C$ despite being encoded across both the data and ancilla qubits.

\sectiontitle{QEC through Morphing Circuits}
The standard approach to designing a parity check circuit is to assume that the end-cycle code corresponds to some specified, ``known'' code. In contrast, to design a morphing parity check circuit we instead assume that the known code corresponds to the \textit{mid-cycle} code of the circuit, while the end-cycle code is yet-to-be-determined.
%In the standard parity check circuit, the known code $C$ corresponds to the end-cycle code of the circuit, while the mid-cycle code is a new code. The morphing protocol construction turns this correspondence around: here, the known code $C$ corresponds to the \textit{mid-cycle} code, while the end-cycle code is a new, unknown code.

More precisely, the morphing construction is defined by a series of \textit{contraction circuits} $F_{i}$. Each contraction circuit $F_{i}$ must be a Clifford circuit that takes a subset $S_{i}\subseteq S$ of the stabilizer generators of the known, mid-cycle code $C$ and contracts each generator onto a single qubit. We call the stabilizers in $S_{i}$ \textit{contracting} stabilizers. The circuit $F_{i}$ should \textit{not} use any additional qubits since the mid-cycle code $C$ is already encoded across all the data and ancilla qubits. Subsequently, each qubit that hosts a contracted stabilizer is measured in the $X$- or $Z$-basis to reveal the eigenvalue of the contracted stabilizer. We label this set of measurements $M_{i}$. At this point, the remaining non-measured qubits are encoded in the end-cycle code $\tilde{C}_{i}$, which is a new code determined by propagating the stabilizers of the known code $C$ through the circuit $M_{i}\circ F_{i}$. Finally, we restore the mid-cycle code $C$ by first resetting all the measured qubits (labeled $R_{i}$) and then applying the inverse circuit $F_{i}^{\dagger}$.

A set of contracting circuits $F_{i}$ defines a valid morphing (parity check) protocol if every stabilizer generator is contained in at least one of the contracting sets $S_{i}$, i.e. $\bigcup_{i}S_{i}=S$. When this is the case, we can use the morphing protocol to implement a parity check schedule of the \textit{end-cycle} code $\tilde{C}_{i}$, that simultaneously transforms the code into $\tilde{C}_{i+1}$ after each QEC round. The end-cycle codes $\tilde{C}_{i}$ are new and their parameters are $[[\tilde{n}_{i},k,\tilde{d}_{i}]]$ when the known code $C$ has parameters $[[n,k,d]]$. It is guaranteed that $\tilde{n}_{i}<n$ and, typically, the distance $\tilde{d}_{i} \leq d$; 
%(although, in principle, there is nothing stopping it being larger)
in \cite{supp} we give a simple lower-bound on $\tilde{d}_{i}$ given the circuits $F_{i}$. 

For the codes considered in this Letter, we only need two contracting circuits $F_{1}$ and $F_{2}$, with each contracting subset $S_{i}$ containing half of the generators in $S$. In this case, the parity check schedule from $\tilde{C}_{1}\rightarrow \tilde{C}_{2}$ measures all of the stabilizers of $\tilde{C}_{1}$, and vice versa.
In \cref{fig:overview} we summarise the practical operation of such a morphing protocol.
%Even though we have designed the morphing circuit starting with the known, mid-cycle code $C$, operationally it is more useful to view the morphing circuit as a syndrome extraction circuit for each of the end-cycle codes.
The parity check circuit for $\tilde{C}_{1}$ is given by $M^{\vphantom{\dag}}_{2}\circ F^{\vphantom{\dag}}_{2}\circ F_{1}^{\dag}\circ R_{1}^{\vphantom{\dag}}$, after which we are in the code $\tilde{C}_{2}$. Then, the parity check circuit for $\tilde{C}_{2}$ is $M^{\vphantom{\dag}}_{1}\circ F^{\vphantom{\dag}}_{1}\circ F_{2}^{\dag}\circ R_{2}^{\vphantom{\dag}}$, which returns us back to $\tilde{C}_{1}$.

\begin{figure}
\includegraphics[width=\linewidth]{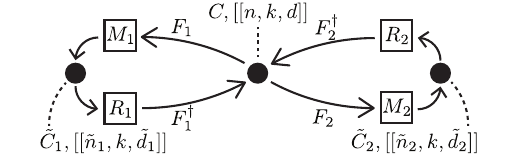}
\caption{Graphical summary of the operation of the morphing protocol, including the known mid-cycle code $C$, the end-cycle codes $\tilde{C}_{i}$, the contraction circuits $F_{i}$, and the measurement and reset rounds $M_{i}$, $R_{i}$.}
\label{fig:overview}
\end{figure}

\sectiontitle{Weight-6 Abelian 2BGA Codes}
We consider weight-6 Abelian 2BGA codes which includes the BB codes studied in Ref.~\cite{Bravyi24} --- see \cite{supp} for a generalisation to all 2BGA codes. Each code is defined by an Abelian group $G$ and two sets of group elements $A=\{a_{1},a_{2},a_{3}\}$ and $B=\{b_{1},b_{2},b_{3}\}$.
%The sets $A$ and $B$ can equivalently be considered elements of the group algebra $\mathbb{Z}_{2}[G]$ of $G$ over the field $\mathbb{Z}_{2}$.
The code is defined on $n=2|G|$ physical qubits labeled $q(L,g)$ or $q(R,g)$ for $g\in G$, with $L$ and $R$ standing for ``left'' and ``right'' qubits respectively. We write $X(P,Q)$ for the product of $X$ operators on the left qubits with labels in the subset $P\subseteq G$ and on right qubits $Q\subseteq G$, and similarly for $Z(P,Q)$ for a $Z$ operator. The stabilizer generators of the code are then given by $s(X,g)=X(Ag,Bg)$ and $s(Z,g)=Z(B^{-1}g,A^{-1}g)$ for $g\in G$, both of which have weight $w=|A|+|B|=6$. Here, for any subset $H\subseteq G$, we interpret inverse and multiplication element-wise, i.e.~$Hg=\{hg\mid h\in H\}$ and $H^{-1}=\{h^{-1}\mid h\in H\}$. Multiplying $A$ or $B$ by a group element leaves the code invariant~\cite{Lin23}; so without loss of generality we can assume that $a_{1}=b_{1}=1$.

In Ref.~\cite{Bravyi24}, the authors find a number of examples of weight-6 Abelian 2BGA codes that achieve comparable circuit-level performance to the surface code, listed in~\cref{tab:code_parameters}. Each code has $G=\mathbb{Z}_{\ell}\times\mathbb{Z}_{m}$ for positive integers $\ell,m$. The syndrome extraction schedule for this family of codes is highly optimized and requires seven rounds of CNOTs in total during which each qubit interacts with six other qubits. The Tanner graph of the codes --- the bipartite graph with qubits and checks as nodes and an edge between a qubit and a check if the check acts on the qubit --- is biplanar, meaning the edges can be split into two subsets each of which forms a planar graph.
%The biplanarity of the Tanner graphs making its realization with superconducting qubits realistic.
%Moreover, each of the examples shown in~\cref{tab:code_parameters} achieve a low logical error rate at highly reduced overhead as compared to the surface code in a realistic noise regime.

\begin{table*}
    \caption{Table of BB codes from \cite{Bravyi24}, with the possible choices of homomorphism $f_x, f_y$ or $f_{xy}$ [\cref{eq:homomorphisms}] that satisfy Criterion \ref{crit:homomorphism}, including the code parameters and circuit-level distance of the corresponding standard parity check schedule. When at least one homomorphism exists that satisfies Crit.~\ref{crit:homomorphism}, we list the code parameters of the end-cycle codes and the circuit-level distance upper-bound of the morphing circuits, which are the same for all choices of $f$ and for both end-cycle codes. Note here that $n$ refers to the number of qubits in the BB code from \cite{Bravyi24}, which includes \textit{both} data \textit{and} ancilla qubits when ran as a morphing circuit, but \textit{only} includes data qubits when ran using the circuits from~\cite{Bravyi24}.}
    \label{tab:code_parameters}
    \renewcommand{\arraystretch}{1.35}
    \setlength{\tabcolsep}{6pt}
    \centering
\begin{tabular}{|c|c|c|c|c|c|c|c|}
        \hline
        \multicolumn{4}{|c|}{Code Definition} & \multicolumn{2}{c|}{BB Code $C$ \cite{Bravyi24} }& \multicolumn{2}{c|}{End-cycle code $\tilde{C}_{i}$} \\
        \hline
        $\ell$, $m$ & $A$ & $B$ & $f$ & $[[n,k,d]]$ & $d_{\text{circ}}$ & $[[\tilde{n},k,\tilde{d}]]$ & $\tilde{d}_{\text{circ}}$ \\\hline
        6, 6 & $\{x^3,y,y^2\}$ & $\{y^3,x,x^2\}$ &  $f_x,f_y,f_{xy}$ &$[[72,12,6]]$& $\leq 6$ & $[[36,12,3]]$ & $\leq 3$ \\ \hline 
        %15, 3 & $\{x^9,y,y^2\}$ & $\{1,x^7,x^2\}$& none  & $[[90,8,10]]$& $\leq 8$& none & none\\ \hline 
        9, 6 & $\{x^3,y,y^2\}$ & $\{y^3,x,x^2\}$ & $f_y$ & $[[108,8,10]]$& $\leq 8$& $[[54,8,8]]$ & $\leq 7$ \\  \hline
        12, 6 & $\{x^3,y,y^2\}$ & $\{y^3,x,x^2\}$ & $f_x,f_y,f_{xy}$ & $[[144,12,12]]$ & $\leq 10$ & $[[72,12,6]]$ & $\leq 6$ \\ \hline
        12, 12 & $\{x^3,y^7,y^2\}$& $\{y^3,x,x^2\}$ & $f_x,f_y, f_{xy}$ & $[[288,12,18]]$& $\leq 18$ & $[[144,12,12]]$ & $\leq 12$\\ \hline
    \end{tabular}
\end{table*}

\sectiontitle{Applying the Morphing Protocol}
We now show how to construct a pair of contraction circuits $F_1$ and $F_2$ that measure all the stabilizers of a given weight-6 Abelian 2BGA code, whenever the code satisfies the following:
\begin{criterion}\label{crit:homomorphism}
    There exists a group homomorphism $f:G\rightarrow\mathbb{Z}_{2}$ with the property that $f(a_{1})\neq f(a_{2})=f(a_{3})$ and $f(b_{1})\neq f(b_{2})=f(b_{3})$.
\end{criterion}
Under the assumption that $a_{1}=b_{1}=1$, Criterion~\ref{crit:homomorphism} becomes simply $f(a_{2})=f(a_{3})=f(b_{2})=f(b_{3})=u$, where we write $\mathbb{Z}_{2}=\{1,u\}$ with $u^{2}=1$. Since any group homomorphism obeys $f(xy)=f(x)f(y)$, $f$ can be uniquely specified by how it acts upon the generators of $G$. As such, when $G=\mathbb{Z}_{\ell}\times\mathbb{Z}_{m}$, there are at most three possible choices of homomorphism that could satisfy Crit.~\ref{crit:homomorphism}, given by
\begin{subequations}\label{eq:homomorphisms}
    \begin{align}
    f_x(x)&=u,& f_x(y)&=1,& \text{if }\ell&\equiv0\,\mathrm{mod}\,2, \label{eq:x_homomorphism}\\
    f_y(x)&=1,& f_y(y)&=u,& \text{if }m&\equiv0\,\mathrm{mod}\,2, \label{eq:y_homomorphism}\\
    f_{xy}(x)&=u,& f_{xy}(y)&=u, & \text{if }\ell\equiv m&\equiv0\,\mathrm{mod}\,2.\label{eq:xy_homomorphism}
    \end{align}
\end{subequations}
Note that the conditions on $\ell$ and $m$ are necessary to ensure that the function $f$ is a group homomorphism of $G$.

The BB codes from Ref.~\cite{Bravyi24} are listed in \cref{tab:code_parameters}, along with the possible choices of homomorphism in \cref{eq:homomorphisms} that satisfy Crit.~\ref{crit:homomorphism}~\footnote{Note that different homomorphisms applied to the same mid-cycle BB code can indeed lead to distinct end-cycle codes that are unrelated via the mappings of Ref.~\cite{Lin23}.}. When a code satisfies Crit.~\ref{crit:homomorphism}, we define the two cosets $K=\ker{f}=\{g\mid f(g)=1\}$ and $K^{c}=G\setminus K=\{g\mid f(g)=u\}$. Moreover, we call qubits and stabilizers ``even'' if they're labeled by an element $g\in K$\, and ``odd'' if $g\in K^{c}$.

\begin{table}
    \caption{Definition of the contracting circuits $F_{i}$ and measurements $M_{i}$, assuming that Crit.~\ref{crit:homomorphism} is satisfied. CNOT$(q_{1},q_{2})$ indicates a CNOT gate with control $q_{1}$ and target $q_{2}$ and $M_{P}$ represents a measurement in the $P$-basis. The circuit $F_{1}$ (respectively, $F_{2}$) is defined by applying the gates below for each $g\in K$ ($g\in K^{c}$) and $h\in K^{c}$ ($h\in K$).}\label{tab:contraction_circuits}
    \renewcommand{\arraystretch}{1.35}
    \setlength{\tabcolsep}{6pt}
    \begin{tabular}{|c|c|}
        \hline
        \multirow{2}{*}{Round 1}& CNOT$\big(q(L,g),q(R,b_{3}g)\big),$\\
        & CNOT$\big(q(R,a_{3}^{-1}h),q(L,h)\big)$\\\hline
        \multirow{2}{*}{Round 2}& CNOT$\big(q(L,g),q(R,b_{2}g)\big),$\\
        & CNOT$\big(q(R,a_{2}^{-1}h),q(L,h)\big)$\\
        \hline
        \multirow{2}{*}{Round 3}& CNOT$\big(q(R,g),q(L,g)\big),$\\
        &CNOT$\big(q(L,h),q(R,h)\big)$\\\hline
        Round 4& $M_{X}\big(q(R,g)\big)$, $M_{Z}\big(q(R,h)\big)$\\
        \hline
    \end{tabular}
\end{table}

\begin{figure}
\centering
\includegraphics[width=\linewidth]{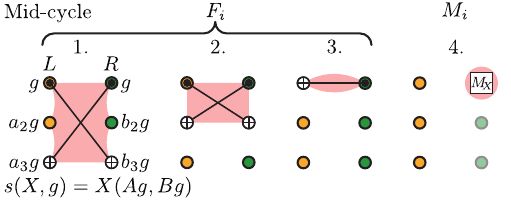}
\caption{Visual representation of a contracting $X$-stabilizer, beginning with the mid-cycle support of $s(X,g)$, then, the three steps of the contraction circuit $F_{i}$, and the measurement step $M_{i}$, as given in \cref{tab:contraction_circuits}. The support of the stabilizer before each step is shown in red.}
\label{fig:contracting_X}
\end{figure}

The morphing circuit is defined through the contracting circuits $F_{i}$, measurements $M_{i}$, and resets $R_{i}$ listed in \cref{tab:contraction_circuits}. In particular, $F_{i}$ consists of three rounds of CNOTs, such that the total CNOT depth of the parity check circuit $F_{2}^{\vphantom{-1}}\circ F_{1}^{-1}$ is six. The contracting stabilizers in the set $S_{1}$ are all even $X$-stabilizers $s(X,g)$ ($g\in K$) and the odd $Z$-stabilizers $s(Z,g)$ ($g\in K^{c}$), while the contracting stabilizers in $S_{2}$ are the odd $X$- and even $Z$-stabilizers. Both measurements $M_{1}$ and $M_{2}$ measure all of the right qubits of $C$, in the $X$- or $Z$-basis depending on whether the qubit is even or odd. One can see in \cref{fig:contracting_X} that the contracting stabilizers are indeed contracted and measured under the circuit $M_{i}\circ F_{i}$, we show this formally in \cite{supp}.

%\begin{figure*}
%\centering
%    \includegraphics[width = \linewidth]{}
%    \caption{Visual representation of an expanding $X$-stabilizer. On the left, the mid-cycle support of $s(X,g)$. Also shown are the 16 other adjacent qubits that are involved in CNOT gates during $F_{i}$. The directed edges represent the group element one must multiply by to obtain the label of the adjacent qubit. Then, the three steps of the contraction circuit $F_{i}$ and the measurement step $M_{i}$ as given in \cref{tab:contraction_circuits}. The support of the $s(X,g)$ stabilizer before each step is shown in red. On the right, the end-cycle support of the stabilizer, representing one of the stabilizer generators of the end-cycle code $\tilde{C}_{i}$.}\label{fig:expanding_X}
%\end{figure*}

\sectiontitle{Connectivity}
One advantage of the morphing protocol is that the connectivity graph of the circuits --- with vertices for each qubit and edges between qubits that participate in a CNOT --- has degree 5, one fewer than the degree of the circuits in Ref.~\cite{Bravyi24}. Indeed, by considering \cref{tab:contraction_circuits} for both circuits $F_{1}$ and $F_{2}$, we see that the connectivity graph is bipartite between the left and right qubits, with edges $\{q(L,g),q(R,a_{i}^{-1}b_{j}^{\vphantom{-1}}g)\}$ for all $g\in G$ and $(i,j)\in\{(1,1),(1,2),(1,3),(2,1),(3,1)\}$. Moreover, we explicitly prove that the connectivity graph is biplanar in \cite{supp}. For each of the codes listed in \cref{tab:code_parameters}, the standard circuit can be implemented in the ``toric$^+$'' layout using the four short-range connections of the toric code plus two long-range edges. Meanwhile, the morphing protocol requires the three short-range connections of the hex-grid rotated toric code~\cite{McEwen23} plus two long-range edges, see \cite{supp} for more details.

\sectiontitle{The End-cycle Codes}
The stabilizers that are \textit{not} contracting are called \textit{expanding} stabilizers, and these form a set of stabilizer generators for the end-cycle code $\tilde{C}_{i}$. Each end-cycle code has support only on the left qubits of $C$, and has stabilizer generators of the form $X(ABg,0)$ and $Z(A^{-1}B^{-1}h,0)$, where $g\in K$ and $h\in K^{c}$ for the code $\tilde{C}_{1}$ (and vice versa for $\tilde{C}_{2}$) --- see Fig.~4 of \cite{supp}. 
Here we have inherited the multiplication operation from the group algebra representation $\mathbb{Z}_{2}[G]$ of the subsets $A$ and $B$; explicitly, the product $AB$ is simply the set $\{a_{i}b_{j}\}$ unless
some $a_{i}b_{j}=a_{i'}b_{j'}$, in which case these two elements are removed from the set. Thus, if each of the products $a_{i}b_{j}$ is unique, each end-cycle stabilizer generator has weight nine. Moreover, the stabilizer groups of the codes $\tilde{C}_{1}$ and $\tilde{C}_{2}$ are identical up to a shift of the qubits by any element $s\in K^{c}$. Finally, each end-cycle code can be rewritten as a 2BGA code by identifying the end-cycle-left and end-cycle-right qubits as the even and odd left qubits, as shown in \cite{supp}.

We summarise the parameters of the end-cycle codes alongside their corresponding mid-cycle BB codes in \cref{tab:code_parameters}. The distance of the end-cycle codes was calculated using a linear binary integer program~\cite{Gurobi}, following the methods of Ref.~\cite{Landahl11}. Perhaps coincidentally, the end-cycle codes often have the same parameters as a different BB code. For example, the end-cycle code derived from the $[[288,12,18]]$ BB code has parameters $[[144,12,12]]$, presenting a second ``gross'' code that could be targeted by future experiments.

\begin{figure}
    \centering
    \includegraphics[width = \linewidth]{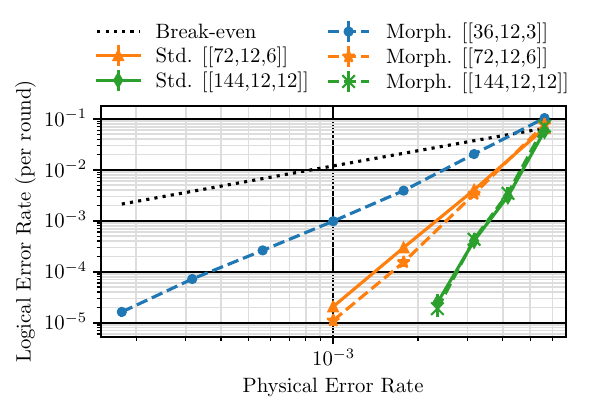}\hfill
    \caption{Numerical logical performance of codes from Ref.~\cite{Bravyi24} under standard parity check circuits (Std.) and the new codes designed from morphing circuits (Morph.) with respect to a uniform circuit-level depolarizing noise model and decoded using BP-OSD; $[[n,k,d]]$ here refers to the \textit{end-cycle} code parameters. The break-even line represents the logical error rate of 12 bare physical qubits.% Subfigure (a) compares codes that encode 12 logical qubits, while (b) compares codes that encode 8 logical qubits.
    }\label{fig:numerics}
\end{figure}

\sectiontitle{Circuit-level Performance}
We have estimated an upper-bound of the circuit-level distance $d_{\rm circ}$ of the morphing parity check protocol using the BP-OSD decoder~\cite{Panteleev21,Roffe20}, following the methods of Ref.~\cite{Bravyi24}.
%Interestingly, the morphing circuit with the new $[[144,12,12]]$ end-cycle code has $d_{\rm circ} \leq 12$, in contrast to the standard parity check circuit of the $[[144,12,12]]$ BB code with $d_{\rm circ}\leq 10$.
In \cref{fig:numerics} we numerically simulated the performance of each $k=12$ parity check circuit in \cref{tab:code_parameters} under a uniform circuit-level depolarizing noise model using the BP-OSD decoder, see \cite{supp} for the parameters used and data for more codes. When the code parameters of the BB code and the end-cycle code match, we see that they perform similarly under this circuit-level noise model, demonstrating that the morphing protocol is a viable alternative to the circuits in Ref.~\cite{Bravyi24}.

\sectiontitle{Logical Operations}
In Ref.~\cite{Bravyi24}, Bravyi \text{et al.} show how to perform the logical input and output (I/O) between an arbitrary logical qubit of the BB code and a surface code ancilla system. This involves operations within the BB code using only already-existing connections, as well as the addition of a separate linking code~\cite{Cohen22} that preserves the biplanarity of the code. In \cite{supp} we show that, under some loose assumptions about the structure of the logical operators, the I/O of arbitrary logical qubits is also possible in a biplanar layout with the morphing protocol. We show how to perform shift automorphisms within the BB code without using additional connections~\footnote{We leave investigations into the $ZX$-duality to future work, since even for the codes of Ref.~\cite{Bravyi24}, more work must be done to optimize the gate sequence before it can be implemented in practice.}. Moreover, we explain how to perform general lattice-surgery-like operations within the framework of the morphing protocol and apply this to the logical I/O between the BB and linking code. %In our case, we were able to show that under some loose restrictions on the structure of the logical operators, the resulting connectivity graph is biplanar, has a lower degree than that of Ref.~\cite{Bravyi24}, and allows the I/O of all logical qubits in the BB code.

%We show in \cref{sec:logical_operations} how to perform automorphism gates using the existing connectivity in a manner very similar to Ref.~\cite{Bravyi24}. We also find that performing some automorphisms can be done quicker by performing some of the logical operation as part of the morphing circuit, which could lead to an improved gate performance. To perform logical I/O, we show how to design a morphing protocol for an ancillary system based on Ref.~\cite{Cohen22}. In our case, the connectivity graph is not \textit{guaranteed} to be biplanar in general, but we were able to construct biplanar connectivity graphs for at least one choice of logical representative for each code, which is sufficient for universality. 
%Since the $ZX$-duality proposed in \cite{Bravyi24} takes too many CNOT gates to be practical, we leave investigations into the $ZX$-duality for in the morphing circuits to future work.

\sectiontitle{Modifications to $F_{i}$}
The morphing protocol defined in \cref{tab:contraction_circuits} is far from unique. For example, one can reverse the direction of some of the CNOTs in~\cref{tab:contraction_circuits} and still obtain a valid contraction circuit for the BB code. In \cite{supp} we show that a reversal of the CNOTs in Round 3 can be used to swap the data and ancilla qubits in each round of QEC without compromising the parameters of the end-cycle codes. This could be used experimentally to mitigate the effects of leakage~\cite{McEwen23}.

A less trivial modification involves reversing the CNOTs in Round 2. In \cite{supp} we show that the resulting end-cycle codes $\tilde{C}_{i}$ are \textit{not} equivalent to those derived from \cref{tab:contraction_circuits}, and we identify three circuits that have a larger distance and circuit-level distance upper-bound than the corresponding protocols presented in~\cref{tab:code_parameters}. However, their numerical performance against circuit-level noise does not improve, as shown in~\cite{supp}. We leave further investigation to future work.

%one way to mitigate the effect of leakage is by swapping the data and ancilla qubits in each round of QEC~\cite{McEwen23}. This can simply be achieved by reversing the direction of the CNOTs in Round 3 in the circuit $F_{2}$ and changing the measurements $M_{2}$ to be on the left-qubits instead of the right-qubits. Now, the end-cycle code $\tilde{C}_{1}$ is encoded across the left-qubits, but $\tilde{C}_{2}$ is encoded in the right-qubits. In \cref{sec:modifications} we show that the structure of the end-cycle code does not depend on this modification. We therefore expect the performance of the modified protocol to be the same under circuit-level depolarising noise, but potentially improved when leakage is included.

\sectiontitle{Outlook}
In this work we have developed a general framework to design morphing protocols for arbitrary codes and applied this framework to the BB codes presented in Ref.~\cite{Bravyi24}. Similarly to surface codes and colour codes~\cite{McEwen23,Gidney23}, these new parity check circuits reduce the degree of the connectivity graph and allow for the swapping of data and ancilla qubits between each round. Moreover, these advantages are achieved without sacrificing the biplanarity of the connectivity graph, numerical performance, or logical capabilities of the original BB codes. An exciting area of future research is therefore to apply the morphing construction to more codes; for example, BB codes that do \textit{not} satisfy Criterion~\ref{crit:homomorphism}, as well as other classes of codes such as hypergraph product codes and higher-dimensional topological codes.

\sectiontitle{Acknowledgements}
This work is supported by QuTech NWO funding 2020-2024 – Part I “Fundamental Research”, project number 601.QT.001-1, financed by the Dutch Research Council (NWO). B.M.T. thanks the OpenSuperQPlus100 project (no. 101113946) of the EU Flagship on Quantum Technology (HORIZON- CL4-2022-QUANTUM-01-SGA) for support.
We acknowledge the use of the DelftBlue supercomputer for running the decoding.
We thank Marc Serra Peralta for the helpful conversations and assistance in running the decoding.

\appendix
\section{Numerical Simulations}\label{sec:numerics}

Here, we provide the details of how we obtained the numerical results in this work. All numerical code is available \href{https://github.com/Mac-Shaw/morphing_qec_circuits}{on GitHub}~\cite{Morphing_GitHub}.

The end-cycle code distance was obtained by using the numerical optimization software Gurobi~\cite{Gurobi}, which allows one to approximately solve linear optimization problems over the integers. Specifically, we follow the methods of Ref.~\cite{Landahl11} to write the problem of finding the distance of a code as an integral linear optimization, converting ${\rm mod} 2$ arithmetic to finding even integers. That is, for $\mathbf{x}\in \mathbb{Z}^{n}$, $\mathbf{y}\in\mathbb{Z}^{n_{Z}}$ and $z\in\mathbb{Z}$, where $n$ is the number of qubits and $n_{Z}$ is the number of $Z$-stabilizer generators, and for $\overline{Z}$ the $\mathbb{Z}_{2}^{n}$ row vector representing the support of a logical $Z$-operator, we minimize $\sum_{i}x_{i}$ subject to the constraints
\begin{subequations}\label{eq:distance_linear_program}
    \begin{align}
        H_{Z}\mathbf{x}+2\mathbf{y}&=\mathbf{0},\\
        \overline{Z}\mathbf{x}+2z&=1,\\
        0\leq x_{i}&\leq 1.
    \end{align}
\end{subequations}
The minimization in \cref{eq:distance_linear_program} finds the shortest logical $\overline{X}$ operator that anti-commutes with the particular logical $\overline{Z}$ operator chosen. 

Repeating this procedure for all $k$ logical $\overline{Z}$ operators ensures that the $X$-distance $d_{X}$ is found. From Ref.~\cite{Bravyi24}, this immediately gives the full distance since $d_{X}=d_{Z}=d$.

The circuit-level distance upper-bound was found using BP-OSD following the approach of Ref.~\cite{Bravyi24}. Specifically, we define the circuit-level distance as the minimum number of faults in the circuit required to give a logical error. Since our codes are CSS we have $d_{\text{circ}}=\min(d_{X,\text{circ}},d_{Z,\text{circ}})$ and therefore we only need to check for $X$ and $Z$ logical errors independently. Taking $X$-errors, we consider a two-QEC-round $Z$-basis memory experiment in which the initial state is the all $\ket{0}^{\otimes n}$ state, before performing the circuit from $\tilde{C}_{1}\rightarrow\tilde{C}_{2}$, and followed by a measurement of all the data qubits in the $Z$-basis. We construct all our circuits using Stim~\cite{Gidney21}. We construct a noise model consisting only of single-qubit $X$-errors occurring at every space-time location in the circuit with probability $p$. Note that we do not need to explicitly add two-qubit $X$-errors after CNOT gates since these are always logically equivalent to a single-qubit $X$-error before the same CNOT. Our noise model gives rise to a detector error model with detectors corresponding to the $Z$-stabilizer measurements. However, we also include a chosen logical $\overline{Z}$ operator --- which would usually be included as an observable to check the outcome of decoding --- instead as a detector. Then, we configure the BP-OSD decoder~\cite{Roffe20} using the detector error model. We used a physical error rate of $p=0.0001$, the product\_sum BP method, the osd\_cs OSD method, an OSD order of 100, and maximum BP iterations of 100. To calculate the circuit-level distance upper bound, we supply the BP-OSD decoder with a syndrome consisting of all-zeros for both rounds of standard detectors, and one for the logical $\overline{Z}$ operator. BP-OSD will then attempt to find the lowest-weight physical $X$-error that commutes with the stabilizers but anti-commutes with the $\overline{Z}$ operator. The weight of the output of the BP-OSD decoder is therefore an upper bound on the circuit-level distance. We repeat this for every logical $\overline{Z}$ representative of each code to obtain an upper-bound on $d_{Z\text{circ}}$, and repeat this for $d_{X,\text{circ}}$, to obtain the upper-bounds listed in Table I of Ref.~\cite{Shaw24Lowering}.

\begin{figure*}
    \centering
    \includegraphics[width = \linewidth]{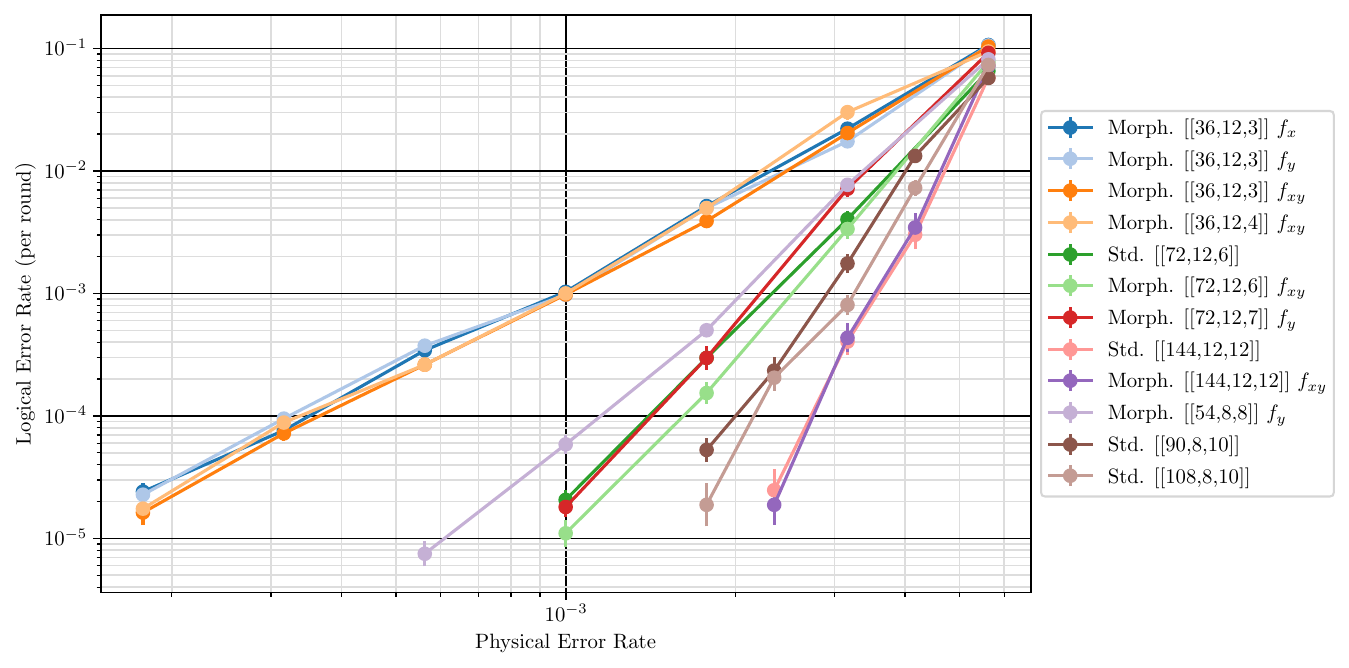}
    \caption{Numerical logical performance of codes from Ref.~\cite{Bravyi24} under standard parity check circuits (Std.) and the new codes designed from morphing circuits (Morph.) with respect to a uniform circuit-level depolarizing noise model and decoded using BP-OSD.}\label{fig:full_numerical_results}
\end{figure*}

To obtain the numerical results in \cref{fig:full_numerical_results}, we again use Stim~\cite{Gidney21}. Our noise model is a uniform circuit-level depolarising noise model. Each measurement provides the incorrect result with probability $p$, each reset prepares an orthogonal state with probability $p$, and after each CNOT gate we apply a two-qubit depolarising noise channel of strength $p$, i.e.~each non-identity element of the two-qubit Pauli group has a probability $p/15$ of occurring. We begin each experiment by perfectly preparing an encoded Bell state between the data qubits in the code and a set of $k$ error-free reference qubits. Then, we run noisy QEC circuits for $\tilde{d}$ rounds, before measuring the logical $\overline{X}_{i}X_{i}$ and $\overline{Z}_{i}Z_{i}$ operators between the BB code and the reference qubits. We again use BP-OSD to decode the syndrome --- with the $X$ and $Z$ syndromes together --- this time configured with the minimum\_sum BP method, the osd\_cs OSD method, an OSD order of 20 and a maximum BP iterations of 10,000. If the decoder fails to predict the measurement outcome of any of the Bell measurements in a given shot, we record that shot as a failure. We ran simulations for all three homomorphisms for the $[[36,12,3]]$ code and seeing that the error rates were very similar, limited ourselves to the $f_{xy}$ homomorphism where possible for the larger code instances. Our full results are presented in \cref{fig:full_numerical_results}.

%However, it is not guaranteed that a decoder for the code $C$ implies a decoder for the morphing construction.

\section{A General Description of the Morphing Protocol}\label{sec:general_description}

Here we describe the morphing construction in more depth than in the main text and discuss a few additional points, namely its relation to the previous work on morphing quantum codes and how to simply bound the distance of the end-cycle codes.

To recap, given a known, mid-cycle code $C$ with parameters $[[n,k,d]]$ with a set of stabilizer generators $S$, the goal is to design a parity check circuit that only involves Clifford gates and Pauli basis measurement/reset. The idea is to contract a subset $S_{1}\subseteq S$ of the stabilizer generators each of which is contracted onto a single qubit by the contraction circuit $F_1$. Then, those qubits are measured ($M_{1}$) and reset ($R_{1}$) before $F_1^{\dagger}$ is applied. We then repeat this process for a set of contraction circuits $F_{i}$, measurements $M_{i}$ and resets $R_{i}$ for some number of rounds $i=1,\dots,I$ such that every stabilizer generator is contracted in at least one of the contraction rounds, i.e.~$\bigcup_{i=1}^{I}S_{i}=S$.

During each round of measurements $M_{i}$, we say that the remaining qubits are encoded in the \textit{end-cycle} code $\tilde{C}_{i}$, and we label the parameters of this code $[[\tilde{n}_{i},k,\tilde{d}_{i}]]$. We then perform error-correction on the code $\tilde{C}_{i}$ by performing the circuit $M_{i+1}^{\vphantom{-1}}\circ F_{i+1}^{\vphantom{\dagger}}\circ F_{i}^{\dagger}\circ R_{i}^{\vphantom{-1}}$ (with addition modulo $I$), which simultaneously transforms the code into the next end-cycle code $\tilde{C}_{i+1}$. The codes $\tilde{C}_{i}$ may be closely related to each other --- for example, being equivalent up to a permutation of qubits --- but this is not a necessity. Note that not all the stabilizers of $\tilde{C}_{i}$ are necessarily measured when going to $\tilde{C}_{i+1}$; only when $I=2$ can we say that all stabilizers are measured each round.

We also stated in the main text that the contraction circuit should not use any ancilla qubits, since the mid-cycle code is already encoded across all data and ancilla qubits. However, this does not have to be the case, and indeed it can sometimes be useful to add a small number of ancilla qubits to design the contraction circuit. We can interpret the additional ancilla qubits as still being encoded in the mid-cycle code, but that support single-qubit stabilizers. Indeed, both the hex-grid surface code~\cite{McEwen23} and the morphing triangular color code~\cite{Gidney23} use $O(\sqrt{n})$ boundary ancilla qubits. We will also make use of ancilla qubits later in \cref{sec:logical_operations}.

Let us briefly consider a few extreme cases. One can view each contraction circuit $F_{i}$ as a partial decoding circuit of the code $C$, where a subset $S_i$ of the stabilizer generators are decoded. In the case where $I=1$, the contraction circuit $F_{1}$ becomes a full decoding circuit, and therefore the end-cycle code simply consists of $k$ unencoded qubits with distance $\tilde{d}=1$. %Moreover, since each parity check circuit consists of applying $M_{1}^{\vphantom{-1}}\circ F_{1}^{\vphantom{-1}}\circ F_{1}^{-1}\circ R_{1}^{\vphantom{-1}}$ which equals simply $M_{1}\circ R_{1}$. 
It is therefore only sensible to consider morphing protocols with more than one contraction circuit, $I\geq 2$.

On the other extreme, if $I$ is equal to the number of stabilizer generators of the code, each $F_{i}$ contracts just one generator of the code at the time. This has the advantage of always being possible. However, since the number of contraction circuits is extremely large, each stabilizer generator is measured very infrequently and hence the QEC cycle is very long and inefficient.

It is therefore favorable to minimize the number of contraction circuits $I$ while keeping the (circuit) distance as high as possible. Indeed, in this work, we only consider circuits with $I=2$. It may be possible that a larger $I> 2$ leads to a larger end-cycle code distance $\tilde{d}_{i}$, so that the increased distance counteracts the negative effects of the larger $I$ in the parity check performance. We leave investigating this possibility to future work.

\subsection{Morphing Quantum Codes}

We have noted that the transformation from code $C$ to $\tilde{C}_i$ is related to the idea of morphing quantum codes~\cite{Vasmer22}, which was introduced as a way of generating new quantum codes with fault-tolerant logical gates. In this construction, one begins with a known code $C$ and selects a subset of qubits $R$. The stabilizers with support on $R$ define a subgroup $\mathcal{S}(R) \leq \mathcal{S}$ of the stabilizer group $\mathcal{S}$. To generate the new, morphed quantum code, one applies the $O(|R|)$-depth decoding circuit of the code $\mathcal{S}(R)$ so that its logical qubits get decoded, and its stabilizer generators are contracted to single-qubit $Z$ stabilizers. This morphing procedure leads to a new code $C_{\backslash R}$ (called the ``child'' code) from the known code $C$ (called the ``parent'' code).

Morphing in Ref.~\cite{Vasmer22} is distinct from the morphing construction here in that we don't select a subset of qubits (in a region), but only a subgroup of stabilizers $\mathcal{S}_{i}=\langle S_{i}\rangle$ that are typically (but not necessarily) spread out through the code lattice. The circuit $F_{i}$ can then be interpreted as a decoding circuit for the quantum code defined by $\mathcal{S}_{i}$. Indeed, each of the morphing constructions in Ref.~\cite{Vasmer22} does define a valid morphing parity-check circuit, but there is no a priori reason why this circuit would be useful in QEC since its aim is to transform to a different code to enact fault-tolerant logical gates.

With regards to fault-tolerant logical gates, one can make the following simple observation. If the code $C$ has some transversal logical gate $\overline{U}$, then the end-cycle codes $\tilde{C}_i$ directly inherit it, as one can just apply the gate $\overline{U}$ while one is in the mid-cycle code $C$. On the other hand, note that fault-tolerant gate constructions which come about via stabilizer measurements themselves (e.g. lattice surgery) need to be adapted to the morphing parity check circuits, see details in \cref{subsec:logical_I/O}.

\subsection{Distance Bound}
\label{sec:distance_lowerbound}

Compared to the code $C$ with parameters $[[n,k,d]]$, each end-cycle code has the same number of logical qubits but fewer physical qubits, $\tilde{n}_{i}<n$. Moreover, the distance $\tilde{d}_{i}$ may be larger or smaller than $d$. However, we can derive a simple lower-bound for the distance $\tilde{d}_{i}$ based on the structure of the contraction circuit $F_{i}$:
\begin{proposition}\label{prop:distance_lower_bound}
    The distance $\tilde{d}_{i}$ of the end-cycle code $\tilde{C}_{i}$ obeys the lower bound
    \begin{equation}
        \tilde{d}_i\geq d/c_{i},
    \end{equation}
    where $c_{i}$ is the maximum weight of the operator $F_{i}^{\dagger}PF_{i}^{\vphantom{\dagger}}$, for any single-qubit Pauli operator $P$ not supported on the qubits involved in the measurement $M_{i}$. If the end-cycle code $\tilde{C}_i$ is CSS, we can additionally let $P=X,Z$.
\end{proposition}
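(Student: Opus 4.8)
The plan is to produce, for every nontrivial logical operator $\tilde L$ of the end-cycle code $\tilde C_i$, a nontrivial logical operator $\bar L$ of the mid-cycle code $C$ with $\mathrm{wt}(\bar L)\le c_i\,\mathrm{wt}(\tilde L)$. Applying this to a minimum-weight $\tilde L$ (so $\mathrm{wt}(\tilde L)=\tilde d_i$) then yields $d\le \mathrm{wt}(\bar L)\le c_i\tilde d_i$, i.e.\ $\tilde d_i\ge d/c_i$. The candidate is the ``un-contraction'' $\bar L := F_i^{\dagger}\tilde L F_i^{\vphantom{\dagger}}$, where $\tilde L$ is first extended by the identity on the qubits measured in $M_i$, matching the Pauli $F_i^\dagger P F_i$ appearing in the statement.

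First I would pin down how logical operators of $\tilde C_i$ sit inside the normalizer of the stabilizer group $\mathcal S$ of $C$. Write $\mathcal T := F_i\mathcal S F_i^{\dagger}$ for the image of $\mathcal S$ after the contraction circuit. The key observation is that each single-qubit Pauli measured in $M_i$ is precisely the image $F_i s F_i^{\dagger}$ of a contracting stabilizer $s\in S_i$ (this is what ``contracts each generator onto a single qubit'' means), so it already lies in $\mathcal T$; hence $M_i$ does not disturb the encoded state, and the end-cycle stabilizer group $\tilde{\mathcal S}$ is obtained from $\mathcal T$ by multiplying each element by whichever measured single-qubit Paulis remove its support on the measured qubits and then restricting to the unmeasured qubits. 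Using this I would verify two facts about $\bar L$: (i) $\bar L$ commutes with every element of $\mathcal S$, because $\tilde L$ is supported only on unmeasured qubits and so commutes with all measured single-qubit Paulis, and, being logical for $\tilde C_i$, commutes with all of $\tilde{\mathcal S}$, hence commutes with all of $\mathcal T$; and (ii) $\bar L\notin\mathcal S$, since otherwise $\tilde L = F_i\bar L F_i^{\dagger}\in\mathcal T$ would restrict to an element of $\tilde{\mathcal S}$, contradicting that $\tilde L$ is a nontrivial logical of $\tilde C_i$. Together, (i) and (ii) say $\bar L$ is a nontrivial logical operator of $C$, so $\mathrm{wt}(\bar L)\ge d$.

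The weight count is then immediate. Writing a minimum-weight $\tilde L$ as a product of $\tilde d_i$ single-qubit Paulis $P^{(q)}$ on unmeasured qubits $q$, we get $\bar L = \prod_{q} F_i^{\dagger}P^{(q)}F_i^{\vphantom{\dagger}}$, and since no such $q$ is measured in $M_i$, each factor has weight at most $c_i$ by definition of $c_i$. As the support of a product of Paulis is contained in the union of the supports, $\mathrm{wt}(\bar L)\le c_i\tilde d_i$, which combined with $\mathrm{wt}(\bar L)\ge d$ gives the claim. For the CSS refinement one uses that a minimum-weight logical of a CSS code can be taken purely $X$-type or purely $Z$-type, so all $P^{(q)}$ may be chosen equal to $X$ (or all to $Z$) and it is enough for $c_i$ to bound $\mathrm{wt}(F_i^{\dagger}XF_i^{\vphantom{\dagger}})$ and $\mathrm{wt}(F_i^{\dagger}ZF_i^{\vphantom{\dagger}})$.

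I expect the only real obstacle to be the bookkeeping in the second paragraph: cleanly establishing that $\bar L$ is a \emph{nontrivial} logical of $C$, which amounts to tracking how $\tilde{\mathcal S}$ and the logicals of $\tilde C_i$ lift back through $F_i^\dagger$ and the measured qubits into the normalizer of $\mathcal S$. Once that correspondence is in place, the distance inequality is just the elementary fact that conjugation by the fixed Clifford $F_i$ inflates the weight of each single-qubit tensor factor by a factor of at most $c_i$.
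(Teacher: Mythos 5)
Your proposal is correct and follows essentially the same route as the paper's proof: pull a minimum-weight end-cycle logical back through $F_i^{\dagger}$, argue it is a (nontrivial) logical operator of the mid-cycle code $C$, and bound its weight by $c_i\tilde d_i$ via the single-qubit factor decomposition. Your second paragraph in fact spells out the nontriviality and commutation bookkeeping in more detail than the paper does, but the underlying argument is the same.
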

\begin{proof}
    Consider a minimum-weight logical Pauli operator $\tilde{P}$ with weight $\tilde{d}$ in the end-cycle code (i.e.~with support only on the data qubits). Now consider the ``pre-measurement''/``post-reset'' code, which is the code on both the data and ancilla qubits consisting of the stabilizer generators of $\tilde{C}_{i}$ on the data qubits, and single-qubit $X$ or $Z$ stabilizers on the ancilla qubits. This pre-measurement/post-reset code is the code that is encoded immediately before (or after) the measurement (reset) of the ancilla qubits. $\tilde{P}$ is also a valid logical operator of the pre-measurement/post-reset code. Then, $F_{i}^{\dagger}\tilde{P}F_i^{\vphantom{\dagger}}$ with Clifford circuit $F_i$ is a valid logical operator of the mid-cycle code $C$ that we define to have weight $w$. Moreover, the weight $w$ is upper-bounded by $w \leq c_{i}\tilde{d}_i$. Meanwhile, since the distance of the mid-cycle code $C$ is $d$, we have $d\leq w$. This proves the proposition.
\end{proof}

In the most general case, one can use light-cone arguments to bound $c_{i}\leq 2^{t}$, where $t$ is the depth of the contraction circuit $F_{i}$. Therefore, if the contraction circuit is constant depth across some family of LDPC codes, we have $\tilde{d}_{i}=\Omega(d)$. For a specific contraction circuit, a tighter lower bound can readily be achieved by evaluating $c_{i}$. For example, for the circuits presented in the main text~\cite{Shaw24Lowering}, $c_{i}=3$ so our lower-bound is $\tilde{d}_{i}\geq d/3$. Note that all our end-cycle codes in Table I in the main text~\cite{Shaw24Lowering} outperform this lower bound, achieving at least $\tilde{d}_{i}\geq d/2$. It is an open question whether our distance bound can be improved.

% \begin{figure}[htb]
% \centering
% \includegraphics[width=\linewidth]{Morphing_overview.pdf}
% \includegraphics[width=0.3\textwidth]{surface-hex.jpeg}
% \caption{(Top) M (in blue) denotes measurement and repreparation of ancilla qubits. The schedule $ M_1 F_2^{-1} F_1 R_2 M_2 F_2 F_1^{-1}R_1$ using $2 \times \# {\rm CNOT}(F_1)+2\times \#{\rm CNOT}(F_2)$ represents 2 QEC cycles of the code $\tilde{C}_1$. Here $R_i$ is the reset of ancilla qubits. (Bottom) The codes $\tilde{C}_i$ and $C$ of the hex-grid circuits in \cite{McEwen23} where each qubit interacts with 3 other qubits (red $X$-checks, blue $Z$-checks). CNOT gates on the picture generate that picture from the previously-depicted stabilizer (bad convention!).}
% \label{fig:overview-old}
% \end{figure}

\section{A Morphing Protocol for Two-Block Group Algebra Codes}
\label{sec:general_2BGA}
Having described the general properties of the morphing protocol in \cref{sec:general_description}, we now provide a concrete construction that applies to arbitrary, not-necessarily-Abelian 2BGA codes \cite{Lin23}. Moreover, in this appendix, we will provide a different approach to constructing the morphing circuits than given in the main text. First, this more systematic approach will give a better sense of the general approach that we used to construct the circuits themselves. For those simply wishing to find a justification of the validity of the circuits presented in the main text, we refer the reader instead to \cref{cor:validity} in \cref{sec:logical_operators}. Second, it will allow us to consider 2BGA codes that do not necessarily satisfy the homomorphism criterion Crit.~1 from the main text~\cite{Shaw24Lowering}. We will later show how satisfying Crit.~1 is sufficient to guarantee the existence of a morphing protocol with $I=2$.

2BGA codes are defined by an arbitrary group $G$ and two sets of group elements $A=\{a_{i}\}$, $B=\{b_{i}\}$. We use the same notation as in the main text, with qubits labeled $q(L,g)$ and $q(R,g)$ for left and right qubits respectively, and $X(P,Q)$ and $Z(P,Q)$ for an $X$- or $Z$-operator with support on the left qubits $P\subseteq G$ and the right qubits $Q\subseteq G$. Then, the stabilizer generators, of weight $w=|A|+|B|$, are given by $s(X,g)=X(Ag,gB)$ and $s(Z,g)=Z(gB^{-1},A^{-1}g)$ for $g\in G$. Here, the inverse notation is interpreted element-wise, i.e.~$A^{-1}=\{a_{i}^{-1}\}$.

It is possible without loss of generality to assume that $a_{1}=b_{1}=1$, however in these appendices we will not make this assumption. This is largely for convenience --- for example, the labeling of the $a_{1}$ and $b_{1}$ elements differ depending on the homomorphism chosen, even for the same code. Therefore making the assumption $a_{1}=b_{1}=1$ can be inconvenient because different relabellings need to be applied for each application of the homomorphism.

Before continuing, it is worth briefly understanding the structure of the stabilizers of a 2BGA code, summarized in the ``Overlapping stabilizers'' columns of \cref{fig:general_contracting_stabilizers}. Consider an $X$-stabilizer $s(X,g)$, with support on the left qubits $q(L,a_{j}g)$ and the right qubits $q(R,gb_{j})$. The stabilizer $s(X,g)$ will overlap with another $X$-stabilizer $s(X,g')$ on a left qubit $q(L,a_{j}g)$ iff $g=a_{j}^{-1}a_{k}^{\vphantom{-1}}g'$ for some $k$, and on the right qubit $q(R,gb_{j})$ iff $g=g'b_{k}^{\vphantom{-1}}b_{j}^{-1}$ for some $k$. Meanwhile, the stabilizer $s(X,g)$ will overlap with a $Z$-stabilizer $s(Z,g')$ on a left qubit $q(L,a_{j}g)$ iff $g=a_{j}^{-1}g'b_{k}^{-1}$, in which case they necessarily also overlap on the right qubit $q(R,gb_{k})$. A single $X$-stabilizer can therefore overlap with no more than $|A|(|A|-1)+|B|(|B|-1)$ other $X$-stabilizers, and no more than $|A|\cdot|B|$ $Z$-stabilizers. Similar equations can be derived for a $Z$ stabilizer $s(Z,g)$.

\begin{figure*}
    \centering
    \includegraphics[width=\linewidth]{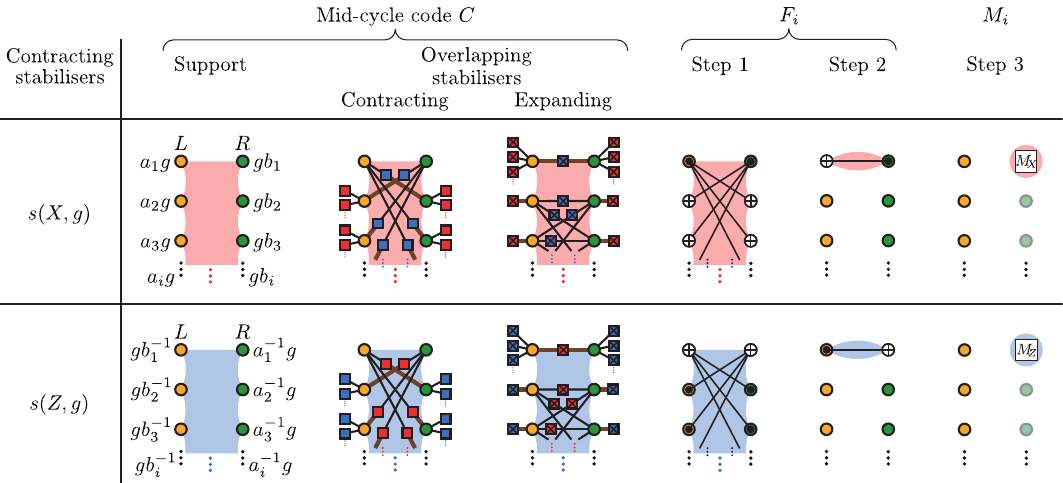}
    \caption{Contraction circuits and measurements for general 2BGA codes. On the left, the mid-cycle support of an arbitrary contracting $X$-stabilizer $s(X,g)$ and $Z$-stabilizer $s(Z,g)$. In the two columns labeled ``Overlapping stabilizers'' we show all the other stabilizers that have overlapping support with the contracting stabilizer. The overlapping stabilizers are shown in the style of a Tanner graph by red and blue boxes for $X$- and $Z$-stabilizers respectively and edges connecting them to the qubits they have support on. Note that it is possible that some of these overlapping stabilizers coincide if the 2BGA code has a special structure. For convenience, we format edges that represent group multiplication by either $a_{1}$ or $b_{1}$ to be brown and bold. The overlapping stabilizers in the expanding column \textit{must} be expanding due to Crit.~\ref{crit:X_criterion}-\ref{crit:Z_criterion}. The overlapping stabilizers in the contracting column do not necessarily need to be contracting, but are under Crit.~\ref{crit:homomorphism_appendix}. On the right, the contraction circuit $F_{i}$ and measurement $M_{i}$, along with the support of the contracting stabilizer at each step before the gates/measurements are applied. When executed just for a single stabilizer, steps 1 and 2 correspond to the local contraction circuits $F(X,g)$ and $F(Z,g)$. Simultaneously executing multiple local contraction circuits for multiple contracting stabilizers gives the global contraction circuit $F_{i}$.}
    \label{fig:general_contracting_stabilizers}
\end{figure*}

We now describe the morphing construction for 2BGA codes, shown in \cref{fig:general_contracting_stabilizers}. We proceed by first describing a local contraction circuit $F(P,g)$ for a \textit{single} stabilizer $s(X,g)$ or $s(Z,g)$, and then considering the conditions under which these circuits can be executed simultaneously in a single, global contraction circuit $F_{i}$. We split the local circuit into two steps, each of which contains a set of commuting CNOT gates. We call these ``steps'' instead of ``rounds'' as in Table II of the main text~\cite{Shaw24Lowering} because each step may contain gates that are not simultaneously executable.

The circuit $F(X,g)$ proceeds as follows for $j=2,\dots,|A|$ and $k=2,\dots,|B|$:
\begin{itemize}
    \item \textit{Step 1:} 
     CNOT$\big(q(L,a_{1}g),q(R,gb_{k})\big)$, and 
    CNOT$\big(q(R,gb_{1}),q(L,a_{j}g)\big)$. This step has depth $\max(|A|,|B|)-1$.
    \item \textit{Step 2:} CNOT$\big(q(R,gb_{1}),q(L,a_{1}g)\big)$.
\end{itemize}
After completing $F(X,g)$ the stabilizer $s(X,g)$ has been contracted to the qubit $q(R,gb_{1})$ which can then be subsequently measured and reset. Meanwhile, for $F(Z,g)$ we have:
\begin{itemize}
    \item \textit{Step 1:} CNOT$\big(q(R,a_{j}^{-1}g),q(L,gb_{1}^{-1})\big)$, and CNOT$\big(q(L,gb_{k}^{-1}),q(R,a_{1}^{-1}g)\big)$,
    \item \textit{Step 2:} CNOT$\big(q(L,gb_{1}^{-1}),q(R,a_{1}^{-1}g)\big)$.
\end{itemize}
After $F(Z,g)$, the stabilizer $s(Z,g)$ has been contracted to the qubit $R(a_{1}^{-1}g)$.

Now, we want to construct a global contraction circuit $F_{i}$ that applies the local circuits $F(X,g)$ for $g\in G_{X,i}$ and $F(Z,g)$ for $g\in G_{Z,i}$ in parallel, for some subsets $G_{X,i},G_{Z,i}\subset G$. Throughout the following section we will only provide \textit{sufficient} conditions for the subsets $G_{X,i}$, $G_{Z,i}$ to define a valid global contraction circuit, and other possible combinations of stabilizer generators could be possible.

In order for $F_{i}$ to be a valid contraction circuit, we require that any contracting stabilizers should \textit{not} be inadvertently ``expanded'' by adjacent local contraction circuits. Take, for example, a contracting $X$-stabilizer $s(X,g)$ and its local contraction circuit $F(X,g)$. In Step 1, CNOT gates are performed with controls on the qubits $q(L,a_{1}g)$ and $q(R,gb_{1})$. These CNOT gates affect, for example, any other $X$-stabilizers $s(X,g')$ with support on these qubits, i.e. the support of this stabilizer could spread to other qubits originally in $s(X,g)$. However, this is not a problem if none of these overlapping $X$-stabilizers are contracting. From our earlier characterization of the $X$-stabilizers, this gives a sufficient criterion for the contracting circuits $F(X,g)$ not to interfere with each other:
\begin{appendixcriterion}[$X$-criterion]\label{crit:X_criterion}
    $$g\in G_{X,i}\Rightarrow a_{j}^{-1}a_{1}^{\vphantom{-1}}g\notin G_{X,i}\text{ and }gb_{1}^{\vphantom{-1}}b_{j}^{-1}\notin G_{X,i},$$
    for all $i=1,\dots,I$ and for all $j\neq 1$.
\end{appendixcriterion}
This criterion is represented visually in \cref{fig:general_contracting_stabilizers}. Any overlapping $X$-stabilizers that cannot be contracting due to Crit.~\ref{crit:X_criterion} are marked with a cross (standing for eXpanding) and placed in the ``Expanding'' column. The remaining $X$-stabilizers may or may not be contracting.

The CNOT in step 2 is guaranteed not to interfere in the contraction of any other $X$-stabilizers, because at this point the support of each contracting $X$-stabilizer is simply $X(a_{1}g,gb_{1})$, which is non-overlapping for different values of $g$. Therefore Crit.~\ref{crit:X_criterion} is sufficient to ensure that the application of $F(X,g)$ for $g \in G_{X,i}$ indeed simultaneously contracts these $X$-stabilizers $s(X,g)$ for $g \in G_{X,i}$.

It remains to check whether the circuit $F(X,g)$ interferes with the contraction of a $Z$-stabilizer $s(Z,g')$. If the $Z$-stabilizer overlaps with the target of a CNOT in $F(X,g)$ but \textit{not} the control, then its support will spread and that $Z$-stabilizer should not be contracting. This can be readily determined by inspection of \cref{fig:general_contracting_stabilizers}.%There is a $Z$-stabilizer that overlaps with $s(X,g)$ on every pair of qubits $q(L,a_{i}g)$, $q(R,gb_{j})$. We begin by analyzing the effect of step 1. For $j=k=1$ the situation is simple because it only overlaps with the targets of CNOTs in step 1. If $j=1$ and $k\neq 1$, then the gate in question is CNOT$\big(q(L,a_{1}g),q(R,gb_{k})\big)$. However, this coincides with the gate CNOT$\big(q(L,g'b_{k}^{-1}),q(R,a_{1}^{-1}g')\big)$ in step 1 of $F(Z,g')$, and therefore these two stabilizers could still be contracted simultaneously. This is also the case for $j\neq1$ and $k=1$. 
We find that this is the case for the $Z$-stabilizers $s(Z,a_{j}gb_{k})$ for $j,k\neq 1$ that overlap with $s(X,g)$ on the pair of qubits $q(L,a_{j}g)$ and $q(R,gb_{k})$. However, we have an additional condition that arises from step 2. The only possible overlapping contracted $Z$-stabilizer is $s(Z,a_{1}gb_{1})$, in which case both the contracting $X$- and $Z$-stabilizers have support on the qubits $q(L,a_{1}g)$ and $q(R,gb_{1})$. However, the direction of the CNOTs in step 2 of $F(X,g)$ and $F(Z,g)$ are in the opposite directions, and therefore we cannot simultaneously execute the circuits~\footnote{Of course, it would be fairly simple to reverse the direction of the CNOT in Step 2 of, say, $F(Z,g)$ to avoid this problem. However in our constructions, this modification doesn't provide any benefits, so for simplicity we do not consider it here.}. Adding this, we arrive at the condition
\begin{appendixcriterion}[$XZ$-criterion]\label{crit:XZ_criterion}
    $$g\in G_{X,i}\Rightarrow a_{j}g b_{k}\notin G_{Z,i},$$
    for all $i=1,\dots,I$ and for all $j,k$ where either $j=k=1$ or $j,k\neq1$.
\end{appendixcriterion}

Identical arguments for contracting $Z$-stabilizers with respect to overlapping stabilizers recover the $XZ$-criterion and also give rise to a third criterion:
\begin{appendixcriterion}[$Z$-criterion]\label{crit:Z_criterion}
    $$g\in G_{Z,i}\Rightarrow a_{j}^{\vphantom{-1}}a_{1}^{-1}g\notin G_{Z,i}\text{ and }gb_{1}^{-1}b_{j}^{\vphantom{-1}}\notin G_{Z,i},$$
    for all $i=1,\dots,I$ and for all $j\neq 1$.
\end{appendixcriterion}

As long as the subsets $g\in G_{X,i}$ and $F(Z,g)$, $g\in G_{Z,i}$ satisfy these three Criteria~\ref{crit:X_criterion}-\ref{crit:Z_criterion}, we can ensure that the local contraction circuits $F(X,g)$ do not interfere with each other and therefore can be simultaneously executed in a single global contraction circuit $F_{i}$. 

The problem of designing the morphing circuits has now been reduced to finding contracting subsets $G_{X,i}$ and $G_{Z,i}$ that satisfy Criteria~\ref{crit:X_criterion}-\ref{crit:Z_criterion}. At this point, there is no restriction on $I$; indeed, we can recover one of the extreme cases from \cref{sec:general_description} if we set $I$ to be the size of the set of stabilizer generators $|S_{i}|$ and contract only one stabilizer at a time. However, one way to ensure the existence of a morphing protocol with $I=2$ is via the following homomorphism criterion:
\begin{appendixcriterion}[Homomorphism Criterion]\label{crit:homomorphism_appendix}
    There exists a group homomorphism $f:G\rightarrow\mathbb{Z}_{2}$ such that $f(a_{1})\neq f(a_{j})$ and $f(b_{1})\neq f(b_{j})$ for all $j\neq 1$.
\end{appendixcriterion}
Note that $f$ cannot be the trivial homomorphism $f(g)=1$. If such a homomorphism exists, then we can construct a pair of global contraction circuits $F_1$ and $F_2$ as follows.  Recall that the definition of a group homomorphism is that it ``respects'' group multiplication, i.e.~$f(g_{1}g_{2})=f(g_{1})f(g_{2})$. We write the \textit{kernel} of $f$ as $K=\{g\in G\mid f(g)=1\}$, and the complement of $K$ as $K^{c}=\{g\in G\mid f(g)=u\}$. Since $\mathbb{Z}_{2}$ is Abelian, note that one can also left- or right-multiply a coset by an element of $G$ to obtain another coset, for example, $gK=Kg=\{g'\in G\mid f(g')=f(g)\}$. Note moreover that because $u=u^{-1}$, we have $f(g^{-1})=f(g)^{-1}=f(g)$  and therefore $gK=g^{-1}K$ for all $g \in G$.

With this notation, we define the contracting subsets as
\begin{subequations}\label{eq:contracting_subsets}
\begin{align}
    G_{X,1}&=a_{1}K,&G_{Z,1}&=b_{1}K^{c},\\
    G_{X,2}&=a_{1}K^{c},&G_{Z,2}&=b_{1}K.
\end{align}
\end{subequations}
We can show that these subsets satisfy Criterion~\ref{crit:X_criterion} using
\begin{subequations}
\begin{align}
    g\in G_{X,i}&\Rightarrow f(g)=f(a_{1})u^{i-1}\\
    &\Rightarrow f(a_{j}^{-1}a_{1}g)\neq f(a_{1})u^{i-1}\\
    &\Rightarrow a_{j}^{-1}a_{1}^{\vphantom{-1}}g\notin G_{X,i}.
\end{align}
\end{subequations}
Similar arguments show that $gb_{1}^{\vphantom{-1}}b_{j}^{-1}\notin G_{X,i}$, and it is likewise straightforward to show that these subsets satisfy Crit.~\ref{crit:Z_criterion}. To show that Crit.~\ref{crit:XZ_criterion} is satisfied, note that if $j=k=1$ then
\begin{multline}
    g\in G_{X,i}\Rightarrow f(a_{1}gb_{1})=f(b_{1})u^{i-1}\neq f(b_{1})u^{i},
\end{multline}
and therefore $a_{1}gb_{1}\notin G_{Z,i}$.
Meanwhile if $j,k\neq 1$, then $f(a_{j})=u\cdot f(a_{1})$ and $f(b_{k})=u\cdot f(b_{1})$, so
\begin{multline}
    g\in G_{X,i}\Rightarrow \notag \\
    f(a_{j}gb_{k})=f(a_{1})f(a_{j})f(b_{k})u^{i-1}=f(b_{1})u^{i-1},
\end{multline}
and again $a_{j}gb_{k}\notin G_{Z,i}$.

\begin{table}
    \caption{Definition of the contracting circuits $F_{i}$ and measurements $M_{i}$ for a 2BGA code that satisfies Crit.~\ref{crit:homomorphism_appendix}. The circuit $F_{1}$ is defined by applying the gates below for each $g\in K$, $h\in K^{c}$, $j=2,\dots,|A|$ and $k=2,\dots,|B|$, while for $F_{2}$ one chooses $g\in K^{c}$ and $h\in K$ (and the same ranges for $j$ and $k$). The circuits consist of three steps --- named as such because the CNOTs in each step commute but are not necessarily simultaneously executable.}\label{tab:general_contraction_circuits}
    \renewcommand{\arraystretch}{1.35}
    \setlength{\tabcolsep}{6pt}
    \begin{tabular}{|c|c|}
    \hline
        \multirow{2}{*}{Step 1}& CNOT$\big(q(L,g),q(R,a_{1}^{-1}gb_{k}^{\vphantom{-1}})\big),$\\
        & CNOT$\big(q(R,a_{j}^{-1}hb_{1}^{\vphantom{-1}}),q(L,h)\big)$\\\hline
        \multirow{2}{*}{Step 2}& CNOT$\big(q(R,a_{1}^{-1}gb_{1}^{\vphantom{-1}}),q(L,g)\big),$\\
        &CNOT$\big(q(L,h),q(R,a_{1}^{-1}hb_{1}^{\vphantom{-1}})\big)$\\\hline
        \multirow{2}{*}{Step 3}& $M_{X}\big(q(R,a_{1}^{-1}gb_{1}^{\vphantom{-1}})\big)$,\\
        &$M_{Z}\big(q(R,a_{1}^{-1}hb_{1}^{\vphantom{-1}})\big)$\\\hline
    \end{tabular}
\end{table}

These calculations prove that any 2BGA code with a homomorphism $f$ that satisfies Crit.~\ref{crit:homomorphism_appendix} --- which is given as Crit.~1 in the main text~\cite{Shaw24Lowering} for weight-6 codes --- admits a morphing parity check protocol with $I=2$ and the global contraction circuits shown in \cref{tab:general_contraction_circuits}. This is convenient since Crit.~\ref{crit:homomorphism_appendix} can be very quickly checked, usually by inspection. Indeed, any group homomorphism $f$ is uniquely specified by how it transforms the generators $\{g_{j}\}$ of the group $G$. For each generator of odd order, i.e.~satisfying $g_{j}^{n}=1$ for some odd integer $n$, we must have $f(g_{j})=1$ for $f$ to be a homomorphism. For each generator that is not odd, we can choose to map it to either $u$ or $1$ under the homomorphism. This gives a relatively limited number of possible homomorphisms $f$ to check; for groups with two generators such as $\mathbb{Z}_{\ell}\times\mathbb{Z}_{m}$ that define the BB codes, there are at most three such choices (once the trivial map has been excluded), as listed in Eq.~(1) of the main text~\cite{Shaw24Lowering}.

It is straightforward to verify Crit.~\ref{crit:homomorphism_appendix} applies for each homomorphism listed in Table 1 of the main text~\cite{Shaw24Lowering}. To demonstrate the generality of this Crit.~\ref{crit:homomorphism_appendix}, we also found two non-Abelian 2BGA codes that satisfy Crit.~\ref{crit:homomorphism_appendix} from Table I of Ref.~\cite{Lin23}. These are:
\begin{itemize}
    \item the [[96,8,12]] code defined by $G=\langle x,y| x^{16},y^3,x^{-1} yxy\rangle$ with $A=\{x,1,y,x^{14}\}$ and $B=\{x^{11},1,x^{2},x^{4}y\}$, under the $f_{x}$ homomorphism, and
    \item the [[96,12,10]] code defined by $G=\langle x,y| x^{8},y^{6},x^{-1}yxy\rangle$ with $A=\{1,x,x^{2}y^{3},x^{3}y^{2}\}$ and $B=\{x,1,x^{6}y^{4},x^{3}y^{5}\}$, under the $f_{xy}$ homomorphism.
\end{itemize}
We leave it to future research to investigate these codes in more detail.

The connectivity graph of the morphing protocol can be found by listing the CNOT gates in steps 1 and 2 for all $g\in G$. This is listed in \cref{tab:general_contraction_circuits}, although the results are the same no matter the choices of $G_{X,i}$ and $G_{Z,i}$. In particular, the CNOTs that need to be performed are CNOT$\big(q(L,g),a(R,a_{j}^{-1}gb_{k}^{\vphantom{-1}})\big)$ where $j=1$ or $k=1$. Therefore, the connectivity graph is \textit{bipartite} between the left and right qubits, and has degree at most $|A|+|B|-1=w-1$, where $w=|A|+|B|$ is the weight of each stabilizer. This is one fewer than the degree of the connectivity graph of a standard parity-check circuit for a code with weight $w$.

\begin{figure*}
    \centering
    \includegraphics[width = \linewidth]{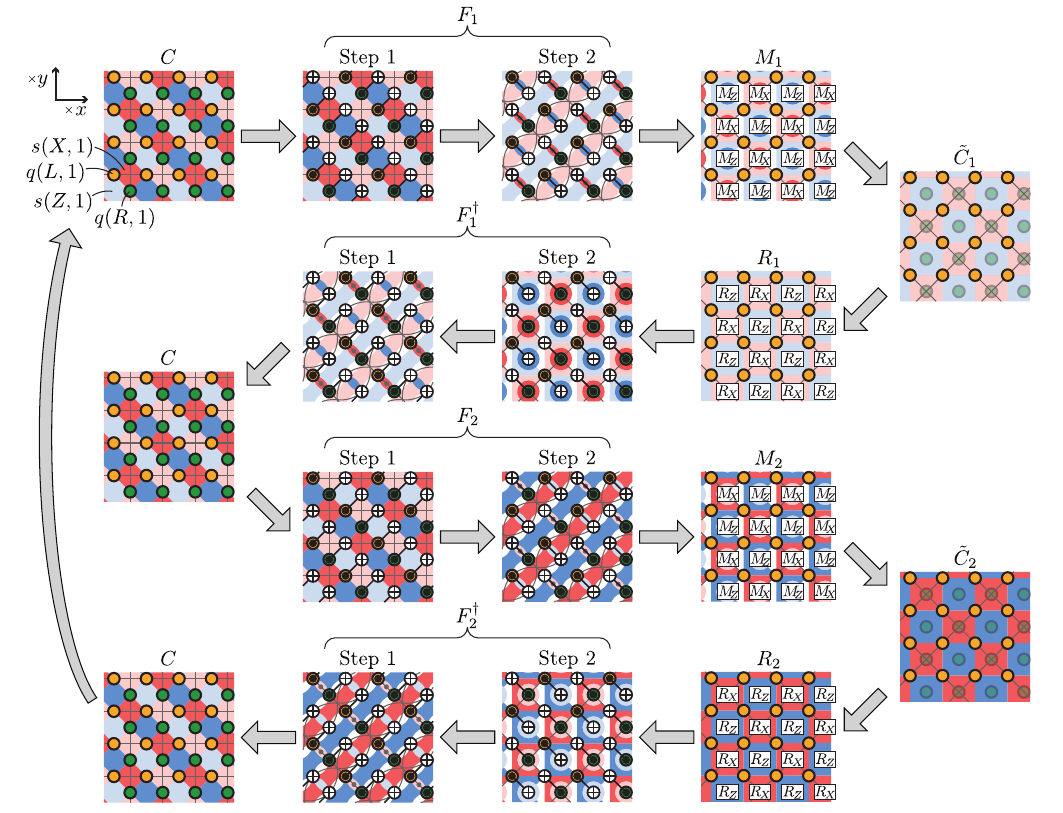}
    \caption{The morphing protocol applied to the $d=4$ toric code, described as an (Abelian) 2BGA by $G=\mathbb{Z}_{d}\times\mathbb{Z}_{d}$, $A=\{1,x\}$ and $B=\{1,y\}$. Qubits are coloured yeLlow (Left) or gReen (Right), with a group label determined by their horizontal and vertical position in the periodic lattice. stabilizers are coloured red ($X$) or blue ($Z$). Moreover, the stabilizers are opaque if they are contracted in the first contraction circuit $F_{1}$ --- corresponding to even $X$-stabilizers and odd $Z$-stabilizers --- and partially transparent if they are expanding in $F_{1}$. The grey grid denotes the toric code lattice representation of the code, with $X$-stabilizers on vertices, $Z$-stabilizers on faces, and qubits on edges. In practice, one starts by preparing an end-cycle codestate $\tilde{C}_{1}$ or $\tilde{C}_{2}$ on the right, and performs error correction by following the grey arrows. At each time step, the stabilizers and toric code lattice represent the code immediately before the execution of the gates/measurements/resets displayed in that step.}\label{fig:surface_code_example}
\end{figure*}

We provide an illustrative example of our morphing protocol applied to the $d=4$ toric code in \cref{fig:surface_code_example}. In this case, the morphing construction reduces to the ``hex-grid'' surface code protocol from Ref.~\cite{McEwen23} but with periodic boundary conditions. The toric code is an Abelian 2BGA code with $G=\mathbb{Z}_{d}\times\mathbb{Z}_{d}$, $A=\{1,x\}$ and $B=\{1,y\}$, and therefore satisfies Crit.~\ref{crit:homomorphism_appendix} with the $f_{xy}$ homomorphism defined by $f_{xy}(x)=f_{xy}(y)=u$. One can therefore run the morphing protocol with the contracting stabilizers corresponding to the cosets $G_{X,1}=K=G_{Z,2}$ and $G_{X,2}=K^{c}=G_{Z,1}$. Note that here, step 1 only takes one round of CNOT gates since the weight of the stabilizers is only four. In this example, the end-cycle codes $\tilde{C}_{i}$ are both $d=4$ \textit{rotated} toric codes encoded on the left-qubits of $C$. However, the two end-cycle codes are not the same: the stabilizers of $\tilde{C}_{1}$ can be mapped to those of $\tilde{C}_{2}$ by multiplying each of their group labels by $x$ (or, indeed, any element in $K^c$). The required connectivity of each qubit is only $w-1=3$ and therefore this can be implemented on a hexagonal grid layout.

Finally, it is worth commenting that each of the definitions and criteria above is just one choice of morphing parity check protocol that works for some 2BGA codes. However, there are many other possible contraction circuits that could work; we leave investigating these possibilities to future research.

\section{Weight-6 Bivariate Bicycle Codes}

In this appendix, we provide more details and calculations pertaining to the morphing parity check protocol in Table 1 of the main text~\cite{Shaw24Lowering} for BB codes satisfying Criterion~1. That is, compared to \cref{sec:general_2BGA}, we will now assume that the group is Abelian and bicyclic, i.e.~$G=\mathbb{Z}_{\ell}\times\mathbb{Z}_{m}$.

We begin by providing a direct proof that the contracting circuit $F_{i}$ does indeed contract the stabilizers in $S_{i}$. Then, we provide a full description of the end-cycle codes $\tilde{C}_{i}$, including their stabilizers as well as their logical operators, and how to rewrite them explicitly as Abelian 2BGA codes.

\subsection{Transformation of Operators Under the Contraction Circuits}\label{sec:logical_operators}

We begin by deriving a few helpful formulae for how Pauli operators are transformed by the contraction circuits $F_{i}$. Just as in the main text, we will use the notation $X(P,Q)$ to denote an $X$-operator with support on some subset $P\subseteq G$ of left qubits and on $Q\subseteq G$ right qubits. Note that the set of subsets of $G$ is isomorphic to the group algebra $\mathbb{Z}_{2}[G]$. Elements of the group algebra can therefore be multiplied by a group element (with action defined by group multiplication), added to each other (with action defined by the group algebra over $\mathbb{Z}_{2}$), or intersected with each other (as sets). Moreover, for a subset $H=\{g_{1},g_{2},\dots\}\subseteq G$ we will write $H^{-1}\equiv \{g_{1}^{-1},g_{2}^{-1},\dots\}$; although note that in the canonical matrix representation of the group algebra, this operation corresponds to taking the transpose of the matrix $H$ and not the inverse.

\begin{table}
    \caption{Definition of the contracting circuits $F_{i}$ and measurements $M_{i}$, assuming that Crit. 1 of the main text~\cite{Shaw24Lowering} is satisfied. The circuit $F_{1}$ is defined by applying the gates below for each $g\in K$ and $h\in K^{c}$, while for $F_{2}$ one chooses $g\in K^{c}$ and $h\in K$. This is identical to Table II of the main text~\cite{Shaw24Lowering} but without the assumption that $a_{1}=b_{1}=1$.}\label{tab:contraction_circuits_appendix}
    \renewcommand{\arraystretch}{1.35}
    \setlength{\tabcolsep}{6pt}
    \begin{tabular}{|c|c|}
    \hline
        \multirow{2}{*}{Round 1}& CNOT$\big(q(L,g),q(R,a_{1}^{-1}b_{3}^{\vphantom{-1}}g)\big),$\\
        & CNOT$\big(q(R,a_{3}^{-1}b_{1}^{\vphantom{-1}}h),q(L,h)\big)$\\\hline
        \multirow{2}{*}{Round 2}& CNOT$\big(q(L,g),q(R,a_{1}^{-1}b_{2}^{\vphantom{-1}}g)\big),$\\
        & CNOT$\big(q(R,a_{2}^{-1}b_{1}^{\vphantom{-1}}h),q(L,h)\big)$\\
        \hline
        \multirow{2}{*}{Round 3}& CNOT$\big(q(R,a_{1}^{-1}b_{1}^{\vphantom{-1}}g),q(L,g)\big),$\\
        &CNOT$\big(q(L,h),q(R,a_{1}^{-1}b_{1}^{\vphantom{-1}}h)\big)$\\\hline
        \multirow{2}{*}{Round 4}& $M_{X}\big(q(R,a_{1}^{-1}b_{1}^{\vphantom{-1}}g)\big)$,\\
        &$M_{Z}\big(q(R,a_{1}^{-1}b_{1}^{\vphantom{-1}}h)\big)$\\\hline
    \end{tabular}
\end{table}

We will write $A=\{a_{1},a_{2},a_{3}\}\equiv a_{1}+a_{2}+a_{3}$ and $B=\{b_{1},b_{2},b_{3}\}\equiv b_{1}+b_{2}+b_{3}$, so that the mid-cycle stabilizers are $X(Ag,Bg)$ and $Z(B^{-1}g,A^{-1}g)$ for $g\in G$. Again, in the appendix we do not assume $a_{1}=b_{1}=1$; the required contraction circuits in this case are given in \cref{tab:contraction_circuits_appendix}. Throughout this appendix we will consider operators that are in the centralizer of the stabilizer group; that is, they are possibly trivial logical operators. It was shown in Ref.~\cite{Bravyi24} that any mid-cycle (possibly trivial) logical operator $X(P,Q)$ must satisfy
\begin{subequations}\label{eq:mid-cycle_logical_condition}
\begin{equation}
    BP+AQ=0,\label{eq:Xcond}
\end{equation}
and any mid-cycle logical operator $Z(P,Q)$ must satisfy
\begin{equation}
    A^{-1}P+B^{-1}Q=0.
    \label{eq:Zcond}
\end{equation}
\end{subequations}

The end-cycle codes $\tilde{C}_{1}$ and $\tilde{C}_{2}$ are defined implicitly by the circuits in \cref{tab:contraction_circuits_appendix}. Since the measurements $M_{1}$ and $M_{2}$ occur on all of the right qubits of the mid-cycle code, the end-cycle codes are supported only on the left qubits. Moreover, the contraction circuit $F_{1}$ and measurements $M_{1}$ are defined in \cref{tab:contraction_circuits_appendix} by setting $g\in K$ and $h\in K^{c}$, while $F_{2}$ and $M_{2}$ are defined by $g\in K^{c}$ and $h\in K$. Therefore, in this section every expression for $\tilde{C}_{1}$ can be turned into an expression for $\tilde{C}_{2}$ simply by interchanging the cosets $K\leftrightarrow K^{c}$. Equivalently, the stabilizer group of $\tilde{C}_{1}$ can be turned into that of $\tilde{C}_{2}$ by multiplying each qubit label by any element $r\in K^{c}$, since this multiplication interchanges the cosets.

We remind the reader of a few properties of the cosets $K=\{g\in G\mid f(g)=1\}$ and $K^{c}=\{g\in G\mid f(g)=u\}$ that we will use throughout the section.
\begin{itemize}
    \item For any $g\in G$, we have $gK=\{h\in G\mid f(h)=f(g)\}$.
    \item For any $k\in K$, the cosets $xK$ and $kxK$ are equal; this is because $f(kx)=f(k)f(x)=f(x)$.
    \item For any $g\in G$, the cosets $gK$ and $g^{-1}K$ are equal; this is because $f(g^{-1})=f(g)^{-1}=f(g)$ since $f(g)\in\mathbb{Z}_{2}$.
    \item From Crit.~1 of the main text~\cite{Shaw24Lowering}, $a_{1}a_{2}K=a_{1}a_{3}K=b_{1}b_{2}K=b_{1}b_{3}K=K^{c}$, while $a_{2}a_{3}K=b_{2}b_{3}K=K$.
\end{itemize}
Throughout this section we will frequently refer to the cosets $a_{1}b_{1}K$ and $a_{1}b_{1}K^{c}$. Note, however, that there are only two distinct cosets $K$ and $K^{c}$, so we either have $a_{1}b_{1}K=K$ or $a_{1}b_{1}K=K^{c}$. Moreover, in the main text, we have $a_{1}=b_{1}=1$ and these additional factors can be dropped. They are retained here merely for convenience when considering the same BB code under multiple different homomorphisms with different $a_{1}$ and $b_{1}$ elements.

\begin{proposition}\label{prop:mid_to_end_propagation}
    Let $X(P,Q)$ be a (possibly trivial) logical $X$-operator of the mid-cycle code $C$. After the contraction circuit $F_{1}$, the operator has partial support on the to-be-measured right-qubits $Q\cap(a_{1}b_{1}K)$. After the measurements $M_1$, the full transformation by $M_1 \circ F_1$ is
\begin{subequations}\label{eq:mid_to_end_propagation}
    \begin{equation}\label{eq:mid_to_end_propagation_X}
    X(P,Q)\mapsto X\big(P+b_{1}^{-1}A(Q\cap a_{1}b_{1}K),0\big),
    \end{equation}
    which is then a (possibly trivial) logical operator of the code $\tilde{C}_{1}$. Meanwhile, a logical $Z$-operator $Z(P,Q)$, is involved in the right-qubit measurements $Q\cap(a_{1}b_{1}K^{c})$, and the full transformation by $M_1 \circ F_1$ is
    \begin{equation}
    Z(P,Q)\mapsto Z\big(P+a_{1}B^{-1}(Q\cap a_{1}b_{1}K^{c}),0\big).
    \end{equation}
    \end{subequations}
    Under $M_{2}\circ F_{2}$, the mappings are the same but with $K\leftrightarrow K^{c}$.
\end{proposition}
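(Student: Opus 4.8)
The plan is to track the support of a mid-cycle (possibly trivial) logical Pauli operator step by step through the three rounds of CNOTs in \cref{tab:contraction_circuits_appendix} and then through the measurement layer $M_1$. I would focus on the $X$-operator case first; the $Z$-operator case then follows by the $ZX$-duality of the construction (or by an entirely analogous computation with the roles of $A,B$ and left/right qubits interchanged). Recall the basic rule for propagating Pauli operators through a CNOT: $X$ on the control copies to $X$ on the target, while $X$ on the target is unaffected; dually for $Z$. Since the contraction circuits are built entirely from CNOTs, each round acts linearly (over $\mathbb{Z}_2$) on the symplectic support vector, so this bookkeeping is just group-algebra addition in $\mathbb{Z}_2[G]$.

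The key steps, in order: (i) Start with $X(P,Q)$ and apply Round 1. The CNOTs $\mathrm{CNOT}(q(L,g),q(R,a_1^{-1}b_3 g))$ with controls on the left qubits in $P$ add $X$-support on the right qubits $a_1^{-1}b_3(P\cap K)$ (if we are computing $F_1$), while the CNOTs with targets on left qubits do nothing to $X$-support there but the controls $q(R,a_3^{-1}b_1 h)$ copy existing right-qubit $X$-support onto left qubits. One must carefully intersect with the correct coset ($K$ or $K^c$) at each step, using the coset identities listed just before the proposition (e.g. $a_1 a_3 K=K^c$, $a_2 a_3 K = K$). (ii) Apply Round 2 similarly, generating further right-qubit support $a_1^{-1}b_2(P\cap K)$ and left-qubit support from the $a_2^{-1}b_1 h$ controls. (iii) Apply Round 3: the CNOTs $\mathrm{CNOT}(q(R,a_1^{-1}b_1 g),q(L,g))$ copy whatever $X$-support has accumulated on the right qubit $q(R,a_1^{-1}b_1 g)$ onto $q(L,g)$, and $\mathrm{CNOT}(q(L,h),q(R,a_1^{-1}b_1 h))$ copies left-qubit support onto the right. (iv) After $F_1$, read off the residual right-qubit support and check it equals $Q\cap(a_1 b_1 K)$ together with the newly generated terms — here is where \cref{eq:Xcond}, $BP+AQ=0$, is essential: it lets one cancel the spurious right-qubit terms generated in Rounds 1--2 against parts of the original $Q$, leaving support precisely on $Q\cap(a_1b_1 K)$. (v) Apply $M_1$ (measurement in $X$ on exactly those right qubits): measuring removes the right-qubit support, but since those qubits carried $X$ and are measured in the $X$-basis, the net effect on the remaining state is to leave a logical operator with the left-qubit support accumulated so far; simplify using \cref{eq:Xcond} again to collapse the left-qubit expression to $P + b_1^{-1}A(Q\cap a_1 b_1 K)$.

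The main obstacle I anticipate is the careful coset bookkeeping in steps (i)--(iv): because $F_1$ applies its two families of CNOTs only for $g\in K$ and $h\in K^c$, every time a term is copied from a left to a right qubit (or vice versa) one must intersect the current support with the appropriate coset, and these intersections interact non-trivially with the multiplications by $a_i^{-1}b_j$. Keeping the $X$- and $Z$-type contributions separate and correctly invoking $BP+AQ=0$ (resp. $A^{-1}P+B^{-1}Q=0$) at exactly the right moment to cancel unwanted terms — rather than at the end — is the delicate part; doing it too early or too late leaves uncancellable garbage. A secondary subtlety is verifying that the resulting operator genuinely lies in the centralizer of $\tilde C_1$ (i.e. is a valid possibly-trivial logical operator of the end-cycle code) rather than merely having support on the left qubits; this should follow because $F_1$ and $M_1$ map the centralizer of $C$ into the centralizer of $\tilde C_1$ by construction, but it is worth stating explicitly. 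Once the $X$-case is done, the $Z$-case is obtained by the substitutions $A\leftrightarrow B^{-1}$, left $\leftrightarrow$ right, $K\leftrightarrow K^c$ in the relevant places, and the $F_2$ statements follow from the global symmetry $K\leftrightarrow K^c$ noted in the text.
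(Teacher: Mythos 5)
Your proposal is correct and follows essentially the same route as the paper's proof: direct Pauli propagation of $X(P,Q)$ through the three CNOT rounds with coset bookkeeping, a single invocation of $BP+AQ=0$ to collapse the residual right-qubit support to $Q\cap(a_1b_1K)$ (which in particular shows the operator commutes with the $Z$-basis measurements), and then reading off the surviving left-qubit support after $M_1$; the $Z$-case is handled analogously. The only cosmetic difference is that the paper treats Rounds 1 and 2 together (they commute) and applies the cancellation once after Round 3 — which also absorbs the Round-3 contribution $a_1^{-1}b_1(P\cap K^c)$, not just the Rounds 1–2 terms you mention — so there is no delicate timing issue about when to invoke \cref{eq:Xcond}.
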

\begin{proof}
    We can consider the action of rounds 1 and 2 in \cref{tab:contraction_circuits_appendix} on a Pauli $X$ operator together since these rounds commute.  Any Pauli $X$ operator on a left qubit $q(L,g)$ with $g\in K$ will propagate to the right qubits $q(R,a_{1}^{-1}b_{3}^{\vphantom{-1}}g)$ and $q(R,a_{1}^{-1}b_{2}^{\vphantom{-1}}g)$, while a Pauli $X$ operator on $q(R,h)$ with $h\in a_{1}b_{1}K$ will spread to $q(L,a_{3}^{\vphantom{-1}}b_{1}^{-1}h)$ and $q(L,a_{2}^{\vphantom{-1}}b_{1}^{-1}h)$. In terms of the operator $X(P,Q)$, rounds 1 and 2 thus transform
    \begin{multline}
        X(P,Q)\mapsto X\Big(P+b_{1}^{-1}(a_{2}+a_{3})(Q\cap a_{1}b_{1} K),\\
        Q+a_{1}^{-1}(b_{2}+b_{3})(P\cap K)\Big).
    \end{multline}
    In round 3, a Pauli $X$ on $q(L,h)$ with $h\in K^{c}$ spreads to $q(R,a_{1}^{-1}b_{1}^{\vphantom{-1}}h)$, and a Pauli $X$ on $q(R,g)$ with $g\in a_{1}b_{1} K$ spreads to $q(L,a_{1}^{\vphantom{-1}}b_{1}^{-1}g)$. Noting that $b_{1}^{-1}a_{2}a_{1}b_{1} K=b_{1}^{-1}a_{3}a_{1}b_{1} K=K^{c}$ and $a_{1}^{-1}b_{2} K=a_{1}^{-1}b_{3} K=a_{1}b_{1}K^{c}$, after round 3 we have
    \begin{multline}
        X(P,Q)\mapsto X\Big(P+b_{1}^{-1}A\big(Q\cap a_{1}b_{1} K\big),\\
        Q+a_{1}^{-1}(b_{2}+b_{3})\big(P\cap K\big)\\
        +a_{1}^{-1}b_{1}^{\vphantom{-1}}\big(P\cap K^{c}\big)\\
        +a_{1}^{-1}(a_{2}+a_{3})\big(Q\cap a_{1}b_{1} K\big)\Big).
    \end{multline}
    We can simplify the support of the operator on the right qubits:
    \begin{subequations}\label{eq:XPQ_R3_RHS}
    \begin{align}
        &Q+a_{1}^{-1}(b_{2}+b_{3})\big(P\cap K\big)+a_{1}^{-1}b_{1}^{\vphantom{-1}}\big(P\cap K^{c}\big)\nonumber\\
        &\qquad+a_{1}^{-1}(a_{2}+a_{3})\big(Q\cap a_{1}b_{1} K\big)\label{eq:XPQ_R3_RHS_1}\\
        &=Q\cap\big(a_{1}b_{1} K\big)+\big(a_{1}^{-1}(BP+AQ)\big)\cap\big(a_{1}b_{1}K^{c}\big)\label{eq:XPQ_R3_RHS_2}\\
        &=Q\cap\big(a_{1}b_{1} K\big),\label{eq:XPQ_R3_RHS_3}
    \end{align}
    \end{subequations}
    where from \cref{eq:XPQ_R3_RHS_2} to \cref{eq:XPQ_R3_RHS_3} we have used \cref{eq:Xcond}.
    
    Finally in round 4 the right qubits $q(R,g)$ are measured, with the measurement being an $X$-measurement if $g\in a_{1}^{\vphantom{-1}}b_{1}^{-1} K=a_1 b_1  K$ and a $Z$-measurement if $g\in a_{1}^{\vphantom{-1}}b_{1}^{-1}K^{c}=a_1 b_1 K^{c}$. Note that \cref{eq:XPQ_R3_RHS} ensures that the logical operator does not anticommute with any of the $Z$-measurements. However, the transformed logical $X$-operator will overlap with the $X$-measurements that take place on the right qubits $q(R,g)$ with $g\in Q\cap(a_{1}b_{1} K)$. This implies that after the measurement the end-cycle support of the operator is simply reduced to
    \begin{equation}
        X\big(P+b_{1}^{-1}A(Q\cap a_{1}b_{1} K),0\big),
    \end{equation}
    as claimed. The transformation of the logical $Z$-operators can be proved similarly. Likewise, the expressions for the transformation to $\tilde{C}_{2}$ can be derived simply by interchanging $ K\leftrightarrow K^{c}$.
\end{proof}

\cref{prop:mid_to_end_propagation} provides a convenient way of proving that the contraction circuit $F_{1}$ indeed contracts every contracting stabilizer in $S_{1}$, as we now show.
\begin{corollary}[Validity of the Contraction Circuits]\label{cor:validity}
    The stabilizers $s(X,g)$ with $g\in a_{1}K$ and $s(Z,g')$ with $g'\in b_{1}K^{c}$ are contracted onto a single qubit by $F_{1}$ and measured by $M_{1}$.
\end{corollary}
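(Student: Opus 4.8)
The plan is to deduce \cref{cor:validity} directly from \cref{prop:mid_to_end_propagation}, applied to the stabilizers $s(X,g)$ and $s(Z,g')$ regarded as (trivial) logical operators of the mid-cycle code $C$. First I would check that these operators satisfy the centralizer conditions \cref{eq:mid-cycle_logical_condition}: for $s(X,g) = X(Ag,Bg)$ the left-hand side of \cref{eq:Xcond} is $B(Ag) + A(Bg) = 2ABg = 0$ in $\mathbb{Z}_2[G]$ since $G$ is Abelian, and \cref{eq:Zcond} holds analogously for $s(Z,g') = Z(B^{-1}g', A^{-1}g')$. Hence the transformation formulas of \cref{prop:mid_to_end_propagation} apply to both.

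The computational core is to evaluate, for $g \in a_1K$, the intersection $Bg \cap (a_1b_1K)$ appearing in \cref{eq:mid_to_end_propagation_X}, and dually $A^{-1}g' \cap (a_1b_1K^c)$ for $g' \in b_1K^c$. Using the coset identities already collected (notably $hK = h^{-1}K$ for all $h$, so $a_1^{-1}K = a_1K$) together with Criterion~1 in the form $f(a_j) \neq f(a_1)$ and $f(b_j) \neq f(b_1)$ for $j \neq 1$, I would show that for $g \in a_1K$ one has $b_1g \in a_1b_1K$ while $b_2g, b_3g \in a_1b_1K^c$, so that $Bg \cap (a_1b_1K) = \{b_1g\}$ is a single group element. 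The same reasoning applied to the $a_j$ gives $A^{-1}g' \cap (a_1b_1K^c) = \{a_1^{-1}g'\}$, again a singleton.

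With these intersections in hand, substitution into \cref{prop:mid_to_end_propagation} finishes the argument. For $s(X,g)$ the left-qubit support after $M_1 \circ F_1$ is $Ag + b_1^{-1}A(b_1g) = Ag + Ag = 0$, so $M_1 \circ F_1$ sends $s(X,g)$ to the identity; combined with the statement in \cref{prop:mid_to_end_propagation} that after $F_1$ alone the operator is supported only on the right qubits in $Bg \cap (a_1b_1K) = \{b_1g\}$ — and the fact that $M_1$ acts only on right qubits — this forces $s(X,g)$ to be supported on the single qubit $q(R,b_1g)$ after $F_1$, i.e.\ to have been contracted. Since $b_1g \in a_1b_1K$, that qubit is measured in the $X$-basis in Round~4 of \cref{tab:contraction_circuits_appendix}, revealing the eigenvalue of the contracted stabilizer. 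The case of $s(Z,g')$ is identical under $K \leftrightarrow K^c$ and $X \leftrightarrow Z$: it is contracted onto $q(R,a_1^{-1}g') \in a_1b_1K^c$ and then $Z$-measured.

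I expect the only real obstacle to be the coset bookkeeping in the second step: one must verify that Criterion~1 is precisely the hypothesis that pushes the two ``off-diagonal'' elements ($b_2g, b_3g$ for the $X$-case, $a_2^{-1}g', a_3^{-1}g'$ for the $Z$-case) into the complementary coset, so that the relevant intersection is a singleton rather than empty or of size three — if it were empty the stabilizer would simply map to itself rather than being contracted, and that distinction is exactly what the corollary asserts. Everything downstream of that point is direct substitution into \cref{prop:mid_to_end_propagation} together with a check against \cref{tab:contraction_circuits_appendix}.
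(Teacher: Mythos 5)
Your proposal is correct and follows essentially the same route as the paper: both deduce the corollary from \cref{prop:mid_to_end_propagation} by computing that $(Bg)\cap(a_1b_1K)=\{b_1g\}$ (and dually $(A^{-1}g')\cap(a_1b_1K^c)=\{a_1^{-1}g'\}$) and then substituting to obtain $X(Ag+Ag,0)=X(0,0)$. The extra detail you supply --- verifying the centralizer conditions and spelling out the coset bookkeeping that makes the intersection a singleton --- is exactly what the paper leaves implicit, and is sound.
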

\begin{proof}
    The $X$-stabilizer $s(X,g)$ is (by definition) a trivial logical operator of $C$ and has the form $X(Ag,Bg)$, i.e.~$P=Ag$ and $Q=Bg$. Note that if $g\in a_{1}K$, we have $(Bg)\cap(a_{1}b_{1}K)=b_{1}g$. Thus, from \cref{prop:mid_to_end_propagation}, the stabilizer is only involved in the measurement on the right qubit $b_{1}g$. Moreover, substituting into \cref{eq:mid_to_end_propagation_X} gives $X(Ag,Bg)\mapsto X(Ag+Ag,0)=X(0,0)$, as required. The same can be proven for the contracting $Z$-stabilizers.
\end{proof}

Using \cref{prop:mid_to_end_propagation}, we can also easily give the form of the stabilizers of the end-cycle code $\tilde{C}_{1}$:
\begin{corollary}[stabilizers of $\tilde{C}_{1}$]\label{cor:end-cycle_stabilizers}
    The end-cycle code $\tilde{C}_{1}$ is defined by the stabilizer generators $X(b_{1}^{-1}ABg,0)$ for $g\in a_{1}K^{c}$ and $Z(a_{1}A^{-1}B^{-1}g',0)$ for $g'\in b_{1}K$.
\end{corollary}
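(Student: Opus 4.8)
The plan is to take the expanding stabilizers of $C$ as a generating set for $\tilde{C}_1$ (as asserted in the main text and recalled just before \cref{cor:end-cycle_stabilizers}) and to push each of them through $M_1\circ F_1$ using \cref{prop:mid_to_end_propagation}. By \cref{cor:validity} the contracting $X$-stabilizers are exactly $s(X,g)$ with $g\in a_1K$ and the contracting $Z$-stabilizers are $s(Z,g')$ with $g'\in b_1K^{c}$; since $G=a_1K\sqcup a_1K^{c}=b_1K\sqcup b_1K^{c}$, the expanding generators are precisely $s(X,g)=X(Ag,Bg)$ for $g\in a_1K^{c}$ and $s(Z,g')=Z(B^{-1}g',A^{-1}g')$ for $g'\in b_1K$. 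It then suffices to evaluate the right-hand sides of \cref{eq:mid_to_end_propagation} on these two families.

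For the $X$-family the only nontrivial input is the intersection $(Bg)\cap(a_1b_1K)$. Using $f(g)=f(a_1)u$ for $g\in a_1K^{c}$ together with Crit.~\ref{crit:homomorphism_appendix} in the form $f(b_1)\neq f(b_2)=f(b_3)$, one checks that $f(b_1g)=f(a_1b_1)u$ while $f(b_2g)=f(b_3g)=f(a_1b_1)$, hence $(Bg)\cap(a_1b_1K)=(b_2+b_3)g$. Substituting into \cref{eq:mid_to_end_propagation_X} gives $X(Ag,Bg)\mapsto X\!\big(Ag+b_1^{-1}A(b_2+b_3)g,\,0\big)$, and the claimed form $X(b_1^{-1}ABg,0)$ follows from the identity $b_1A+A(b_2+b_3)=AB$ in $\mathbb{Z}_2[G]$ (equivalently, multiply the left-qubit support by $b_1$ and use that left multiplication by a group element is a linear automorphism of $\mathbb{Z}_2[G]$). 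The $Z$-family is handled by the mirror computation: for $g'\in b_1K$ one finds $f(a_1^{-1}g')=f(a_1b_1)$ and $f(a_2^{-1}g')=f(a_3^{-1}g')=f(a_1b_1)u$, so $(A^{-1}g')\cap(a_1b_1K^{c})=(a_2^{-1}+a_3^{-1})g'$; plugging into the $Z$-line of \cref{eq:mid_to_end_propagation} and using $B^{-1}a_1^{-1}+B^{-1}(a_2^{-1}+a_3^{-1})=B^{-1}A^{-1}$ along with the Abelian identity $A^{-1}B^{-1}=B^{-1}A^{-1}$ yields $Z(B^{-1}g',A^{-1}g')\mapsto Z(a_1A^{-1}B^{-1}g',0)$.

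The only genuinely delicate part is the coset bookkeeping, i.e.\ correctly deciding which of the translates $b_ig$ (respectively $a_i^{-1}g'$) lie in $a_1b_1K$ versus $a_1b_1K^{c}$. This is exactly where the hypothesis of Crit.~\ref{crit:homomorphism_appendix} and the coset identities recalled above the corollary ($gK=g^{-1}K$, $kxK=xK$ for $k\in K$, and $a_1a_2K=a_1a_3K=b_1b_2K=b_1b_3K=K^{c}$) are used; once the intersections are pinned down, everything else is a formal manipulation in $\mathbb{Z}_2[G]$. Finally, since \cref{prop:mid_to_end_propagation} guarantees that each expanding stabilizer is sent to a (possibly trivial) logical operator of $\tilde{C}_1$ supported only on the left qubits, and since the expanding stabilizers generate the stabilizer group of $\tilde{C}_1$, the operators computed above form a generating set, which completes the proof.
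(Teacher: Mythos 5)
Your proposal is correct and follows essentially the same route as the paper: identify the expanding generators as the complements $g\in a_1K^{c}$, $g'\in b_1K$ of the contracting sets, compute the intersections $(Bg)\cap(a_1b_1K)=(b_2+b_3)g$ and $(A^{-1}g')\cap(a_1b_1K^{c})=(a_2^{-1}+a_3^{-1})g'$ from Crit.~\ref{crit:homomorphism_appendix}, and substitute into \cref{eq:mid_to_end_propagation}. The paper's proof is simply a terser version of this, leaving the $Z$-case and the $\mathbb{Z}_2[G]$ simplification $Ag+b_1^{-1}A(b_2+b_3)g=b_1^{-1}ABg$ implicit, which you have spelled out correctly.
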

\begin{proof}
    Here, the $X$-stabilizer $s(X,g)$ still has the form $X(Ag,Bg)$, but now with $g\in a_{1}K^{c}$. We therefore have $(Bg)\cap(a_{1}b_{1}K)=(b_{2}+b_{3})g$. Substituting into \cref{eq:mid_to_end_propagation_X} gives $X(Ag,Bg)\mapsto X(Ag+b_{1}^{-1}A(b_{2}+b_{3})g,0)=X(b_{1}^{-1}ABg,0)$, as claimed. Again the $Z$-stabilizers follow through an analogous computation.
\end{proof}

\begin{figure*}
\centering
    \includegraphics[width = \linewidth]{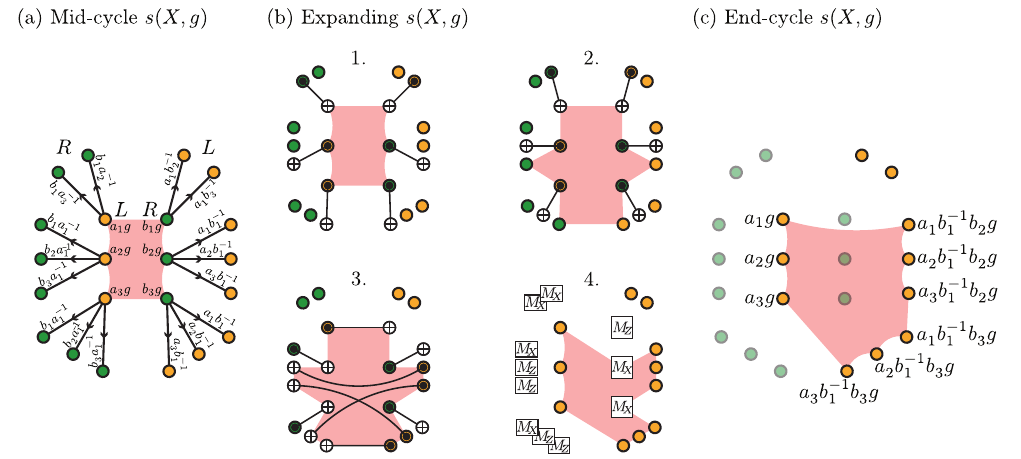}\\
    \includegraphics[width = \linewidth]{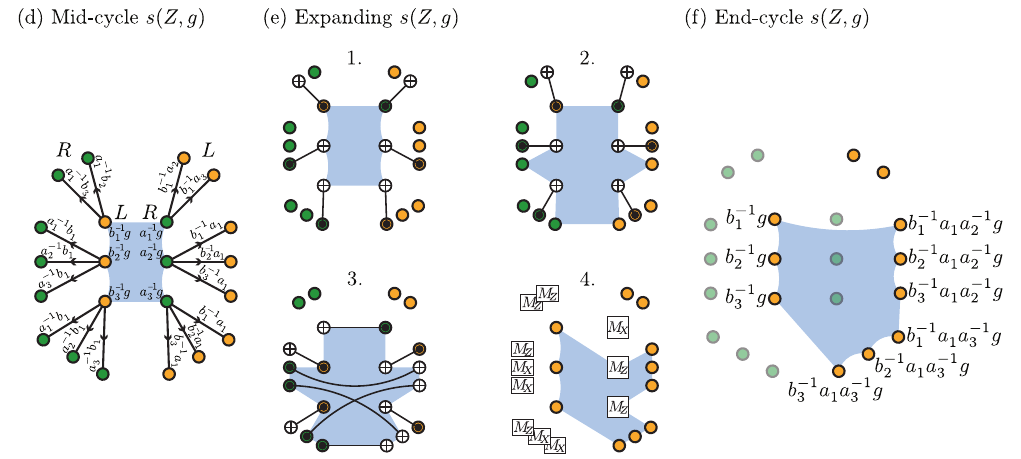}
    \caption{An expanding $X$ and $Z$ stabilizer. (a) The half-cycle support of $s(X,g)$. Also shown are the 16 other qubits that are involved in CNOT gates during the contraction circuit, the group label can be obtained by multiplying the group element in the label of the edge with the group element of the qubit it is connected to. (b) The 4 steps of the contraction circuit $F_{i}$ from the perspective of the expanding $s(X,g)$ stabilizer. The support of the $s(X,g)$ stabilizer is shown in red. (c) The end-cycle support of the stabilizer, representing one of the stabilizer generators of the code $\tilde{C}$. (d--f) The same diagrams for an expanding $Z$ stabilizer $s(Z,g)$.}\label{fig:expanding_stabs}
\end{figure*}

As shown in the main text, when all elements of $AB$ (viewed as a set) are distinct, each stabilizer generator has weight-9, as shown visually in \cref{fig:expanding_stabs}. One can ask if these are the minimum-weight stabilizers of the code. Using numerical optimization --- solved with Gurobi~\cite{Gurobi} similarly to how we found the distance of the end-cycle codes --- we were able to find the minimum-weight stabilizer of each code. For all codes listed in Table I of the main text~\cite{Shaw24Lowering} the computed minimum stabilizer weight stabilizer is 9, \textit{except} the $[[36,12,3]]$ code which has a weight-8 stabilizer.% Note, of course, that the stabilizers of $\tilde{C}_{2}$ are given by swapping $K\leftrightarrow K^{c}$ in \cref{cor:end-cycle_stabilizers}.

\subsection{Description of End-cycle Codes as BB Codes}
\label{sec:descripBB}

Here we explain how one can rewrite the stabilizer generators of $\tilde{C}_{1}$ explicitly as those of an Abelian 2BGA code. To do this, we need to define which subsets of qubits to denote as \textit{end-cycle-left} and \textit{end-cycle-right}, for the purpose of writing $\tilde{C}_{1}$ as a 2BGA code. Note that $\tilde{C}_{1}$ itself is defined only on the left qubits of the mid-cycle code $C$.

We define end-cycle-left and -right subsets in terms of the ``even'' and ``odd'' qubits with a slightly generalized definition compared to the main text. Specifically, the left-even qubits are those in the set $q(L,k)$ with $k\in K$, and the left-odd qubits are those with $q(L,k')$, $k'\in K^{c}$. Meanwhile, the right-even qubits are the set $q(R,k)$ with $k\in a_{1}b_{1}K$ and the right-odd qubits are $q(R,k')$ with $k'\in a_{1}b_{1}K^{c}$. With this nomenclature, we can view the right-even qubits as the $X$-ancillas of the end-cycle code $\tilde{C}_{1}$ and the right-odd qubits as the $Z$-ancillas (and vice versa for $\tilde{C}_{2}$). Likewise, we define the left-even and left-odd qubits respectively as the end-cycle-left and end-cycle-right qubits of $\tilde{C}_{1}$ (and vice versa for $\tilde{C}_{2}$).

We also introduce even and odd nomenclature for the stabilizers: the $X$-stabilizer $s(X,g)$ is even if $g\in a_{1}K$ and odd if $g\in a_{1}K^{c}$, while the $Z$-stabilizer $s(Z,g)$ is even if $g\in b_{1}K$ and odd if $g\in b_{1}K^{c}$. With this notation, the even $X$-stabilizers and odd $Z$-stabilizers are contracting in the $F_{1}$ circuit, and the odd $X$- and even $Z$-stabilizers are expanding. Note that all the definitions converge to that in the main text when $a_{1}=b_{1}=1$.

Now, to write $\tilde{C}_{1}$ as a 2BGA code we must first identify the subgroup that labels the qubits and stabilizers. In our case, this is the subgroup $K$, the kernel of $f$. We introduce new end-cycle labels for the qubits and stabilizers that we write as $\tilde{q}(L,k)$, $\tilde{q}(R,k)$, $\tilde{s}(X,k)$ and $\tilde{s}(Z,k)$ for $k\in K$. Fixing a ``relabelling'' (coset representative) element $r\in K^{c}$, we define
\begin{subequations}\label{eq:end-cycle_relabelling}
    \begin{align}
        \tilde{q}(L,k)&=q(L,k),&\tilde{q}(R,k)&=q(L,rk)\\
        \tilde{s}(X,k)&=s(X,a_{1}^{-1}rk),&\tilde{s}(Z,k)&=s(Z,b_{1}k).
    \end{align}
\end{subequations}
Note that these definitions are such that all the expanding stabilizers and mid-cycle-left qubits are given a unique end-cycle label with an element $k\in K$. With these end-cycle labels, the end-cycle stabilizers can be written explicitly as an Abelian 2BGA code, i.e.
\begin{subequations}\label{eq:end-cycle_labeled_stabilizer_support}
    \begin{align}
        \tilde{s}(X,k)&=X(a_{1}^{-1}b_{1}^{-1}ABrk,0)=\tilde{X}(\tilde{A}k,\tilde{B}k)\\
        \tilde{s}(Z,k)&=Z(a_{1}b_{1}A^{-1}B^{-1}k,0)=\tilde{Z}(\tilde{B}^{-1}k,\tilde{A}^{-1}k)
    \end{align}
\end{subequations}
with
\begin{subequations}\label{eq:tilde_A_B}
    \begin{align}
        \tilde{A}&=\big(a_{1}^{-1}(a_{2}+a_{3})+b_{1}^{-1}(b_{2}+b_{3})\big)r,\\
        \tilde{B}&=\big(1+a_{1}^{-1}b_{1}^{-1}(a_{2}+a_{3})(b_{2}+b_{3})\big).
    \end{align}
\end{subequations}

Repeating the exercise to determine the 2BGA representation of $\tilde{C}_{2}$ requires modifying the definitions in~\cref{eq:end-cycle_relabelling} since the roles of even and odd qubits/stabilizers become swapped. With an appropriate choice, one recovers the same equations as in \cref{eq:end-cycle_labeled_stabilizer_support,eq:tilde_A_B}. This is consistent with the observation that $\tilde{C}_{2}$ is related to $\tilde{C}_{1}$ by a shift of any element $r\in K^{c}$, since the 2BGA representation of the code is independent of this shift.

We finish this appendix by stating (without proof) a few bonus propositions that are not used in the rest of the manuscript but may be of interest to some readers. Each of these propositions follows from \cref{prop:mid_to_end_propagation}.

First, we provide a condition analogous to \cref{eq:mid-cycle_logical_condition} for the end-cycle code $\tilde{C}_{1}$.
\begin{proposition}\label{prop:end-cycle_logical_condition}
    In the end-cycle code $\tilde{C}_{1}$, an $X$-operator $X(P,0)$ represents a (possibly trivial) logical operator iff
    \begin{subequations}
        \begin{equation}
            ABP\subseteq a_{1}b_{1}K^{c}.
        \end{equation}
        Likewise, a $Z$-operator $Z(P,0)$ represents a (possibly trivial) logical operator iff
        \begin{equation}
            A^{-1}B^{-1}P\subseteq a_{1}b_{1} K.
        \end{equation}
    \end{subequations}
\end{proposition}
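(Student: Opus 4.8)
The plan is to bypass \cref{prop:mid_to_end_propagation} and argue directly, using the CSS structure of $\tilde{C}_{1}$ together with the explicit stabilizer generators from \cref{cor:end-cycle_stabilizers}. An $X$-type operator $X(P,0)$ is a (possibly trivial) logical operator of $\tilde{C}_{1}$ precisely when it lies in the centralizer of the stabilizer group; since $X(P,0)$ commutes automatically with every $X$-stabilizer, the only thing to check is that it commutes with all the $Z$-stabilizer generators $Z(a_{1}A^{-1}B^{-1}g',0)$ with $g'\in b_{1}K$. Dually, $Z(P,0)$ is a possibly-trivial logical operator iff it commutes with all the $X$-stabilizer generators $X(b_{1}^{-1}ABg,0)$ with $g\in a_{1}K^{c}$. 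So the whole statement reduces to turning these two families of commutation conditions into the claimed coset inclusions.

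The key step is an algebraic identity. Writing $\langle\,\cdot\,,\,\cdot\,\rangle$ for the $\mathbb{Z}_{2}$-bilinear overlap-parity pairing of support sets on the left qubits, the adjoint relation $\langle P,QR\rangle=\langle Q^{-1}P,R\rangle$ holds in $\mathbb{Z}_{2}[G]$ because $G$ is abelian (here $Q^{-1}$ is the element-wise inverse, i.e.\ the antipode). Applying this twice to $\langle P,\,h\,A^{-1}B^{-1}\rangle$ with $h\in G$ a single group element, and using $(A^{-1}B^{-1})^{-1}=AB$, collapses it to the coefficient of the identity in $ABh^{-1}P$, which equals the coefficient of $h$ in the reduced product $ABP$. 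Hence $X(P,0)$ and $Z(h\,A^{-1}B^{-1},0)$ commute iff $h\notin\mathrm{supp}(ABP)$, and the mirror computation (swapping $A\leftrightarrow A^{-1}$, $B\leftrightarrow B^{-1}$) shows $Z(P,0)$ and $X(h\,AB,0)$ commute iff $h\notin\mathrm{supp}(A^{-1}B^{-1}P)$.

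Assembling these: for $X(P,0)$ the relevant element is $h=a_{1}g'$, which runs over the coset $a_{1}b_{1}K$ as $g'$ runs over $b_{1}K$, so the commutation requirement is $\mathrm{supp}(ABP)\cap a_{1}b_{1}K=\emptyset$; since $G=a_{1}b_{1}K\sqcup a_{1}b_{1}K^{c}$ this is exactly $ABP\subseteq a_{1}b_{1}K^{c}$. For $Z(P,0)$ the relevant element is $h=b_{1}^{-1}g$ with $g\in a_{1}K^{c}$; using $b_{1}^{-1}K^{c}=b_{1}K^{c}$ (one of the coset facts recorded above, coming from $f(b_{1}^{-1})=f(b_{1})$), this $h$ runs over $a_{1}b_{1}K^{c}$, so the requirement is $\mathrm{supp}(A^{-1}B^{-1}P)\cap a_{1}b_{1}K^{c}=\emptyset$, i.e.\ $A^{-1}B^{-1}P\subseteq a_{1}b_{1}K$. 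This is the claim, and because the centralizer condition is an ``iff'' by definition, no converse direction needs separate treatment.

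The only genuinely delicate point is bookkeeping between a subset of $G$ and the corresponding $\mathbb{Z}_{2}$-reduced element of the group algebra: one must check that the reduction implicit in writing a generator as $Z(a_{1}A^{-1}B^{-1}g',0)$ is the same reduction under which the overlap parity $\langle P,\,\cdot\,\rangle$ equals the group-algebra coefficient, so that ``lies in the support'' is unambiguous. Once the adjoint identity is set up carefully this is routine. As an aside, the result can alternatively be derived from \cref{prop:mid_to_end_propagation} combined with the mid-cycle conditions \cref{eq:mid-cycle_logical_condition} — realizing $X(P,0)$ as the $M_{1}\circ F_{1}$-image of a mid-cycle logical $X(\hat P,\hat Q)$ with $B\hat P+A\hat Q=0$ and simplifying $ABP$ using $A^{2}\subseteq K$ — but the direct commutation route above is shorter.
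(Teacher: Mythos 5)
Your proof is correct. Note that the paper does not actually supply an argument for this proposition: it is one of the ``bonus propositions'' stated without proof, with only the remark that it follows from \cref{prop:mid_to_end_propagation} (i.e.\ by realizing $X(P,0)$ as the image under $M_{1}\circ F_{1}$ of a mid-cycle operator and invoking \cref{eq:mid-cycle_logical_condition}). Your route is genuinely different and, I think, cleaner: you never touch the contraction circuit, instead working entirely inside $\tilde{C}_{1}$ by reducing ``possibly trivial logical'' to ``commutes with every stabilizer generator'' (which, for a CSS code, splits into the two one-sided conditions you use), taking the explicit generators $X(b_{1}^{-1}ABg,0)$, $g\in a_{1}K^{c}$, and $Z(a_{1}A^{-1}B^{-1}g',0)$, $g'\in b_{1}K$, from \cref{cor:end-cycle_stabilizers} as given, and converting the overlap parities via the adjoint identity $\langle P,QR\rangle=\langle Q^{-1}P,R\rangle$ in $\mathbb{Z}_{2}[G]$. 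I verified the coset bookkeeping: the $Z$-generator labels $h=a_{1}g'$ sweep $a_{1}b_{1}K$ and the $X$-generator labels $h=b_{1}^{-1}g$ sweep $a_{1}b_{1}K^{c}$ (using $f(b_{1}^{-1})=f(b_{1})$), giving exactly the two claimed inclusions; and the reduction subtlety you flag is genuinely benign because the paper's set-valued products are by definition the $\mathbb{Z}_{2}$-reduced group-algebra elements, so support-indicator and coefficient coincide. A useful sanity check in your framework: the generators themselves satisfy the condition because $A^{2},B^{2}\subseteq K$ after mod-2 reduction (the cross terms cancel), which is exactly the $A^{2}\subseteq K$ fact you mention in your aside. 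What each approach buys: the paper's intended route reuses machinery already built for \cref{cor:validity} and generalizes to operators with mid-cycle support on both left and right qubits, whereas yours is shorter, self-contained given \cref{cor:end-cycle_stabilizers}, and makes the ``iff'' immediate rather than requiring a separate surjectivity argument that every end-cycle logical arises as the image of a mid-cycle one.
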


Finally, we present two propositions relating to the transformation of logical operators between the mid-cycle code and the end-cycle codes.

\begin{proposition}
    Any mid-cycle (possibly trivial) logical operator with support only on the left qubits in $C$ is mapped to the same logical operator in both $\tilde{C}_{1}$ and $\tilde{C}_{2}$. The same is true for operators with support only on the right qubits of $C$. 
\end{proposition}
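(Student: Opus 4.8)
The plan is to combine the explicit propagation formulae of \cref{prop:mid_to_end_propagation} with the mid-cycle logical conditions \cref{eq:mid-cycle_logical_condition}. First I would reduce to pure $X$-type and pure $Z$-type operators: since $C$ is CSS, any Pauli operator $X(P,Q)Z(P',Q')$ in the centralizer of the stabilizer group splits into an $X$-part $X(P,Q)$ and a $Z$-part $Z(P',Q')$, each of which is individually a (possibly trivial) logical operator, and if the combined operator is supported only on the left (resp.\ right) qubits, then so are both parts. Hence it suffices to treat $X(P,0)$, $Z(P,0)$ and then $X(0,Q)$, $Z(0,Q)$ separately.

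\textbf{Left-supported case.} For $X(P,0)$ we have $Q=0$, so the intersections $Q\cap a_{1}b_{1}K$ and $Q\cap a_{1}b_{1}K^{c}$ appearing in \cref{eq:mid_to_end_propagation} are both empty. Thus \cref{prop:mid_to_end_propagation} gives $X(P,0)\mapsto X(P,0)$ under $M_{1}\circ F_{1}$, and since $M_{2}\circ F_{2}$ is the same map with $K\leftrightarrow K^{c}$ it likewise gives $X(P,0)\mapsto X(P,0)$. By \cref{prop:mid_to_end_propagation} the image is a valid (possibly trivial) logical operator of $\tilde{C}_{1}$ and of $\tilde{C}_{2}$ respectively; being literally the same Pauli on the left qubits, it is the same logical operator in both end-cycle codes. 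The argument for $Z(P,0)$ is identical.

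\textbf{Right-supported case.} This is the only place a short calculation is needed. For $X(0,Q)$, the mid-cycle condition \cref{eq:Xcond} with $P=0$ reads $AQ=0$ in $\mathbb{Z}_{2}[G]$. Splitting $Q=(Q\cap a_{1}b_{1}K)+(Q\cap a_{1}b_{1}K^{c})$ (a disjoint union, hence a sum over $\mathbb{Z}_{2}$) yields $A(Q\cap a_{1}b_{1}K)=A(Q\cap a_{1}b_{1}K^{c})$, so also $b_{1}^{-1}A(Q\cap a_{1}b_{1}K)=b_{1}^{-1}A(Q\cap a_{1}b_{1}K^{c})$. But by \cref{prop:mid_to_end_propagation}, $M_{1}\circ F_{1}$ sends $X(0,Q)$ to $X(b_{1}^{-1}A(Q\cap a_{1}b_{1}K),0)$ while $M_{2}\circ F_{2}$ sends it to $X(b_{1}^{-1}A(Q\cap a_{1}b_{1}K^{c}),0)$; the identity just derived shows these are the same Pauli, and they are valid logicals of $\tilde{C}_{1}$, $\tilde{C}_{2}$ respectively. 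The $Z$-type case is the same: \cref{eq:Zcond} with $P=0$ gives $B^{-1}Q=0$, hence $B^{-1}(Q\cap a_{1}b_{1}K^{c})=B^{-1}(Q\cap a_{1}b_{1}K)$, which equates the two images $Z(a_{1}B^{-1}(Q\cap a_{1}b_{1}K^{c}),0)$ and $Z(a_{1}B^{-1}(Q\cap a_{1}b_{1}K),0)$.

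\textbf{Main obstacle.} There is no real obstacle; the only subtlety is bookkeeping the two cosets $a_{1}b_{1}K$, $a_{1}b_{1}K^{c}$ in the right-supported case and recognising that the mid-cycle logical condition is precisely what forces the two candidate images to coincide. One should also be careful about the meaning of ``the same logical operator'': since $\tilde{C}_{1}$ and $\tilde{C}_{2}$ are genuinely distinct codes on the same left qubits (related by a coset shift, cf.\ \cref{cor:end-cycle_stabilizers}), the content of the statement is that the \emph{identical} Pauli representative is simultaneously a valid logical operator of both end-cycle codes; a one-line remark to that effect would make the argument airtight.
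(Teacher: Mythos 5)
Your proof is correct and follows exactly the route the paper intends: the paper states this proposition without proof, remarking only that it ``follows from'' \cref{prop:mid_to_end_propagation}, and your argument supplies precisely that derivation --- the left-supported case is immediate since $Q=0$ kills the coset intersections, and the right-supported case uses the mid-cycle conditions \cref{eq:mid-cycle_logical_condition} to equate the two candidate images. The CSS splitting into pure $X$- and $Z$-parts and the closing remark on what ``the same logical operator'' means for two distinct codes on the same qubits are both sound.
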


\begin{proposition}
    Any end-cycle (possibly trivial) logical operator with support only on the end-cycle-left qubits in $\tilde{C}_{1}$ (i.e.~the left-even qubits) will be mapped to an operator that has support only on the end-cycle-left qubits in $\tilde{C}_{2}$ (i.e.~the left-odd qubits), and similarly for operators with support on the end-cycle-right qubits.
\end{proposition}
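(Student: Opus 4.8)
The plan is to track a single logical operator of $\tilde{C}_1$ through the parity check cycle $M_2\circ F_2\circ F_1^\dagger\circ R_1$ that carries $\tilde{C}_1$ into $\tilde{C}_2$, using \cref{prop:mid_to_end_propagation} essentially as a black box. Since both end-cycle codes are CSS and CNOT conjugation preserves Pauli type, it suffices to treat an $X$-type operator $X(P',0)$; the $Z$-type case is handled identically using the $Z$-half of \cref{prop:mid_to_end_propagation}. There are two cases: $P'\subseteq K$, so that $X(P',0)$ is supported on the left-even qubits (the end-cycle-left qubits of $\tilde{C}_1$), and $P'\subseteq K^c$ (the end-cycle-right qubits). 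In either case $X(P',0)$ is supported only on left qubits, which are disjoint from the qubits reset by $R_1$ and measured by $M_2$; hence it passes through $R_1$ unchanged, is conjugated by $F_1^\dagger$ into a (possibly trivial) logical operator $X(P,Q)=F_1^\dagger X(P',0)F_1$ of the mid-cycle code $C$, and is then pushed into $\tilde{C}_2$ by $M_2\circ F_2$ according to \cref{prop:mid_to_end_propagation} with $K\leftrightarrow K^c$.

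First I would compute $X(P,Q)=F_1^\dagger X(P',0)F_1$ by propagating $X(P',0)$ through the three CNOT rounds of $F_1^\dagger$, which are Rounds 1--3 of \cref{tab:contraction_circuits_appendix} applied in reverse order (each such round is a product of CNOTs on disjoint qubits, hence its own inverse, so conjugation by $F_1^\dagger$ amounts to propagating through Round 3, then Round 2, then Round 1). The computation is short because the coset identities guaranteed by \cref{crit:homomorphism_appendix}, $f(a_j)=uf(a_1)$ and $f(b_j)=uf(b_1)$ for $j\neq 1$, ensure both that the two CNOTs appearing in each round act on disjoint qubits and that no support generated at an intermediate step lands on a control qubit that would cause it to spread further. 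The outcome is $X(P,Q)=X\big(P',a_1^{-1}(b_2+b_3)P'\big)$ when $P'\subseteq K$, and $X(P,Q)=X\big(P',a_1^{-1}b_1P'\big)$ when $P'\subseteq K^c$. I would then check that these are genuine mid-cycle logical operators, i.e.\ satisfy $BP+AQ=0$ from \cref{eq:mid-cycle_logical_condition}; this follows by rewriting the end-cycle condition of \cref{prop:end-cycle_logical_condition}, $ABP'\subseteq a_1b_1K^c$, as the statement that the component of $ABP'$ in the coset $a_1b_1K$ vanishes modulo $2$, which is exactly the identity needed.

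The last step is to feed $X(P,Q)$ into the $\tilde{C}_2$ version of \cref{prop:mid_to_end_propagation}, $X(P,Q)\mapsto X\big(P+b_1^{-1}A(Q\cap a_1b_1K^c),0\big)$. Using the same coset arithmetic one checks that $Q\subseteq a_1b_1K^c$ in both cases, so $Q\cap a_1b_1K^c=Q$, and substituting and cancelling with the constraint on $P'$ gives the image $X\big(b_1^{-1}(b_2+b_3)P',0\big)$, which is supported on $K^c$, when $P'\subseteq K$, and the image $X\big(a_1^{-1}(a_2+a_3)P',0\big)$, which is supported on $K$, when $P'\subseteq K^c$. Since the end-cycle-left qubits of $\tilde{C}_2$ are the left-odd ($K^c$) qubits and the end-cycle-right qubits of $\tilde{C}_2$ are the left-even ($K$) qubits, this is exactly the claim. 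The one genuine obstacle is the coset bookkeeping: keeping straight which of the two cosets of $K$ each intermediate and final support lies in, both to justify that the CNOT propagations do not interfere and to read off where the image lands. Everything else is a mechanical unwinding of definitions together with the two cited propositions.
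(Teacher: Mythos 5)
Your proposal is correct and follows exactly the route the paper indicates: the paper states this proposition without proof, remarking only that it ``follows from \cref{prop:mid_to_end_propagation}'', and your argument is precisely that, supplemented by the backward propagation through $F_1^\dagger$ and the end-cycle logical condition of \cref{prop:end-cycle_logical_condition}. I checked the key formulas --- $X(P',0)\mapsto X\big(P',a_1^{-1}(b_2+b_3)P'\big)$ for $P'\subseteq K$ and $X(P',0)\mapsto X\big(P',a_1^{-1}b_1P'\big)$ for $P'\subseteq K^c$ under $F_1^\dagger$, the membership $Q\subseteq a_1b_1K^c$, and the final cancellations (which for the $K^c$ case are automatic and do not even require the constraint on $P'$) --- and they all hold, so the proof is sound.
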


\section{Properties of the Connectivity Graph}\label{sec:connectivity}

In this section, we prove that the connectivity graph of the morphing protocol for any weight-6 BB code that satisfies Crit.~\ref{crit:homomorphism} is biplanar, and discuss how to layout the qubits in a toric$^+$ arrangement. Recall that the connectivity graph $G_{\rm con}=(V,E)$ is the graph consisting of qubits on vertices and edges corresponding to pairs of qubits that participate in the same CNOT during any of the contraction circuits $F_{i}$.
From \cref{tab:contraction_circuits_appendix}, each left qubit $q(L,g)$ participates in a CNOT with the right qubits $q(R,a_{i}^{-1}b_{j}^{\vphantom{-1}}g)$ for $(i,j)\in\{(1,1),(1,2),(1,3),(2,1),(3,1)\}$; while each right qubit $q(R,g')$ participates in a CNOT with the left qubits $q(L,a_{i}^{\vphantom{-1}}b_{j}^{-1}g')$ for the same values of $(i,j)$.
Therefore, the connectivity graph is bipartite (between sets of left and right qubits) and of degree 5.

\subsection{Biplanarity}\label{subsec:biplanarity}

The graph $G_{\rm con}$ is biplanar if one can partition the set of edges $E$ into two subsets $E_{1},E_{2}, E=E_1 \cup E_2$ such that each subgraph $(V,E_{i})$ is planar. We will show this by explicitly providing such a partitioning of the set of edges and demonstrating that each subgraph is planar.

\begin{figure}
    \includegraphics[width = \linewidth]{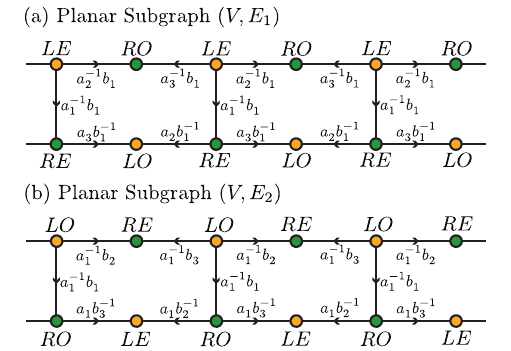}
    \caption{One wheel from each of the planar subgraphs $(V,E_{1})$ and $(V,E_{2})$. Qubits are labeled $LE,LO,RE,RO$ to denote whether they are left or right and even or odd. The edges of the graph are undirected, the arrows are only there to indicate that to go between the qubits you multiply the qubit label by the edge label when traveling in the direction of the arrow. Extending each wheel to the right, each wheel will eventually return to the left-hand side of the figure. The subgraph can therefore be embedded in a planar layout by arranging it as a wheel, with each row of qubits forming a circle and, say, the upper row being inside the bottom row. The full subgraph $(V,E_{i})$ may consist of multiple disjoint wheels.}\label{fig:biplanarity}
\end{figure}

The two subgraphs are shown in \cref{fig:biplanarity} and are constructed as follows, based on \cref{tab:contraction_circuits_appendix}. We use the definitions of ``even'' and ``odd'' qubits introduced in \cref{sec:descripBB}; that is, a left qubit $q(L,g)$ is even if $g\in K$ and odd if $g\in K^{c}$, while a right qubit $q(R,g)$ is even if $g\in a_{1}b_{1}K$ and odd if $g\in a_{1}b_{1}K^{c}$. Moreover, we split the edges into five subsets $E_{(i,j)}$ that each contain the edges connecting $q(L,g)$ to $q(R,a_{i}^{-1}b_{j}^{\vphantom{-1}}g)$ for all $g$. Note that the edge subset $E_{(1,1)}$ contains edges that connect left-even qubits with right-even qubits and left-odd qubits with right-odd qubits, while the other four subsets connect an even and an odd qubit. Therefore, we additionally split $E_{(1,1)}$ into even $E_{(1,1)}^{(e)}$ and odd $E_{(1,1)}^{(o)}$ subsets depending on whether the qubits they connect are even or odd.

Now we define the two planar subgraphs to have edges
\begin{subequations}
    \begin{align}
        E_{1}&=E_{(1,1)}^{(e)}\cup E_{(2,1)}^{\vphantom{(e)}}\cup E_{(3,1)}^{\vphantom{(e)}},\\
        E_{2}&=E_{(1,1)}^{(o)}\cup E_{(1,2)}^{\vphantom{(e)}}\cup E_{(1,3)}^{\vphantom{(e)}}.
    \end{align}
\end{subequations}
By construction, the union of these two edge sets gives the full set of edges. Much like in Ref.~\cite{Bravyi24}, each subgraph consists of a finite number of disjoint \textit{wheels}, as shown in \cref{fig:biplanarity}. Each wheel is planar, and therefore each subgraph is planar, proving the biplanarity of $G_{\rm con}$.

\subsection{Toric$^+$ Layout}\label{subsec:toric+}

Now we turn our attention to the geometric layout of the connectivity graph $G_{\rm con}$. In Ref.~\cite{Bravyi24}, Bravyi \textit{et al.} showed that the standard parity-check protocol for all the codes listed in Table I of the main text~\cite{Shaw24Lowering} can be executed using what we refer to as a toric$^+$ layout. This layout consists of the local edges required to implement the toric code, \textit{plus} some longer-range edges. In particular, since the Tanner graph has degree 6, each qubit must be connected to 4 local and 2 non-local edges.

Here, we wish to find a toric$^{+}$ layout of $G_{\rm con}$ for each of the end-cycle codes presented in Table I of the main text~\cite{Shaw24Lowering}, in which as many edges as possible are local on a torus. We will show that under very similar conditions to Lemma 4 of Ref.~\cite{Bravyi24}, the morphing circuit admits a ``hex-grid rotated'' toric$^+$ layout; that is, consisting of the local edges required to implement the hex-grid rotated toric code (see \cref{fig:surface_code_example} and Ref.~\cite{McEwen23}), plus some longer-range edges. Since the connectivity graph of the morphing protocol has degree 5, each qubit must be connected to 3 local and 2 non-local edges. In other words, the reduction in connectivity compared to the standard protocol involves removing one of the local edges while not increasing the number of non-local edges.

The key tools we will use in our proof come from Theorem 6 in Ref.~\cite{Lin23}, namely that the code is invariant under any transformation
\begin{align}\label{eq:relabelling}
    A&\mapsto \phi(r_{a}A),&B&\mapsto \phi(r_{b}B),
\end{align}
for any relabelling elements $r_{a},r_{b}\in G$, and any bijective group homomorphism $\phi:G\rightarrow G$ (also called a group \textit{automorphism}) applied to each element of $A$ and $B$.

In the morphing protocol, the edges of the connectivity graph are labeled by the group elements $G_{E}=\{e_{(i,j)}=a_{i}^{-1}b_{j}^{\vphantom{-1}}\mid i\text{ or }j=1\}$. Take for example, the morphing protocol in \cref{fig:surface_code_example}, where the mid-cycle code $C$ is the $\lambda\times\mu$ unrotated toric code for some $\lambda,\mu\in\mathbb{Z}$ and the end-cycle code $\tilde{C}_{i}$ is the rotated toric code. Then, the elements in $G_{E}$ can be written $\{1,x,y\}$. Therefore, we say that the morphing protocol for a 2BGA code has a (hex-grid rotated) toric$^+$ layout if it can be rewritten using \cref{eq:relabelling} into the form $G=\mathbb{Z}_{\lambda}\times\mathbb{Z}_{\mu}$ with $\{1,x,y\}\subseteq G_{E}$ for some $\lambda,\mu\in \mathbb{Z}$.
%Note, that now the elements in $G_{E}$ do not depend individually on the relabelling elements $r_{a}$ and $r_{b}$, but only on the product $r_{a}^{-1}r_{b}^{\vphantom{-1}}=r_{e}$.

We claim that such a toric$^+$ layout exists if there exists a pair of elements $g_{1}=e_{(i,j)}^{-1}e_{(i',j')}^{\vphantom{-1}}$ and $g_{2}=e_{(i,j)}^{-1}e_{(i'',j'')}^{\vphantom{-1}}$ satisfying the condition that:
\begin{enumerate}
    \item The set $\{g_{1},g_{2}\}$ generates $G$, and
    \item $\lambda\mu=\ell m$, where $\lambda$ and $\mu$ are the smallest positive integers such that $g_{1}^{\lambda}=g_{2}^{\mu}=1$.
\end{enumerate}
This condition is very similar to that given in Lemma 4 of Ref.~\cite{Bravyi24} and is satisfied by every code and every homomorphism listed in Table I of the main text~\cite{Shaw24Lowering}.

To show this, we begin by setting $r_{a}=1$, $r_{b}=e_{(i,j)}^{-1}$. Then, define the homomorphism $\tilde{\phi}:\mathbb{Z}_{\lambda}\times\mathbb{Z}_{\mu}\rightarrow\mathbb{Z}_{\ell}\times\mathbb{Z}_{m}$ with $\tilde\phi(x)=g_{1}$, $\tilde\phi(y)=g_{2}$. From our restrictions on $g_{1}$ and $g_{2}$, every element in $G$ can be written uniquely as a product $g_{1}^{q}g_{2}^{r}$ for integers $0\leq q<\lambda$ and $0\leq r < \mu$, and therefore $\tilde{\phi}$ is bijective. Moreover, $\tilde{\phi}$ being bijective implies that the groups $\mathbb{Z}_{\lambda}\times\mathbb{Z}_{\mu}$ and $\mathbb{Z}_{\ell}\times\mathbb{Z}_{m}$ are isomorphic, and hence $\tilde{\phi}$ is an automorphism from $G\rightarrow G$. We define the relabelling automorphism as $\phi=\tilde{\phi}^{-1}$. Now, note that under these relabellings, $e_{(i,j)}=a_{i}^{-1}b_{j}^{\vphantom{-1}}\mapsto \phi(r_{a}^{-1}a_{i}^{-1}r_{b}^{\vphantom{-1}}b_{j}^{\vphantom{-1}})=\phi(r_{b}e_{(i,j)})$. Applying these definitions to each element in $G_{E}$ gives
\begin{subequations}
    \begin{align}
        G_{E}&=\{e_{(i,j)},e_{(i',j')},e_{(i'',j'')}\dots\}\\
        &\mapsto \{\phi(r_{b}e_{(i,j)}),\phi(r_{b}e_{(i',j')}),\phi(r_{b}e_{(i'',j'')}),\dots\}\\
        &=\{\phi(1),\phi(e_{(i,j)}^{-1}e_{(i',j')}^{\vphantom{-1}}),\phi(e_{(i,j)}^{-1}e_{(i'',j'')}^{\vphantom{-1}}),\dots\}\\
        &=\{1,x,y,\dots\},
    \end{align}
\end{subequations}
as required. As a consequence, every code and every homomorphism in Table I of the main text~\cite{Shaw24Lowering} admits a hex-grid rotated toric$^+$ layout.% For concreteness, in \mhs{fig} we show one such toric$^+$ layout of the $[[144,12,12]]$ code in the standard parity-check schedule and the $[[288,12,18]]$ morphing circuit, both of which use 288 total physical qubits. Note that there are typically multiple valid toric$^+$ layouts, each of which may have long-range connections with different lengths that we have not optimized for in our presentation.

Note that the biplanarity of $G_{\rm con}$ does \textit{not} depend on the geometric layout: the existence of one geometric layout in which $G_{\rm con}$ is biplanar is enough to guarantee that $G_{\rm con}$ is biplanar in any other layout. However, it may be the case that if the qubits are placed in the toric$^+$ layout, the edges cannot travel in a geometrically straight line while still being biplanar. It is therefore not clear whether the toric$^+$ layout is the optimal layout for qubits in a biplanar system.

\section{Logical Operations}\label{sec:logical_operations}

In this section, we show how to input and output (I/O) an arbitrary quantum state between a surface code patch and any of the $k$ logical qubits encoded in the BB code under the morphing protocol. Such capabilities are enough to guarantee universal quantum computation with BB codes, since arbitrary logical qubits can be stored in memory in the BB code and then teleported into a surface code to perform logical gates.

In Ref.~\cite{Bravyi24} the authors show how to perform such I/O in a BB code using three operations. The first is what we refer to as a \textit{shift automorphism}~\footnote{Note here that the term ``automorphism'' refers to an automorphism of the \textit{stabilizer group} of the code, not the Abelian group $G$ that defines the BB code.}, whereby the label of every qubit and stabilizer generator is shifted via multiplication by a group element $g$: $X(P,Q)\mapsto X(gP,gQ)$ and $Z(P,Q)\mapsto Z(gP,gQ)$. The second is a $ZX$-duality that for BB codes corresponds to implementing the transformation $X(P,Q)\mapsto Z(Q^{-1},P^{-1})$, $Z(P,Q)\mapsto X(Q^{-1},P^{-1})$ (up to a shift automorphism). And third, Bravyi \textit{et al.} show how to teleport the logical state of \textit{one} of the qubits in the BB code onto the surface code using the scheme of Ref.~\cite{Cohen22}, albeit with a relatively large qubit overhead.

These three operations allow I/O of an arbitrary logical state due to the structure of the logical operators in ``primed'' and ``unprimed'' blocks. To construct the primed and unprimed blocks of logical qubits, we begin by fixing some subsets $P_{1},P_{2},Q_{2}\subseteq G$. Then, the primed logical operators are given by 
\begin{equation}
 \overline{X}=X(P_{1}g,0), \overline{Z}=Z(P_{2}g,Q_{2}g),\; \forall g\in G,
 \label{eq:struc-p}
\end{equation} while the unprimed logical operators are given by 
\begin{equation}
 \overline{X}=X(Q_{2}^{-1}g,P_{2}^{-1}g), \overline{Z}=Z(0,P_{1}^{-1}g),\; \forall g\in G.
 \label{eq:struc-up}
\end{equation} 
Bravyi \textit{et al.} showed that for the codes in Table I of the main text~\cite{Shaw24Lowering}, there exists a choice of $P_{1},P_{2},Q_{2}$ such that the primed and unprimed logical operators generate the entire logical group. Given the ability to perform I/O of one of the primed logical qubits, one can then access the remaining primed logical qubits via shift automorphisms, and the unprimed logical qubits via the $ZX$-duality.

One advantage of the approach taken by Bravyi \textit{et al.}\ is that these three operations can be performed in a biplanar layout. Specifically, the shift automorphisms and $ZX$-duality do not require any additional connectivity beyond what is required already for implementing the standard parity-check schedule, while the I/O of a single primed logical qubit to a surface code can be implemented in a biplanar layout. However, the $ZX$-duality proposed is currently extremely impractical, requiring on the order of 100 rounds of CNOT gates for the $[[144,12,12]]$ code, during which no error-correction can be performed~\cite{Bravyi24}.

In this section, we show how to perform a shift automorphism and the I/O of two logical qubits encoded using our morphing protocol, while leaving the possibility of $ZX$-dualities to future work. In particular, we show that shift automorphisms can be performed using the existing connectivity required to perform error correction, while I/O can be implemented using separate ancillary systems for an arbitrary number of logical qubits all within a biplanar layout. One caveat is that we require each logical operator to satisfy some technical restrictions on their support, and we find numerically that such logical operators exist for the three smaller BB codes from Table I of the main text~\cite{Shaw24Lowering}. We envisage building two separate I/O ancillary systems that interface with one primed and one unprimed logical qubit each, removing the need to implement a potentially extremely noisy $ZX$-duality. Moreover, the I/O scheme that we design implements a morphing protocol to perform error correction on all of the ancillary systems, reducing the connectivity requirements compared to Ref.~\cite{Bravyi24}. Note that we have no particular reason to believe that implementing a $ZX$-duality would be more difficult using a morphing protocol than in the standard protocol, and we do not analyze it here due to the need for more work optimizing its design.

\subsection{Shift Automorphisms}\label{subsec:logical_shifts}

We implement a shift automorphism using a similar approach as in Ref.~\cite{Bravyi24}. In particular, we will describe how to perform shifts of qubit labels corresponding to the group elements $g=a_{i}^{\vphantom{-1}}a_{j}^{-1}$ (an $A$-type shift) and $g=b_{i}^{\vphantom{-1}}b_{j}^{-1}$ (a $B$-type shift) for $i,j=1,2,3$. In Lemma 3 of Ref.~\cite{Bravyi24}, the authors show that these group elements generate $G$ whenever the Tanner graph of the BB code is connected. We perform the shift in between QEC cycles, while the left qubits are encoded in the end-cycle code $\tilde{C}_{i}$ and the right qubits are free to be used as ancilla qubits. Moreover, we are only able to use CNOTs between qubits that are adjacent in the connectivity graph of the morphing protocol; that is, between qubits $q(L,g)$ and $q(R,a_{i}^{-1}b_{j}^{\vphantom{-1}}g)$ for $(i,j)=(1,1),(1,2),(1,3),(2,1),(3,1)$.

The primitive operation we will use is a fault-tolerant SWAP between a data and an ancilla qubit, given by the circuit
\begin{equation}\label{eq:swap}
\includegraphics{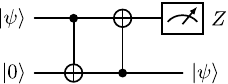}
\end{equation}
To perform the $A$-type shift $a_{i}^{\vphantom{-1}}a_{j}^{-1}$, we begin with each right qubit reset to $\ket{0}$. We use \cref{eq:swap} to swap the qubits $q(L,g)\leftrightarrow q(R,a_{j}^{-1}b_{1}^{\vphantom{-1}}g)$ for all $g\in G$. Then, we reset the left qubits to $\ket{0}$, and perform the swaps $q(R,g)\leftrightarrow q(L,a_{i}^{\vphantom{-1}}b_{1}^{-1}g)$. This completes the permutation $q(L,g)\rightarrow q(L,a_{i}^{\vphantom{-1}}a_{j}^{-1}g)$. To perform the $B$-type permutation $b_{i}^{\vphantom{-1}}b_{j}^{-1}$, we perform the same procedure except that the first set of swaps is between $q(L,g)\leftrightarrow q(R,a_{1}^{-1}b_{i}^{\vphantom{-1}}g)$ and the second set is between $q(R,g)\leftrightarrow q(L,a_{1}^{\vphantom{-1}}b_{j}^{-1}g)$.

At this point, we must discuss some subtleties regarding the implementation of these shift automorphisms. Let us write $U(g)$ to represent the unitary operator that shifts each qubit by a group element $g\in G$. We need to ensure that a shift automorphism performed on the end-cycle code $\tilde{C}_{i}$ also implements a shift automorphism of the mid-cycle code $C$, thereby cycling through the primed and unprimed blocks of logical operators. First, note that the contraction circuits $F_{i}$ commute with $U(g)$ only when $g\in K$; for $g\in K^{c}$ we have $U(g)^{\dag}\circ F_{i}\circ U(g)=F_{i+1}$ (with addition taken mod 2). The same is true for the measurements $M_{i}$ and resets $R_{i}$. So, when $g\in K$, we can apply $U(g)$ on the end-cycle code $\tilde{C}_{i}$, and the composite circuit $F_{i}^{\dag}\circ R_{i}^{\vphantom{\dag}}\circ U(g)\circ M_{i}^{\vphantom{\dag}}\circ F_{i}^{\vphantom{\dag}}$ will apply the shift $U(g)$ to the mid-cycle code $C$. However, if $g\in K^{c}$, $U(g)$ will map the end-cycle code $\tilde{C}_{i}\mapsto \tilde{C}_{i+1}$. Therefore, the circuit that applies the shift $U(g)$ to $C$ is instead $F_{i+1}^{\dag}\circ R_{i+1}^{\vphantom{\dag}}\circ U(g)\circ M_{i}^{\vphantom{\dag}}\circ F_{i}^{\vphantom{\dag}}$. After this, one would continue with QEC as usual starting again from the contraction circuit $F_{i}$.

Finally, if the $A$- or $B$-shift being implemented has $i$ or $j$ equal to 1, we can remove the need to perform one of the swaps by modifying the contraction circuit $F_{i}$ either immediately before or after the shift. In particular, we show in \cref{subsec:round_3_reversal} that with a simple modification, one can prepare the end-cycle code $\tilde{C}_{i}^{(R)}$ instead of $\tilde{C}_{i}$. $\tilde{C}_{i}^{(R)}$ is identical to $\tilde{C}_{i}$ except that the qubits $q(L,g)$ and $q(R,a_{1}^{-1}b_{1}^{\vphantom{-1}}g)$ are swapped. If this swap is used as the first step of an $A$- or $B$-shift, we can implement it instead using this modification to $F_{i}$. If this swap is the second step of the $A$- or $B$-shift, we can implement it by modifying $F_{i}^{\dag}$.

\subsection{Logical Input/Output}\label{subsec:logical_I/O}

In this subsection, we describe how to implement the I/O of an arbitrary logical qubit from the BB code to an ancillary surface code using a linking code described by Ref.~\cite{Cohen22}. We use the following circuits~\cite{Breuckmann_2017,Xu24}, obtained from one-bit teleportation circuits with CNOTs replaced by $XX$ and $ZZ$ measurements using a linking code, to perform input
\begin{subequations}\label{eq:I/O_circuits}
\begin{equation}
\includegraphics{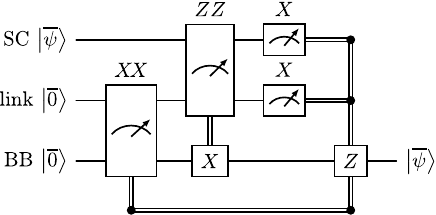}
\hspace{-0.5 cm}
\end{equation}
and output
\begin{equation}
\includegraphics{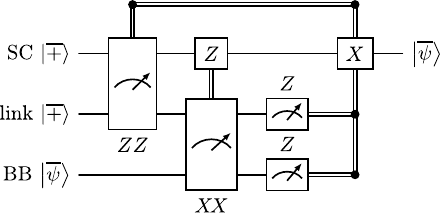}
\hspace{-0.5 cm}
\end{equation}
\end{subequations}
where SC refers to the surface code.

To use the circuits in \cref{eq:I/O_circuits}, we need to be able to implement a logical $\overline{Z}_{\text{BB}}$ measurement of the BB code and a joint $\overline{X}_{\text{link}}\overline{X}_{\text{BB}}$ measurement between the BB code and the linking code. Both of these can be done by performing lattice surgery with a designed linking code, described in Ref.~\cite{Cohen22}. The linking code depends on the structure of the logical operator being measured, and we therefore require a separate ancillary linking code system for each logical measurement. We will focus on how to perform a $\overline{X}_{\text{link}}\overline{X}_{\text{BB}}$ measurement, since a $\overline{Z}_{\text{BB}}$ measurement can be performed in an analogous way by measuring $\overline{Z}_{\text{link}}\overline{Z}_{\text{BB}}$ with the linking code initialized in $\ket{\overline{0}}$.

We will begin by discussing in general terms how to perform lattice surgery using a morphing protocol, with the hex-grid rotated surface code as a simple example. Then, we will apply this to BB codes and explicitly construct the required linking code and morphing circuit. Finally, we show that the connectivity graph is biplanar --- even for an arbitrary number of independent I/O apparatuses --- and that the degree of the connectivity graph is one fewer than the corresponding graph in Ref.~\cite{Bravyi24}.

\subsubsection{Lattice Surgery with the Morphing Protocol}\label{subsubsec:lattice_surgery}

We consider the following picture of measuring $\overline{X}\overline{X}$ using lattice surgery~\cite{Horsman12,Vuillot_2019}. We begin with two CSS codes that we together call the \textit{split} codes that undergo QEC separately. Then, we consider a \textit{merged} code that contains all the physical qubits of the split codes plus some number of interface data qubits that are ``in between'' the two split codes. The lattice surgery protocol proceeds by initializing all interface qubits in $\ket{0}$, then measuring the stabilizers of the merged code for $O(d)$ number of rounds, and then measuring the interface qubits in the $Z$-basis. Following this, the split codes are again operated independently. 

For such a protocol to work, we need the following.
\begin{appendixcriterion}\label{crit:lattice_surgery}\,
\end{appendixcriterion}
\begin{enumerate}[label=(\alph*)]
    \item \textit{The logical operator $\overline{X}\overline{X}$ is contained in the stabilizer group of the merged code, $\mathcal{S}_{\text{merged}}$, so that it can be inferred from the merged stabilizer values;}
    \item \textit{each $X$-stabilizer of the split codes is contained in $\mathcal{S}_{\text{merged}}$, so that $X$-detectors can be constructed in the merged code; and}
    \item \textit{every $Z$-stabilizer of $\mathcal{S}_{\text{merged}}$ can be written as some product of $Z$-stabilizers of the split codes and $Z$-operators on the interface qubits, so that $Z$-detectors can be constructed in the merged code.}
\end{enumerate}
It may be possible to consider other general descriptions of lattice surgery, but we find this description sufficient for our present discussion. Note also that Crit.~\ref{crit:lattice_surgery}(c) follows from Crit.~\ref{crit:lattice_surgery}(b): since every $Z$-stabilizer of $\mathcal{S}_{\text{merged}}$ must commute with every $X$-stabilizer of $\mathcal{S}_{\text{merged}}$ and hence every $X$-stabilizer of $\mathcal{S}_{\text{split}}$, we have that every $Z$-stabilizer of $\mathcal{S}_{\text{merged}}$ is contained in $\mathcal{S}_{\text{split}}$ up to multiplication by $Z$-operators on the interface qubits (of course assuming that $\mathcal{S}_{\text{merged}}$ does not contain any $Z$ logical operators of the split code).

Now, suppose that we have designed morphing protocols with two contraction circuits ($I=2$) for both split codes \textit{and} the merged code. That is, the split codes $C_{\text{split}}$ and the merged code $C_{\text{merged}}$ are mid-cycle codes, where we have combined the two disconnected split codes into the single notation $C_{\text{split}}$ for convenience. We now have four distinct split end-cycle codes $\tilde{C}_{i,\text{split}}$ and two merged end-cycle codes $\tilde{C}_{i,\text{merged}}$ for $i=1,2$. How do we use these morphing circuits to implement the lattice surgery protocol?

We propose to simply transition from the split to the merged codes during one of the end-cycle codes, in an entirely analogous way to a standard protocol, see \cref{fig:surface_code_lattice_surgery} for the surface code example. Explicitly, we begin by running QEC on the separate split codes, for some number of rounds. From the mid-cycle codes $C_{\text{split}}$ we run the contraction circuits $M_{i,\text{split}}\circ F_{i,\text{split}}$ and end up in the end-cycle code $\tilde{C}_{i,\text{split}}$. From here, we initialize any interface data qubits in $\ket{0}$ (we label this as $R_{\text{int}}$), and begin running morphing parity-check circuits for the merged code via $M_{i+1,\text{merged}}^{\vphantom{\dag}}\circ F_{i+1,\text{merged}}^{\vphantom{\dag}}\circ F_{i,\text{merged}}^{\dag}\circ R_{i,\text{merged}}^{\vphantom{\dag}}$. We continue for any number of rounds until we are in another merged end-cycle code $\tilde{C}_{i',\text{merged}}$, at which point we measure the interface qubits in the $Z$-basis (called $M_{\text{int}}$) and continue performing QEC on the split codes. For this to work, we need to ensure that Crit.~\ref{crit:lattice_surgery} applies to the \textit{end-cycle} split and merged codes instead of the \textit{mid-cycle} codes --- something that is not guaranteed and must be checked separately.

%\mhs{The paragraph below turns out to be way more annoying than I thought it would be. I think I'll just cut it.}If the mid-cycle codes do satisfy Crit.~\ref{crit:lattice_surgery}, we can guarantee that it applies to the end-cycle codes if we construct the contraction circuit of the merged code to contain the following gates:
%\begin{enumerate}
%    \item All the gates in the split contraction circuits $F_{i,1,\text{split}}$ and $F_{i,2,\text{split}}$, and,
%    \item Some additional gates on the interface qubits that commute with the $X$-stabilizers of the split codes at each time step.
%\end{enumerate}
%This construction guarantees that each $X$-stabilizer of the mid-cycle split codes is contracted and expanded to the same operator in the end-cycle merged code, thereby satisfying Crit.~\ref{crit:lattice_surgery}(b). Moreover, consider a $Z$-stabilizer generator $s_{j,\text{split}}^{(Z)}$ of the mid-cycle split codes that is expanded to some $Z$-stabilizer generator $\tilde{s}_{i,j,\text{split}}^{(Z)}$ of the end-cycle split code. Our construction above guarantees that the corresponding end-cycle stabilizer of the merged code $\tilde{s}_{i,\text{split}}^{(Z)}$ will have the same support on the non-interface qubits as the \mhs{not sure this is even true... and even if it is is it really that useful?}

\begin{figure*}
    \includegraphics[width=\linewidth]{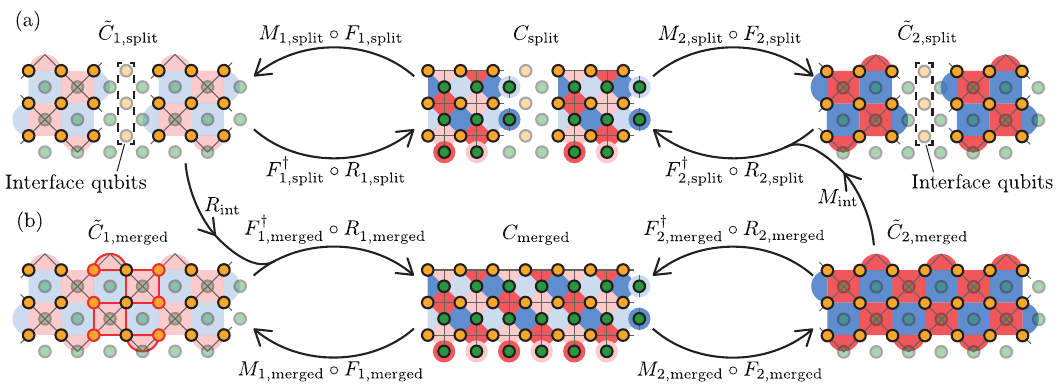}
    \caption{Surface code lattice surgery using morphing circuits. The $X$- and $Z$-stabilizers are represented by red and blue faces, respectively. stabilizers are opaque if they are contracted in $F_{1}$, and partly transparent if they are contracted in $F_{2}$. Data and ancilla qubits are colored yellow and green, respectively. Qubits are partly transparent if they are not in use at a given time step. The grey grid denotes the toric code lattice representation of the code, with $X$-stabilizers on vertices, $Z$-stabilizers on faces, and qubits on edges. The morphing protocol corresponds to that of \cref{fig:surface_code_example} but with boundaries; in this case, the circuits coincide exactly with the hex-grid rotated surface code of Ref.~\cite{McEwen23}. (a) The split morphing circuits of two surface codes. (b) The merged morphing circuits of one larger surface code, used for lattice surgery. We can initiate the logical measurement from the end-cycle codes by resetting the interface data qubits in the $\ket{0}$ state ($R_{\text{int}}$) before proceeding with the morphing parity-check schedule of the merged code. We represent this above for just the $\tilde{C}_{1,\text{split}}$ end-cycle code, but of course this could be done for the $\tilde{C}_{2,\text{split}}$ code as well. The logical operator $\tilde{X}_{1}\tilde{X}_{1}$ of the split end-cycle code $\tilde{C}_{1,\text{split}}$ can be written as a product of interface $X$-stabilizers that are contracting in $F_{2}$, marked with a red outline. All of these stabilizers are measured after a single parity-check cycle in the merged code. After repeating the merged parity-check schedule for $O(d)$ rounds to ensure fault-tolerance, we return to the split codes by measuring the interface data qubits in the $Z$-basis ($M_{\text{int}}$). }\label{fig:surface_code_lattice_surgery}
\end{figure*}

We now describe how the above morphing protocol allows us to infer the logical measurement outcome of the \textit{end-cycle} codes. Consider one parity-check cycle of the merged morphing circuit from $\tilde{C}_{i,\text{merged}}\rightarrow C_{\text{merged}}\rightarrow\tilde{C}_{i+1,\text{merged}}$ (with $i+1$ taken mod 2). We will write the mid-cycle support of the logical measurement as $\overline{XX}$ and the end-cycle support as $\tilde{X}_{i}\tilde{X}_{i}$. In the mid-cycle code $C_{\text{merged}}$, $\overline{XX}$ is equal to some product of $X$-stabilizers that we call \textit{interface} $X$-stabilizers since they have support ``in between'' the two split codes. Some of the interface $X$-stabilizers are contracting in $F_{i,\text{merged}}$ (i.e.~they are in the stabilizer generator subset $S_{i,\text{merged}}$) and some are expanding (i.e.~they are in $S_{i+1,\text{merged}}$). 
%One may therefore think that to infer the value of the logical operator, one needs to incorporate measurement values from \textit{both} rounds of measurements $M_{1,\text{merged}}$ and $M_{2,\text{merged}}$.
The unitarity of the contraction circuit $F_{i,\text{merged}}$ ensures that in the ``post-reset'' code, i.e.~the code obtained after the resets $R_{i,\text{merged}}$, the product of the $X$ logical operators is still equal to the product of all the interface $X$-stabilizers. However, the interface $X$-stabilizers in $S_{i,\text{merged}}$ only have support on the recently-reset qubits in $R_{i,\text{merged}}$ and therefore have a deterministic $+1$ eigenvalue. As such, in the end-cycle code $\tilde{C}_{i,\text{merged}}$, i.e.~before the resets $R_{i,\text{merged}}$, the product of logical operators $\tilde{X}_{i}\tilde{X}_{i}$ is only a product of the expanded interface $X$-stabilizers that are in $S_{i+1,\text{merged}}$. These are precisely the same interface stabilizers that are contracted by $F_{i+1,\text{merged}}$ and are included in the following round of measurements $M_{i+1,\text{merged}}$. Therefore we can infer the value of the end-cycle logical operator $\tilde{X}_{i}\tilde{X}_{i}$ by taking the product of the contracted interface $X$-stabilizer measurement outcomes in the measurement round $M_{i+1,\text{merged}}$.

In \cref{fig:surface_code_lattice_surgery} we show how to perform lattice surgery on the surface code using morphing circuits. In this case, the mid-cycle codes are unrotated surface codes and the end-cycle codes are rotated surface codes, simplifying the analysis. Note that in the mid-cycle merged code, the $\overline{XX}$ logical operator is the product of both expanding and contracting interface $X$-stabilizers. However, in the end-cycle codes, the logical operator $\tilde{X}_{i}\tilde{X}_{i}$ is the product only of interface $X$-stabilizers in $S_{i+1,\text{merged}}$ and are therefore all measured in the next round $M_{i+1,\text{merged}}$.

\subsubsection{Mid-cycle Linking Code and Mid-cycle Merged Code}

Having given a general description of lattice surgery with morphing circuits, we now turn our attention to the specific linking code that will be used to perform lattice surgery with the BB code. In particular, we will begin by constructing the mid-cycle linking code and merged code that measures $\overline{X}_{\text{link}}\overline{X}_{\text{BB}}$. In \cref{sec:morphing-p} we will then define the morphing protocols and check that the end-cycle split and merged codes satisfy Crit.~\ref{crit:lattice_surgery}.

Before continuing, we will make a few important assumptions about the mid-cycle logical $\overline{X}=X(P,Q)$ operator being measured. These are summarized in the following criterion.
\begin{appendixcriterion}\label{crit:logical_criterion}
\,
\end{appendixcriterion}
    \begin{enumerate}[label=(\alph*)]
        \item \textit{Every $Z$-stabilizer $s(Z,g)$ that intersects $\overline{X}$ does so on at most two qubits; that is, $|(B^{-1}g)\cap P|+|(A^{-1}g)\cap Q|\leq 2$ for all $g\in G$; and}
        \item \textit{The two overlapping qubits are not $q(L,b_{1}^{-1}g)$ and $q(R,a_{1}^{-1}g)$; that is, $P\cap(a_{1}^{\vphantom{-1}}b_{1}^{-1}Q)=0$.}
    \end{enumerate}
These properties are guaranteed for any logical operator with support only on the left qubits (or only on the right qubits), as is the case for the $\overline{X}$ logical operators in the primed logical block, see \cref{eq:struc-p}. We have checked that all unprimed logical $\overline{X}$ operators can be generated by logical operators with these properties for the $[[72,12,6]]$, $[[108,8,10]]$ and $[[144,12,12]]$ mid-cycle BB codes listed in Table I of the main text~\cite{Shaw24Lowering} by writing the restrictions in Crit.~\ref{crit:logical_criterion} as linear constraints and using the Gurobi numerical optimization package~\cite{Gurobi}. The required optimization is quite numerically intensive for the $[[288,12,18]]$ code, but we have no reason to believe that such logical operators would not exist.

\begin{figure}
\centering
    \includegraphics[width = \linewidth]{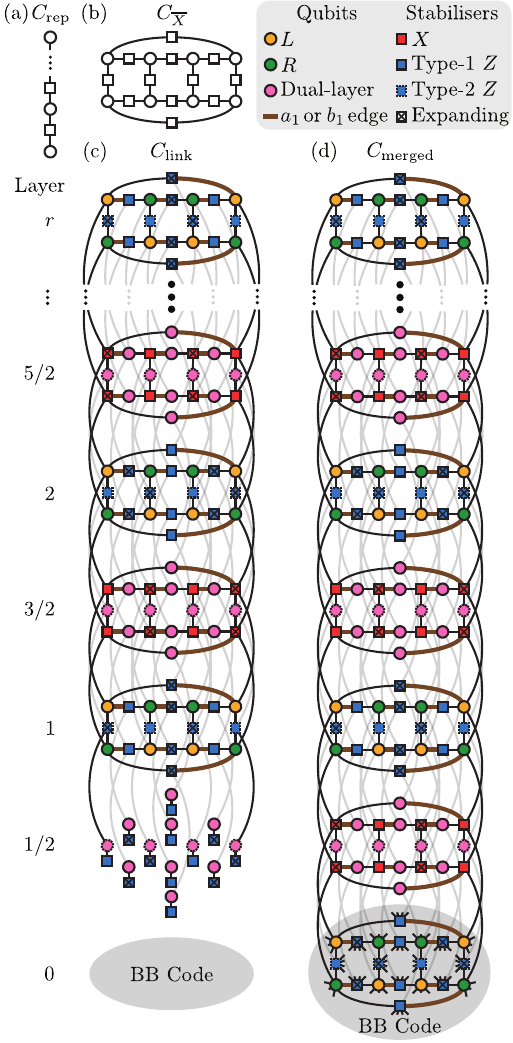}
    \caption{Tanner graphs of the linking and merged mid-cycle codes and their construction as a hypergraph product (HGP) code.
    %These diagrams correspond to the logical $\overline{X}=X(\{x^{2},y^{3},y^{5},xy^{5}\},\{1,xy,y^{5},x^{4}y^{5}\})$ of the [[72,12,6]] BB code under the $f_{xy}$ homomorphism.
    (a) The Tanner graph of the $r$-bit classical repetition code $C_{\text{rep}}$ with bits as circles and checks as squares. (b) The BB code restricted to qubits in the support of a single logical $\overline{X}$ operator and adjacent $Z$-stabilizers $C_{\overline{X}}$. (c) The linking code. Layers $\rho=1,3/2,\dots,r$ correspond exactly to the HGP code between $C_{\text{rep}}$ and $C_{\overline{X}}$, while layer $\rho=1/2$ corresponds to $Z$-ancillas. Layer 0 consists of the separate BB code. (d) The merged code. Layers $\rho=0$ and $\rho=1,3/2,\dots,r$ are the same as in (c), but layer $\rho=1/2$ now contains the interface $X$-stabilizers that allow the measurement of $\overline{X}_{\text{link}}\overline{X}_{\text{BB}}$. Also displayed are the $a_{1}$ and $b_{1}$ edges (brown, bold), expanding stabilizers in the first contraction circuit (cross), and the Type-2 $Z$-stabilizers and dual-layer qubits (dashed). Some edges are shaded for clarity.}\label{fig:linking_and_merged_codes}
\end{figure}

The linking mid-cycle code itself is most succinctly described as a hypergraph product (HGP) code between two classical codes, as shown in \cref{fig:linking_and_merged_codes}(a)-(c). The first classical code is the $r$-bit repetition code $C_{\text{rep}}$. The positive integer $r$ here is arbitrary, but in general must be $\geq d$ to guarantee fault-tolerance~\cite{Cohen22}. The second classical code $C_{\overline{X}}$ has bits corresponding to the qubits in the support of the $\overline{X}=X(P,Q)$ operator, and checks corresponding to the $Z$-checks that act on any of the qubits in $X(P,Q)$. From Crit.~\ref{crit:logical_criterion}(b), it is guaranteed that each check of $C_{\overline{X}}$ has weight two, and each bit is contained in three checks. The HGP code between $C_{\text{rep}}$ and $C_{\overline{X}}$ is the linking code $C_{\text{link}}$.

Although such a description as an HGP is sufficient to construct the stabilizers of the code, we will need a more detailed description to construct the morphing circuits in \cref{sec:morphing-p}. To this end, we split the linking code up into $2r-1$ \textit{layers} from $\rho=1,3/2,\dots,r$. Layers labeled by an integer $\rho=1,2,\dots,r$ are called ``primal'' layers and correspond in the HGP construction to bits of the repetition code, while layers labeled by a half-integer $\rho=3/2,5/2,\dots,r-1/2$ are called ``dual'' layers and correspond to checks of the repetition code. In each primal layer, we have a qubit for each qubit in the support of $X(P,Q)$, and a $Z$-stabilizer for each $Z$-stabilizer in the BB code that intersects $X(P,Q)$. In each dual layer, these are swapped: we have qubits corresponding to $Z$-stabilizers of the BB code, and $X$-stabilizers corresponding to each qubit in the support of $X(P,Q)$. Finally, we will also need to use an additional ancillary layer $\rho=1/2$ consisting of dual-layer qubits that are each stabilized by a single-qubit $Z$-stabilizer.

Let $H\subseteq G$ label the subset of $Z$-stabilizers $s(Z,h)$ that have overlapping support with $X(P,Q)$. Explicitly, we have
\begin{equation}\label{eq:defH}
    H=\bigcup_{i=1}^{3}(b_{i}P\cup a_{i}Q).
\end{equation}
Moreover, let $\tilde{H}=H\setminus\big(b_{1}P\cup a_{1}Q\big)=\{h\in H\mid b_{1}^{-1}h\notin P,a_{1}^{-1}h\notin Q\}$.
Then, in summary, the linking code has the following sets of qubits: labeled labeled labeled
\begin{itemize}
    \item \textit{Primal-layer left qubits} $q(\rho,L,p)$, for $p\in P$ and $\rho=1,\dots,r$;
    \item \textit{Primal-layer right qubits} $q(\rho,R,q)$, for $q\in Q$ and $\rho=1,\dots,r$; and
    \item \textit{Dual-layer qubits} $q(\rho,h)$, for $h\in H$ and $\rho=1/2,\dots,r-1/2$.
\end{itemize}
Moreover, the linking code has the following sets of stabilizers:
\begin{itemize}
    \item \textit{Left $X$-stabilizers} $s(X,\rho,L,p)$, for $p\in P$ and $\rho=3/2,\dots,r-1/2$;
    \item \textit{Right $X$-stabilizers} $s(X,\rho,R,q)$, for $q\in Q$ and $\rho=3/2,\dots,r-1/2$;
    \item \textit{Type-1 $Z$-stabilizers} $s(Z,\rho,h)$ for $h\in H\setminus\tilde{H}$ and $\rho=1,\dots,r$;
    \item \textit{Type-2 $Z$-stabilizers} $s(Z,\rho,\tilde{h})$ for $\tilde{h}\in\tilde{H}$ and $\rho=1,\dots,r$; and
    \item \textit{Ancillary $Z$-stabilizers} $s(Z,0,h)$ for $h\in H$ (specified below).
\end{itemize}

The support of each of these stabilizers is given graphically in both \cref{fig:linking_and_merged_codes,fig:linking_contracting_stabilizers}. \cref{fig:linking_and_merged_codes}(c) shows the Tanner graph of $C_{\text{link}}$ (which is \textit{not} the same as the connectivity graph) that corresponds to the logical $\overline{X}$-operator $X(\{x^{2},y^{3},y^{5},xy^{5}\},\{1,xy,y^{5},x^{4}y^{5}\})$ of the [[72,12,6]] BB code (with morphing protocol defined by the $f_{xy}$ homomorphism). Explicitly, each $X$-stabilizer has weight five: each left $X$-stabilizer $s(X,\rho,L,p)$ has support on the primal qubits $q(\rho\pm1/2,L,p)$ and dual qubits $q(\rho,b_{i}p)$ for $i=1,2,3$, while each right $X$-stabilizer $s(X,\rho,R,q)$ has support on $q(\rho\pm1/2,R,q)$ and $q(\rho,a_{i}q)$. The ancillary $Z$-stabilizers $s(Z,0,h)$ have support only on the qubit $q(1/2,h)$. Each non-ancillary $Z$-stabilizer $s(Z,\rho,h)$ is guaranteed to have weight four (three on the $\rho=r$ boundary) by Crit.~\ref{crit:logical_criterion}(b), with support on dual-layer qubits $q(\rho\pm1/2,h)$ and whichever two primal-layer qubits can be written $q(\rho,L,b_{j}^{-1}h)$ for $b_{j}^{-1}h\in P$ or $q(\rho,R,a_{j}^{-1}h)$ for $a_{j}^{-1}h\in Q$. Moreover, we split the $Z$-stabilizers into two subsets that will be important for the morphing protocol: type-1 contains $Z$-stabilizers that have support on either $q(\rho,L,b_{1}^{-1}h)$ or $q(\rho,R,a_{1}^{-1}h)$, while type-2 contains $Z$-stabilizers that have support on neither of those qubits. Note that Crit.~\ref{crit:logical_criterion}(a) guarantees that no $Z$-stabilizers will have support on both $q(\rho,L,b_{1}^{-1}h)$ and $q(\rho,R,a_{1}^{-1}h)$. We also define type-1 and type-2 dual-layer qubits $q(\rho,h)$ depending on whether $h\in H\setminus\tilde{H}$ or $h\in \tilde{H}$, respectively.

The linking code has a logical $\overline{X}_{\text{link}}$ representative with support on qubits $q(1,L,p)$ and $q(1,R,q)$ for all $p\in P$ and $q\in Q$. This $\overline{X}_{\text{link}}$ representative is ``facing'' the BB code so that it can be used in lattice surgery in the merged code. The linking code also has a $\overline{Z}_{\text{link}}$ representative with support on qubits $q(\rho,L,p)$ for all $\rho=1,\dots,r$ and for a fixed $p\in P$. This representative is ``vertical'' in \cref{fig:linking_and_merged_codes} and can be ``facing'' a surface code so that it can be used for the $\overline{Z}_{\text{SC}}\overline{Z}_{\text{link}}$ lattice surgery measurement. There are also similar logical $\overline{Z}_{\text{link}}$ representatives for each $p\in P$ and $q\in Q$.

Next, we define the merged mid-cycle code $C_{\text{merged}}$, shown in \cref{fig:linking_and_merged_codes}(d), that is used to perform the $\overline{X}_{\text{link}}\overline{X}_{\text{BB}}$ measurement. The merged mid-cycle code is very similar to the split linking and BB codes but with the following modifications. First, we add a new set of \textit{interface} $X$-stabilizers on the dual layer $\rho=1/2$ labeled by $s(X,1/2,L,p)$ and $s(X,1/2,R,q)$. Their support is the same as the other dual-layer $X$-stabilizers of the linking code, but the $\rho=0$ layer of qubits is identified with qubits in the BB code itself. The product of all these interface $X$-stabilizers is $\overline{X}_{\text{link}}\overline{X}_{\text{BB}}$, as required. However, they commute with neither the ancillary $Z$-stabilizers $s(Z,0,h)$ of the linking code nor the $Z$-stabilizers $s(Z,h)$ of the BB code. Therefore, in the merged mid-cycle code we merge pairs of $Z$-stabilizers with the same label $h \in H$ forming a single, weight-7 stabilizer 
\begin{align}\label{eq:merged-Z-stab}
   s(Z,0,h)_{\text{merged}}\equiv s(Z,h)_{\text{merged}}=s(Z,0,h)_{\text{link}}s(Z,h)_{\text{BB}}. 
\end{align} We refer to both the new $X$-stabilizers and the merged $Z$-stabilizers as \textit{interface} stabilizers.

\subsubsection{Morphing Protocols}
\label{sec:morphing-p}

\begin{figure*}
\centering
    \includegraphics[width = \linewidth]{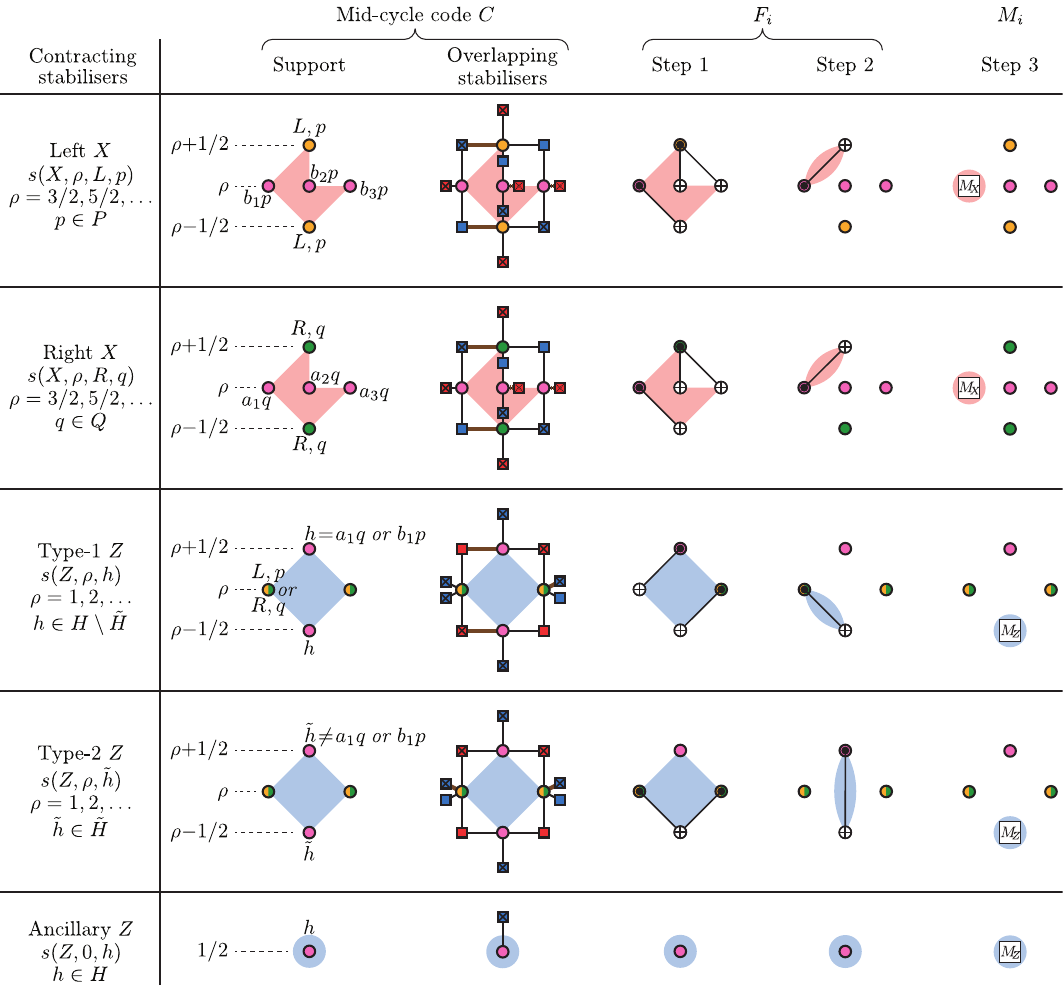}
    \caption{Visual representation of the contraction circuits for the linking code in \cref{tab:linking_code_contraction}, from the perspective of the five types of contracting stabilizers. Qubits are represented by circles, colored yellow (primal layer left), green (primal layer right), or pink (dual layer). The weight of each stabilizer is guaranteed to be as shown due to Crit.~\ref{crit:logical_criterion}, but the precise identity of some of the qubits is not always the same. In the ``Overlapping stabilizers'' column, the $X$- and $Z$-stabilizers that overlap with the contracting stabilizer are represented by red and blue boxes (respectively) and connected to the overlapping qubits. The overlapping stabilizers that are eXpanding are marked with a cross, those that are contracting are without a cross, and those that could be either with a dashed cross. Furthermore, edges between qubits and stabilizers that are related by the multiplication of $a_{1}$ or $b_{1}$ are formatted brown and bold. The three steps of the circuit $M_{i}\circ F_{i}$ are shown, with the shaded region representing the support of the contracting stabilizer before the operations in the step are implemented. Note that the contraction circuits for $Z$-stabilizers $s(Z,r,h)$ are the same as those with $\rho<r$ but with the top dual-layer qubit, and the related CNOTs, removed.} \label{fig:linking_contracting_stabilizers}
\end{figure*}

We can now describe our morphing protocol for the linking code $C_{\text{link}}$. Our protocol has two contraction circuits $F_{i}$ and two end-cycle codes $\tilde{C}_{i,\text{link}}$, and assumes of course that the BB code satisfies the homomorphism Crit.~\ref{crit:homomorphism_appendix}. Which stabilizers --- each labeled by a group element --- are contracting in a circuit is determined by which of the two cosets $K=\ker f$ or $K^{c}=G\setminus \ker f$ the group element is contained in. Specifically, writing $K_{\rho}=K$ if $\rho$ is even and $K_{\rho}=K^{c}$ if $\rho$ is odd, the following stabilizers are contracting in $F_{1}$:
\begin{itemize}
    \item the left $X$-stabilizers $s(X,\rho,L,p)$ for $p\in K_{\rho+1/2}\cap P$,
    \item the right $X$-stabilizers $s(X,\rho,R,q)$ for $q\in (a_{1}b_{1}K_{\rho+1/2})\cap Q$, and
    \item the $Z$-stabilizers $s(Z,\rho,h)$ for $h\in (b_{1}^{\vphantom{c}}K_{\rho}^{c})\cap H$.
\end{itemize}
For $F_{2}$, the contracting stabilizers are given by replacing $K_{\rho}^{\vphantom{c}}\leftrightarrow K_{\rho}^{c}$. Note that, for example, the coset $a_{1}b_{1}K$ is equal either to $K$ or $K^{c}$, depending on the value of $f(a_{1}b_{1})$.

\begin{table}
    \caption{Definition of the contracting circuits $F_{i,\text{link}}$ and measurements $M_{i,\text{link}}$ for the linking code, again assuming Crit.~\ref{crit:homomorphism_appendix} is satisfied. The circuit is defined in terms of three steps, the first of which contains a set of commuting but not-necessarily-simultaneously-executable CNOTs (step 2 is simultaneously executable). The gates are applied for all $\rho=1,2,\dots,r-1$ and $\tau=1,2,\dots,r$. The circuit $M_{1,\text{link}}\circ F_{1,\text{link}}$ is defined by applying the gates below for each $p^{(\rho)}\in K_{\rho}^{c}\cap P$, $q^{(\rho)}\in (a_{1}^{\vphantom{c}}b_{1}^{\vphantom{c}}K_{\rho}^{c})\cap Q$, and $\tilde{h}^{(\rho)}\in (b_{1}^{\vphantom{c}}K_{\rho}^{c})\cap \tilde{H}$; while for $M_{2,\text{link}}\circ F_{2,\text{link}}$ one chooses $p^{(\rho)}\in K_{\rho}\cap P$, $q^{(\rho)}\in (a_{1}b_{1}K_{\rho})\cap Q$, and $\tilde{h}^{(\rho)}\in (b_{1}^{\vphantom{c}}K_{\rho})\cap \tilde{H}$. In both circuits, every qubit in the first dual layer $q(1/2,h)$ is measured for all $h\in H$.}\label{tab:linking_code_contraction}
    \renewcommand{\arraystretch}{1.35}
    \setlength{\tabcolsep}{6pt}
    \begin{tabular}{|c|c|}
    \hline
        \multirow{6}{*}{Step 1}& CNOT$\big(q(\rho+1/2,b_{1}p^{(\rho)}),q(\rho,L,p^{(\rho)})\big)$,\\
        & CNOT$\big(q(\tau,L,p^{(\tau{+}1)}),q(\tau-1/2,b_{2}p^{(\tau{+}1)})\big)$,\\
        & CNOT$\big(q(\tau,L,p^{(\tau{+}1)}),q(\tau-1/2,b_{3}p^{(\tau{+}1)})\big)$,\\
        & CNOT$\big(q(\rho+1/2,a_{1}q^{(\rho)}),q(\rho,R,q^{(\rho)})\big)$,\\
        & CNOT$\big(q(\tau,R,q^{(\tau{+}1)}),q(\tau-1/2,a_{2}q^{(\tau{+}1)})\big)$,\\
        & CNOT$\big(q(\tau,R,q^{(\tau{+}1)}),q(\tau-1/2,a_{3}q^{(\tau{+}1)})\big)$\\\hline
        \multirow{5}{*}{Step 2}& CNOT$\big(q(\rho+1/2,b_{1}p^{(\rho)}),q(\rho+1,L,p^{(\rho)})\big)$,\\
        & CNOT$\big(q(\rho+1/2,a_{1}q^{(\rho)}),q(\rho+1,R,q^{(\rho)})\big)$,\\
        & CNOT$\big(q(\tau,L,p^{(\tau)}),q(\tau-1/2,b_{1}p^{(\tau)})\big)$,\\
        & CNOT$\big(q(\tau,R,q^{(\tau)}),q(\tau-1/2,a_{1}q^{(\tau)})\big)$,\\
        & CNOT$\big(q(\rho+1/2,\tilde{h}^{(\tau)}),q(\rho-1/2,\tilde{h}^{(\tau)})\big)$\\
        \hline
        \multirow{6}{*}{Step 3}& $M_{X}\big(q(\rho+1/2,b_{1}p^{(\rho)})\big)$,\\
        &$M_{X}\big(q(\rho+1/2,a_{1}q^{(\rho)})\big)$,\\
        &$M_{Z}\big(q(\rho-1/2,b_{1}p^{(\rho)})\big)$,\\
        &$M_{Z}\big(q(\rho-1/2,a_{1}q^{(\rho)})\big)$,\\
        &$M_{Z}\big(q(\rho-1/2,\tilde{h}^{(\rho)})\big)$,\\
        &$M_{Z}\big(q(1/2,h)\big)$\\
        \hline
    \end{tabular}
\end{table}

We list all the gates involved in \cref{tab:linking_code_contraction}, but a more intuitive picture is given by \cref{fig:linking_contracting_stabilizers}. There, we show each of the five types of stabilizers in the linking code and the local contraction circuits we use to contract each of them. We sort the gates and measurements into ``steps'' instead of ``rounds'' to indicate that each step contains a set of CNOTs that commute but are not necessarily simultaneously executable.

Note the special role played by the elements $a_1$ and $b_1$ in the construction. For example, each contracting left $X$-stabilizer $s(X,\rho,L,p)$ has support on qubits labeled $q(\rho,b_{1}p)$, $q(\rho,b_{2}p)$ and $q(\rho,b_{3}p)$. In the first step, the support of $s(X,\rho,L,p)$ on these qubits is removed by CNOTs with the primal-layer qubits $q(\rho\pm1/2,L,p)$; specifically, $q(\rho,b_{1}p)$ participates in a CNOT with $q(\rho-1/2,L,p)$, while $q(\rho,b_{2}p)$ and $q(\rho,b_{3}p)$ participate in a CNOT with $q(\rho+1/2,L,p)$. This asymmetry between the qubit $q(\rho,b_{1}p)$ and the qubits $q(\rho,b_{2}p)$ and $q(\rho,b_{3}p)$ is why we need to split the $Z$-stabilizers into the type-1 and type-2 subsets, since the pattern of CNOT gates that are shared with adjacent $X$-stabilizers is different. In the Tanner graph in \cref{fig:linking_and_merged_codes} we explicitly show which of the edges between a stabilizer and a qubit represent multiplication by either $a_{1}$ or $b_{1}$. With this, it is easy to see which $Z$-stabilizers are type-2 since they are not connected by any $a_{1}$ or $b_{1}$ edges in the Tanner graph.

One can verify that the contraction circuit is indeed valid by considering a contracting stabilizer $s$ and each of the other adjacent stabilizers that overlap with it, as shown in the ``Overlapping stabilizers'' column of \cref{fig:linking_contracting_stabilizers}. By considering the contraction circuits in \cref{tab:linking_code_contraction,fig:linking_contracting_stabilizers}, one can convince oneself that whenever there is an adjacent contracting stabilizer $s'$, the contraction circuit for $s$ does not interfere with the contraction circuit for $s'$ by inadvertently expanding it. With this property satisfied, \cref{tab:linking_code_contraction} defines a valid contraction circuit for the linking code.

In the end-cycle linking code $\tilde{C}_{i,\text{link}}$, all the primal-layer qubits correspond to data qubits and \textit{most} of the dual-layer qubits correspond to ancilla qubits. For each layer $\rho=3/2,5/2,\dots,r-1/2$, all the type-1 dual-layer qubits $q(\rho,h)$ with $h\in H\setminus\tilde{H}$ are used to measure either an $X$-stabilizer or a type-1 $Z$-stabilizer, depending on which coset $K_{\rho+1/2}$ or $K_{\rho+1/2}^{c}$ $h$ falls in. Meanwhile, the type-2 dual qubits $q(\rho,\tilde{h})$ with $\tilde{h}\in\tilde{H}$ are only measured when the $Z$-stabilizer $s(Z,\rho+1/2,\tilde{h})$ is contracting. Therefore, there are some type-2 dual-layer qubits that remain data qubits in $\tilde{C}_{i,\text{link}}$; these have labels $\tilde{h}\in (b_{1}K_{\rho+1/2})\cap\tilde{H}$. Note that in the linking code, all the ancillary dual-layer qubits $q(1/2,h)$ are measured every round.

\begin{figure*}
\centering
    \includegraphics[width = \linewidth]{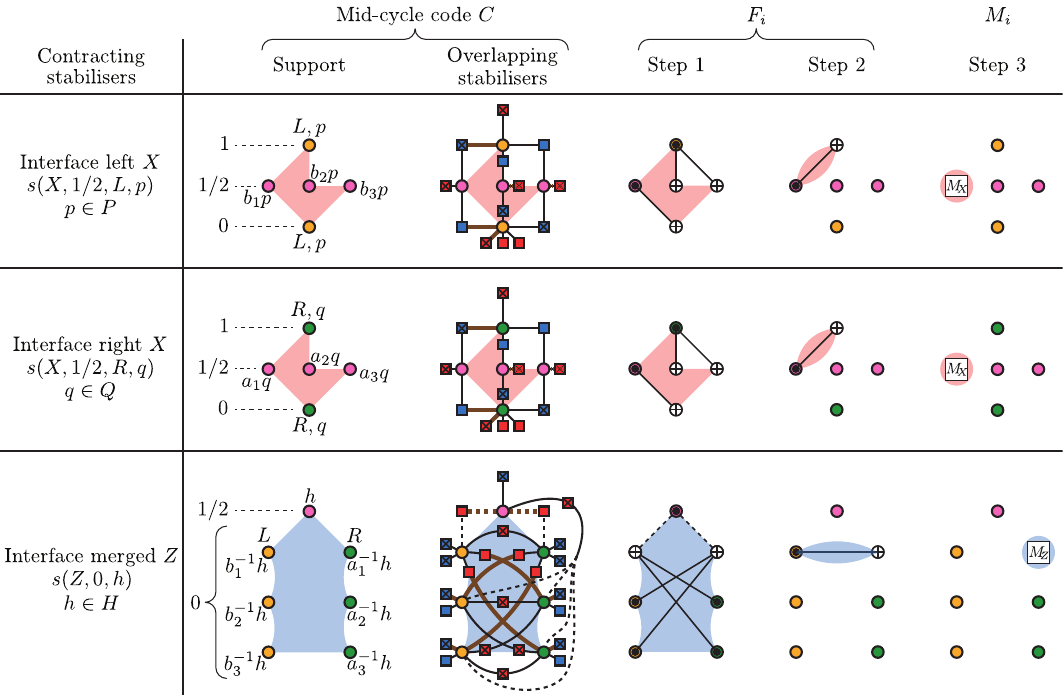}
    \caption{Visual representation of the contraction circuits for the interface stabilizers of the merged code in \cref{tab:merged_code_contraction}, from the perspective of contracting stabilizers. Our notation is the same as in \cref{fig:linking_contracting_stabilizers}. For the interface merged $Z$-stabilizer, the details of the overlapping $X$-stabilizers depend on the value of $h$. The merged $Z$-stabilizer will intersect nine stabilizers in the BB code as per usual, as well as two interface $X$-stabilizers that intersect the $Z$-stabilizer on two qubits, one of which will always be the dual qubit $q(1/2,h)$. In step 1, only one of the dashed CNOT gates is implemented for each $Z$-stabilizer, depending on which of the qubits the $X$-stabilizers overlap on.}\label{fig:merged_contracting_stabilizers}
\end{figure*}

The contraction circuits for the merged code, $C_{\text{merged}} \rightarrow \tilde{C}_{i,\text{merged}}$, are only slightly modified compared to the morphing protocols of the separate (split) linking and BB codes, as listed in \cref{tab:merged_code_contraction} and shown visually in \cref{fig:merged_contracting_stabilizers}. To obtain the gates for the merged code, we simply extend the contraction pattern for the linked code down to the new interface $X$-stabilizers. For the type-1 merged $Z$-stabilizers $s(Z,0,h)$ with $h\in H\setminus\tilde{H}$, defined in \cref{eq:merged-Z-stab}, these gates ensure that the merged $Z$-stabilizer is contracted correctly. However, we also need to add corresponding gates for the type-2 merged $Z$-stabilizers $s(Z,0,\tilde{h})$ with $\tilde{h}\in \tilde{H}$. One can check again that with our definitions of contracting stabilizers for the linking and the BB code, these circuits form a valid morphing circuit for the merged code.

\begin{table}
    \caption{Definition of the contracting circuits $F_{i}$ and measurements $M_{i}$ for the merged BB and linking code, assuming Crit.~\ref{crit:homomorphism_appendix} is satisfied. The circuit is defined with respect to the contraction circuits for the BB code in \cref{tab:contraction_circuits_appendix} and the linking code in \cref{tab:linking_code_contraction}. The additional gates for $F_{1}$ are defined for all $p\in K^{c}\cap P$, $q\in (a_{1}b_{1}K^{c})\cap Q$, $h\in (b_{1}^{\vphantom{c}}K^{c})\cap(H\setminus\tilde{H})$, $h'\in (b_{1}K)\cap(H\setminus\tilde{H})$, $\tilde{h}\in (b_{1}^{\vphantom{c}}K^{c})\cap \tilde{H}$ and $\tilde{h}'\in (b_{1}K)\cap\tilde{H}$. The sets for $F_{2}$ are found by swapping $K\leftrightarrow K^{c}$.}\label{tab:merged_code_contraction}
    \renewcommand{\arraystretch}{1.35}
    \setlength{\tabcolsep}{6pt}
    \begin{tabular}{|c|c|}
    \hline
        \multirow{5}{*}{Step 1}& Rounds 1 and 2 from \cref{tab:contraction_circuits_appendix},\\
        &Step 1 from \cref{tab:linking_code_contraction},\\
        & CNOT$\big(q(1/2,b_{1}p),q(L,p)\big)$,\\
        & CNOT$\big(q(1/2,a_{1}q),q(R,q)\big)$,\\
        & CNOT$\big(q(1/2,\tilde{h}),q(L,a_{1}\tilde{h})\big)$\\\hline
        \multirow{4}{*}{Step 2}& Round 3 from \cref{tab:contraction_circuits_appendix},\\
        & Step 2 from \cref{tab:linking_code_contraction},\\
        & CNOT$\big(q(1/2,b_{1}p),q(1,L,p)\big)$,\\
        & CNOT$\big(q(1/2,a_{1}q),q(1,R,q)\big)$\\
        \hline
        \multirow{7}{*}{Step 3}& Round 4 from \cref{tab:contraction_circuits_appendix},\\
        &Step 3 from \cref{tab:linking_code_contraction},\\
        &\textit{but with the $\rho=1/2$}\\[-4 pt]
        &\textit{measurements replaced by:}\\
        &$M_{X}\big(q(1/2,h)\big)$,\\
        &$M_{Z}\big(q(1/2,h')\big)$,\\
        &$M_{Z}\big(q(1/2,\tilde{h}')\big)$\\
        \hline
    \end{tabular}
\end{table}

It is important to note that due to the modifications of the measurements on the $\rho=1/2$ layer, the end-cycle merged code $\tilde{C}_{i,\text{merged}}$ may have more physical qubits than the end-cycle split codes $\tilde{C}_{i,\text{split}}=\tilde{C}_{i,\text{link}}\cup\tilde{C}_{i,\text{BB}}$. Specifically, in the linking code, all the $\rho=1/2$ layer qubits represented $Z$-ancillas that were measured every round. In the merged code, they now obey the same pattern of measurement as the other dual-layer qubits with $\rho\geq3/2$; that is, the type-2 dual-layer qubits with $\tilde{h}\in (b_{1}K^{c})\cap \tilde{H}$ are data qubits in $\tilde{C}_{i,\text{merged}}$. These correspond to the \textit{interface data qubits} discussed in \cref{subsubsec:lattice_surgery} and will be the only qubits involved in the interface $\ket{0}$-resets $R_{i,\text{int}}$ and $Z$-measuments $M_{i,\text{int}}$.

In order for the morphing circuits for the split and merged codes to perform lattice surgery correctly, we need to ensure that the \textit{end-cycle} codes $\tilde{C}_{i,\text{split}}$ and $\tilde{C}_{i,\text{merged}}$ satisfy Crit.~\ref{crit:lattice_surgery}. By construction, the mid-cycle codes $C_{i,\text{split}}$ and $C_{i,\text{merged}}$ satisfy Crit.~\ref{crit:lattice_surgery}, and we will now argue that the structure of the contraction circuits guarantees that the end-cycle codes do too. It is straightforward to show that Crit.~\ref{crit:lattice_surgery}(a) is satisfied using arguments from \cref{subsubsec:lattice_surgery}. For Crit.~\ref{crit:lattice_surgery}(b), one can check that all the additional gates in steps 1 and 2 of \cref{tab:merged_code_contraction} \textit{commute} with all the $X$-stabilizers of the split linking and BB codes (not including, of course, the interface $X$-stabilizers that are in the merged code but not the split codes). This means that the end-cycle support of any $X$-stabilizer of $\tilde{C}_{i,\text{split}}$ corresponds exactly to an $X$-stabilizer in $\tilde{C}_{i,\text{merged}}$. Finally, as discussed in \cref{subsubsec:lattice_surgery}, Crit.~\ref{crit:logical_criterion}(c) follows from Crit.~\ref{crit:logical_criterion}(b), and therefore lattice surgery works for the end-cycle split and merged codes.

We leave a distance analysis of the end-cycle linking code $\tilde{C}_{i,\text{link}}$ and the end-cycle merged code $\tilde{C}_{i,\text{merged}}$ to future work. However, we do note that by \cref{prop:distance_lower_bound}, the distance of the end-cycle linking code is lower-bounded by $\tilde{d}_{\text{link}}\geq d_{\text{link}}/3$, where $d_{\text{link}}$ is the distance of the corresponding mid-cycle code linking code; while the distance of the end-cycle merged code is lower bounded by $\tilde{d}_{\text{merged}}\geq d_{\text{merged}}/4$.

\subsubsection{Biplanarity}

\begin{figure}
\centering
    \includegraphics[width = \linewidth]{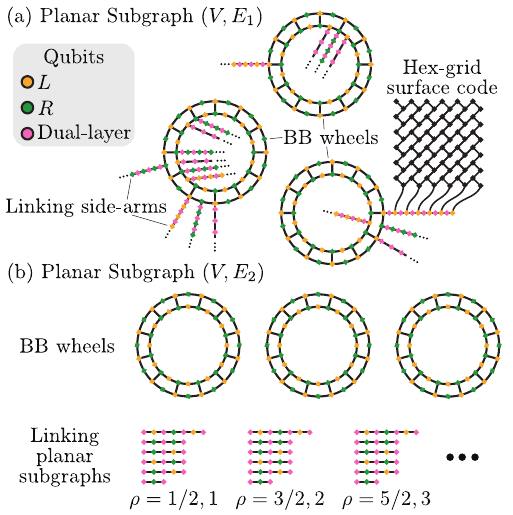}
    \caption{The biplanar layout for the $\overline{X}_{\text{link}}\overline{X}_{\text{BB}}$ and $\overline{Z}_{\text{SC}}\overline{Z}_{\text{link}}$ logical measurements of \cref{eq:I/O_circuits}. Each planar subgraph $(V,E_{1})$ and $(V,E_{2})$ contains the ``wheel graphs'' of the BB code from \cref{fig:biplanarity}, along with additional vertices and edges for the linking code. In $(V,E_{1})$, the linking code edges constitute a set of ``side-arms'' extending from one of the vertices of the wheel graph; while in $(V,E_{2})$, the linking code edges constitute a disjoint set of planar subgraphs for each pair of layers $\rho=k-1/2,k$ for $k=1,\dots,r$. The connectivity graph of the hex-grid surface code, from Ref.~\cite{McEwen23} (see also \cref{fig:surface_code_example,fig:surface_code_lattice_surgery}), used for the $\overline{Z}_{\text{SC}}\overline{Z}_{\text{link}}$ measurement, can be connected either in the subgraph $(V,E_{1})$ or $(V,E_{2})$, here we show it in $(V,E_{1})$. Any number of additional linking codes and/or surface codes connecting to different, potentially overlapping logical operators within the BB code can be accommodated simply by adding more side-arms to $(V,E_{1})$ and more disjoint planar subgraphs to $(V,E_{2})$.}\label{fig:full_biplanarity}
\end{figure}

We now show that the lattice surgery morphing protocol for measuring $\overline{X}_{\text{link}}\overline{X}_{\text{BB}}$ has a biplanar connectivity graph. First, recall from \cref{subsec:biplanarity} that the BB code itself is biplanar, where each planar subgraph consists of a set of disjoint ``wheel graphs''. Now, we must add the qubits of the linking code, and the following edges to the biplanar BB graph. We can read off the new edges by considering the CNOT gates in \cref{tab:linking_code_contraction,tab:merged_code_contraction} in both contraction circuits $F_{1}$ and $F_{2}$ and split the edges into two edge subsets, $E_{1}$ and $E_{2}$, to be added to the existing planar subgraphs of the BB code. For all $p\in P$, $q\in Q$ and $\tilde{h}\in \tilde{H}$, these edges are
\begin{enumerate}
    \item $\big(q(\rho,L,p),q(\rho+1/2,b_{1}p)\big)$ for $\rho=0,1,\dots,r-1$, which we place in $E_{1}$;
    \item $\big(q(\rho,b_{j}p),q(\rho+1/2,L,p)\big)$ for $\rho=1/2,3/2,\dots,r-1/2$, which we place in $E_{1}$ for $j=1$ and in $E_{2}$ for $j=2,3$;
    \item $\big(q(\rho,R,q),q(\rho+1/2,a_{1}q)\big)$ for $\rho=0,1,\dots,r-1$, which we place in $E_{1}$;
    \item $\big(q(\rho,a_{j}q),q(\rho+1/2,R,q)\big)$ for $\rho=1/2,3/2,\dots,r-1/2$, which we place in $E_{1}$ for $j=1$ and in $E_{2}$ for $j=2,3$;
    \item $\big(q(L,a_{1}^{-1}\tilde{h}),q(1/2,\tilde{h})\big)$, which we place in $E_{1}$; and
    \item $\big(q(\rho,\tilde{h}),q(\rho+1,\tilde{h})\big)$ for $\rho=1/2,3/2,\dots,r-1/2$, which we place in $E_{1}$.
\end{enumerate}

The biplanarity of the resulting graph is shown in \cref{fig:full_biplanarity}; this shows an example linking code for the measurement of an $\overline{X}_{\text{link}}\overline{X}_{\text{BB}}$ and a $\overline{Z}_{\text{SC}}\overline{Z}_{\text{link}}$ for the [[144,12,12]] mid-cycle BB code. In the $(V,E_{1})$ graph, we have chosen to take edges that span the ``vertical'' dimension of the linking code. When $\tilde{h}\in\tilde{H}$, it is straightforward to see that the edges in items 5 and 6 above form a ``side-arm'' extending out from the BB qubit $q(L,a_{1}^{-1}\tilde{h})$ and passing through the dual-layer qubits $q(\rho,\tilde{h})$. The story is similar for when $h\in H\setminus\tilde{H}$, except now the legs pass through both dual-layer qubits $q(\rho,h)$ and primal-layer qubits $q(\rho,L,b_{1}^{-1}h)$ or $q(\rho,R,a_{1}^{-1}h)$ (but not both due to Crit.~\ref{crit:logical_criterion}). Meanwhile, the leftover edges form the graph $(V,E_{2})$ that now consists of a set of disjoint subgraphs each containing qubits in layers $\rho-1/2$ and $\rho$, for $\rho=1,\dots,r$ (alongside the original wheel graphs from the BB code). Moreover, due to Crit.~\ref{crit:logical_criterion}, each vertex has degree at most two in $(V,E_{2})$, and therefore the $(V,E_{2})$ graph is planar.

Moreover, it is possible to incorporate any number of separate linking codes used for the lattice surgery of different logical operators into this biplanar structure. Each separate linking code will add more side-arms to the $(V,E_{1})$ graph, and more disjoint planar graphs to the $(V,E_{2})$ graph. Neither of these additions compromises the biplanarity of the overall graph, even if the logical operators of the BB code overlap on multiple qubits, although in this case, the maximum degree of the connectivity graph would increase.

It is straightforward to design a morphing protocol to measure $\overline{Z}_{\text{SC}}\overline{Z}_{\text{link}}$ between the linking code and the surface code in an analogous way to above. The mid-cycle code of the surface code ancilla is the \textit{unrotated} surface code, while the end-cycle code is the \textit{rotated} surface code, as shown in \cref{fig:surface_code_lattice_surgery}. The morphing protocol corresponds to that of \cref{fig:surface_code_example}, but with the addition of boundaries, which makes it exactly equal to the hex-grid rotated surface code of Ref.~\cite{McEwen23}. The required connectivity straightforwardly fits into the biplanar layout of the linking and BB codes; specifically, we must connect each qubit in the support of $\bar{Z}_{\text{link}}$ to a qubit in the hex-grid surface code. This can be done in either of the planar subgraphs, in \cref{fig:full_biplanarity} we show it in $(V,E_{1})$ for simplicity. Using the logic of the previous paragraph one can incorporate any number of separate I/O apparatuses for different logical qubits into a biplanar layout. Thus if one constructs the required ancillary systems to read out one primed and one unprimed logical qubit, combined with the shift automorphisms in \cref{subsec:logical_shifts}, this allows the I/O of all logical qubits in a biplanar layout.

We finish by pointing out two (possible) improvements that this morphing lattice surgery scheme has compared to the standard protocol as in Ref.~\cite{Bravyi24} (although note that \cite{Bravyi24} did not provide parity-check circuits for the full linking code in the standard protocol). The first is that the degree of each qubit in the BB, linking, and surface codes is reduced by one compared to the equivalent construction in Ref.~\cite{Bravyi24}. Indeed, for a single logical $\overline{X}_{\text{link}}\overline{X}_{\text{BB}}$ measurement, the Tanner graph of the linking code in \cref{fig:linking_and_merged_codes}(c) has vertices of degree 4 and 5, while the connectivity graph of morphing scheme has vertices of degree 3 and 4. Moreover, the Tanner graph of the merged code in \cref{fig:linking_and_merged_codes}(d) has a maximum degree of seven (these correspond to qubits and $Z$-stabilizers in the BB code that connect to the linking code), while the maximum degree of the connectivity graph is six. The second, possible improvement is that the qubit overhead of each I/O apparatus could be reduced. This is because of the use of the rotated surface code as the end-cycle code instead of the unrotated surface code. Therefore, it might be possible for $r$ to be chosen to be $\tilde{d}$ -- the distance of the end-cycle BB code --- if the distance of the end-cycle linking code does not go below $\tilde{d}$. 

\section{Modifications to the BB Morphing Procedure}\label{sec:modifications}
In this section we discuss two modifications to the BB morphing procedure described in \cref{tab:contraction_circuits} (see \cref{tab:contraction_circuits_appendix} for the schedule without assuming $a_{1}=b_{1}=1$). The first and most minor modification (\cref{subsec:round_3_reversal}) involves reversing the direction of the CNOTs in the final round of each contraction circuit $F_{i}$, such that the ancilla and data qubits in the end-cycle code are swapped. We will show that the end-cycle code is unaffected by such a modification, and therefore one does not change the error-correction properties of the circuit. This modification may be useful to reduce leakage in qubits by swapping data and ancilla qubits between each round, or to reduce the number of steps required to perform a shift of the logical operators as described in~\cref{subsec:logical_shifts}.

The second modification, discussed in \cref{subsec:round_2_reversal}, involves reversing the direction of the CNOTs in the second round of each contraction circuit $F_{i}$, and adjusting the CNOTs in the last round accordingly. This modification is more significant and can change both the distance of the end-cycle code and the circuit-level distance of the protocol. Applying this modification can sometimes result in an end-cycle code with both a larger distance and a larger circuit-level distance upper-bound than shown in~Table I of the main text~\cite{Shaw24Lowering}. Surprisingly however, numerical simulations reveal that its performance under uniform circuit-level noise is worse than the code in Table I of the main text~\cite{Shaw24Lowering}, at least for the noise rates probed by our simulations.

\subsection{Reversing the CNOTs in Round 3}\label{subsec:round_3_reversal}

\begin{table}
    \caption{Definition of the contracting circuits $F_{i}$ and measurements $M_{i}$ assuming Crit.~\ref{crit:homomorphism_appendix} is satisfied, where the CNOTs in round 3 have been reversed compared to \cref{tab:contraction_circuits_appendix,fig:general_contracting_stabilizers}. As a result, the measurements $M_{i}$ are also modified compared to \cref{tab:contraction_circuits_appendix}. The circuit $F_{1}$ is defined by applying the gates below for each $g\in K$ and $h\in K^{c}$, while for $F_{2}$ one chooses $g\in K^{c}$ and $h\in K$.}\label{tab:round_3_reversal}
    \renewcommand{\arraystretch}{1.35}
    \setlength{\tabcolsep}{6pt}
    \begin{tabular}{|c|c|}
    \hline
        \multirow{2}{*}{Round 1}& CNOT$\big(q(L,g),q(R,a_{1}^{-1}b_{3}^{\vphantom{-1}}g)\big),$\\
        & CNOT$\big(q(R,a_{3}^{-1}b_{1}^{\vphantom{-1}}h),q(L,h)\big)$\\\hline
        \multirow{2}{*}{Round 2}& CNOT$\big(q(L,g),q(R,a_{1}^{-1}b_{2}^{\vphantom{-1}}g)\big),$\\
        & CNOT$\big(q(R,a_{2}^{-1}b_{1}^{\vphantom{-1}}h),q(L,h)\big)$\\
        \hline
        \multirow{2}{*}{Round 3}& CNOT$\big(q(L,g),q(R,a_{1}^{-1}b_{1}^{\vphantom{-1}}g)\big),$\\
        &CNOT$\big(q(R,a_{1}^{-1}b_{1}^{\vphantom{-1}}h),q(L,h)\big)$\\\hline
        Round 4& $M_{X}\big(q(L,g)\big)$, $M_{Z}\big(q(L,h)\big)$\\\hline
    \end{tabular}
\end{table}

We present the modification in \cref{tab:round_3_reversal}. To understand how this modification works, consider the code after the first two rounds of the contraction circuit $F_{i}$, which are the same in both \cref{tab:contraction_circuits_appendix,tab:round_3_reversal}. Recall that the contracting stabilizers are labeled by elements in the contracting subsets
\begin{subequations}
\begin{align}
    G_{X,1}&=a_{1}K,&G_{Z,1}&=b_{1}K^{c},\\
    G_{X,2}&=a_{1}K^{c},&G_{Z,2}&=b_{1}K,
\end{align}
\end{subequations}
(these are the same as in \cref{eq:contracting_subsets}). After the first two rounds of CNOTs in $F_{i}$, all the contracting $X$- and $Z$-stabilizers have support on just two qubits, with:
\begin{subequations}
    \begin{align}
        s(X,g)_{\rm{Round 2}}&=X(a_{1}g,b_{1}g),\\
        s(Z,g')_{\rm{Round 2}}&=Z(b_{1}^{-1}g',a_{1}^{-1}g'),
    \end{align}
\end{subequations}
for $g\in G_{X,i}$ and $g'\in G_{Z,i}$. Moreover, these stabilizers have disjoint supports, i.e.~whenever $g\in G_{X,i}$ and $g'\in G_{Z,i}$, we have $a_{1}^{\vphantom{-1}}g\neq b_{1}^{-1}g'$ and $b_{1}^{\vphantom{-1}}g\neq a_{1}^{-1}g'$. For example, for $i=1$, if $g\in G_{X,1}=a_{1}K$, then we have $a_{1}g\in K$ and $b_{1}g\in a_{1}b_{1}K$, while if $g'\in G_{Z,1}=b_{1}K^{c}$, then we have $b_{1}^{-1}g'\in K^{c}$ and $a_{1}^{-1}g'\in a_{1}b_{1}K^{c}$. Consider for a moment just the contracting $X$-stabilizer $s(X,g)_{\rm{Round 2}}$. There are two circuits we wish to consider, corresponding to the two different directions of the Round 3 CNOT in \cref{tab:contraction_circuits_appendix,tab:round_3_reversal}, given by
\begin{subequations}
\begin{equation}\label{eq:round_3_no_reverse}
\includegraphics{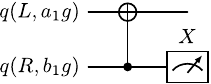}
\end{equation}
in \cref{tab:contraction_circuits_appendix}, and by
\begin{equation}\label{eq:round_3_with_reverse}
\includegraphics{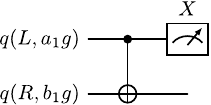}
\end{equation}
\end{subequations}
in \cref{tab:round_3_reversal}. After \cref{eq:round_3_no_reverse}, the end-cycle code $\tilde{C}_{i}^{(L)}\equiv \tilde{C}_i$ is encoded in the left qubits, while after \cref{eq:round_3_with_reverse} the end-cycle code $\tilde{C}_{i}^{(R)}$ is encoded in the right qubits.

We claim that these two end-cycle codes $\tilde{C}_{i}^{(L)}$ and $\tilde{C}_{i}^{(R)}$ are identical up to swapping the qubits $q(L,a_{1}g)$ and $q(R,b_{1}g)$. Indeed, simple circuit identities show that
\begin{equation}\label{eq:round_3_reverse_circuit_identities}
\includegraphics{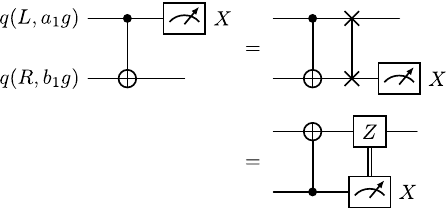}
\end{equation}
Similar circuits hold for any contracting $Z$-stabilizer.
Since the expanding stabilizers and logical operators of $C$ commute with the contracting $X$-stabilizer, they cannot flip the measurement outcome in \cref{eq:round_3_reverse_circuit_identities} and therefore the conditional $Z$ operator is never triggered. Therefore, the stabilizers and logical operators of $\tilde{C}_{i}^{(L)}$ and $\tilde{C}_{i}^{(R)}$ are identical, up to swapping the qubits $q(L,a_{1}g)$ and $q(R,b_{1}g)$. In principle, one could even make an independent choice of which direction to implement the CNOT in Round 3 for each contracting stabilizer, although we do not see a reason why this would be preferable.

The main application of this reversal is in leakage reduction, similar to the proposal in Ref.~\cite{McEwen23} for the surface code. In particular, swapping data and ancilla qubits between each QEC round means that every qubit will be reset every two rounds, thereby removing any leakage. We can achieve this swapping by defining $F_{1}$ using \cref{tab:contraction_circuits_appendix} and $F_{2}$ using \cref{tab:round_3_reversal}. With this, the end-cycle code $\tilde{C}_{1}$ is defined over the left qubits and the measurements $M_{1}$ over the right qubits, while the code $\tilde{C}_{2}$ is defined over the right qubits and the measurements $M_{2}$ over the left.

A secondary application is to reduce the number of swaps required to perform a shift automorphism, as discussed in \cref{subsec:logical_shifts}. In particular, performing the reversal as in \cref{tab:round_3_reversal} results in the end-cycle code $\tilde{C}_{i}$ but with the qubits $q(L,a_{1}g)$ and $q(R,b_{1}g)$ swapped. This swap is the first step in many of the shift automorphisms in \cref{subsec:logical_shifts} and therefore removes the need to explicitly perform the swap.

\subsection{Reversing the CNOTs in Round 2}\label{subsec:round_2_reversal}

\begin{table}
    \caption{Definition of the contracting circuits $F_{i}$ and measurements $M_{i}$ assuming Crit.~\ref{crit:homomorphism_appendix} is satisfied, where the CNOTs in round 2 have been reversed compared to \cref{tab:contraction_circuits_appendix,fig:general_contracting_stabilizers}. As a result, the round 3 CNOTs and measurements $M_{i}$ are also modified compared to \cref{tab:contraction_circuits_appendix}. The circuit $F_{1}$ is defined by applying the gates below for each $g\in K$ and $h\in K^{c}$, while for $F_{2}$ one chooses $g\in K^{c}$ and $h\in K$.}\label{tab:round_2_reversal}
    \renewcommand{\arraystretch}{1.35}
    \setlength{\tabcolsep}{6pt}
    \begin{tabular}{|c|c|}
    \hline
        \multirow{2}{*}{Round 1}& CNOT$\big(q(L,g),q(R,a_{1}^{-1}b_{3}^{\vphantom{-1}}g)\big),$\\
        & CNOT$\big(q(R,a_{3}^{-1}b_{1}^{\vphantom{-1}}h),q(L,h)\big)$\\\hline
        \multirow{2}{*}{Round 2}& CNOT$\big(q(R,a_{1}^{-1}b_{2}^{\vphantom{-1}}g),q(L,g)\big),$\\
        & CNOT$\big(q(L,h),q(R,a_{2}^{-1}b_{1}^{\vphantom{-1}}h)\big)$\\
        \hline
        \multirow{2}{*}{Round 3}& CNOT$\big(q(R,a_{2}^{-1}b_{2}^{\vphantom{-1}}h),q(L,h)\big)$\\
        &CNOT$\big(q(L,g),q(R,a_{2}^{-1}b_{2}^{\vphantom{-1}}g)\big),$\\\hline
        \multirow{2}{*}{Round 4}& $M_{X}\big(q(R,a_{2}^{-1}b_{2}^{\vphantom{-1}}h)\big)$,\\
        &$M_{Z}\big(q(R,a_{2}^{-1}b_{2}^{\vphantom{-1}}g)\big)$\\\hline
    \end{tabular}
\end{table}

Unlike the modification in \cref{subsec:round_3_reversal}, reversing the direction of the CNOT in round 2 has significant consequences for the performance of the code. We explicitly present the modification in \cref{tab:round_2_reversal}.

First, we must convince ourselves that \cref{tab:round_2_reversal} is a valid contraction circuit, i.e.~the contracting stabilizers are indeed measured by $M_{i}$. Similar to \cref{prop:mid_to_end_propagation}, setting $i=1$ for simplicity, one can show that a (possibly trivial) logical $X$ operator $X(P,Q)$ satisfying $BP+AQ=0$ propagates under $F_{1}$ to
\begin{subequations}\label{eq:mid_to_end_reverse_2}
\begin{multline}
    X(P,Q)\mapsto X\Big(P+\big(b_{2}^{-1}b_{3}^{\vphantom{-1}}P+a_{1}^{\vphantom{-1}}b_{2}^{-1}Q\big)\cap K\\
    +\big(a_{1}^{-1}a_{2}^{\vphantom{-1}}b_{2}^{-1}b_{3}^{\vphantom{-1}}P+a_{3}^{\vphantom{-1}}b_{1}^{-1}Q+a_{2}^{\vphantom{-1}}b_{2}^{-1}Q\big)\cap K^{c},\\
    \big(Q+a_{1}^{-1}b_{3}^{\vphantom{-1}}P\big)\cap \big(a_{1}b_{1}K^{c}\big)\Big),
\end{multline}
while a logical $Z$ operator $Z(P,Q)$ satisfying $A^{-1}P+B^{-1}Q=0$ propagates to
\begin{multline}
    Z(P,Q)\mapsto X\Big(P+\big(a_{2}^{\vphantom{-1}}a_{3}^{-1}P+a_{2}^{\vphantom{-1}}b_{1}^{-1}Q\big)\cap K^{c}\\
    +\big(a_{2}^{\vphantom{-1}}a_{3}^{-1}b_{1}^{\vphantom{-1}}b_{2}^{-1}P+a_{1}^{\vphantom{-1}}b_{3}^{-1}Q+a_{2}^{\vphantom{-1}}b_{2}^{-1}Q\big)\cap K,\\
    \big(Q+a_{3}^{-1}b_{1}^{\vphantom{-1}}P\big)\cap \big(a_{1}b_{1}K\big)\Big).
\end{multline}
\end{subequations}
With \cref{eq:mid_to_end_reverse_2}, it is straightforward to show that a contracting $X$-stabilizer $X(Ag,Bg)$ with $g\in a_{1}K$ propagates to $X(0,b_{2}g)$ and is measured by an $X$-measurement, while a contracting $Z$-stabilizer $Z(B^{-1}g,A^{-1}g$ with $g\in b_{1}K^{c}$ propagates to $Z(0,a_{2}^{-1}g)$. \cref{tab:round_2_reversal} therefore defines a valid morphing protocol.

Next, we consider the stabilizer generators of the new end-cycle codes. Substituting an expanding stabilizer into \cref{eq:mid_to_end_reverse_2} shows that the stabilizer generators of the end-cycle codes have weight at most 11, given by
\begin{subequations}\label{eq:reverse_2_end-cycle_stabilizers}
    \begin{align}
        &X\Big(\big(a_{3}^{\vphantom{-1}}b_{1}^{-1}B+b_{2}^{-1}(a_{1}^{\vphantom{-1}}{+}a_{2}^{\vphantom{-1}})(a_{1}^{-1}(a_{2}^{\vphantom{-1}}{+}a_{3}^{\vphantom{-1}})+b_{1}^{\vphantom{-1}}+b_{2}^{\vphantom{-1}})\big)g,0\Big),\\
        &Z\Big(\big(a_{1}^{\vphantom{-1}}b_{3}^{-1}A^{-1}\nonumber\\
        &\qquad+a_{2}^{\vphantom{-1}}(b_{1}^{-1}{+}b_{2}^{-1})(a_{1}^{-1}{+}a_{2}^{-1}{+}a_{3}^{-1}b_{1}^{\vphantom{-1}}(b_{2}^{-1}{+}b_{3}^{-1}))\big)g,0\Big).
    \end{align}
\end{subequations}

\begin{table*}
    \caption{A summary of the code parameters of the codes from Table I of the main text~\cite{Shaw24Lowering} that have improved performance when the round 2 CNOTs are reversed as in \cref{tab:round_2_reversal}. We only find improvements for two of BB mid-cycle codes from Ref.~\cite{Bravyi24}. For these codes, we list both the code distance and circuit-level distance upper-bound under the morphing protocol from \cref{tab:contraction_circuits_appendix}, as well as under the round 2 CNOT reversed contraction circuits from \cref{tab:round_2_reversal}. Note that the ordering of the elements in $A$ and $B$ are not important for the schedules defined by \cref{tab:contraction_circuits_appendix}, since the choice of $a_{1}$ and $b_{1}$ is determined by the choice of homomorphism, and the end-cycle codes are invariant under the swaps $a_{2}\leftrightarrow a_{3}$ or $b_{2}\leftrightarrow b_{3}$ (see \cref{cor:end-cycle_stabilizers}). Moreover, the distances obtained are the same regardless of the choice of homomorphism. This is \textit{not} the case for the circuits defined by \cref{tab:round_2_reversal}, and we therefore list all the relevant permutations of the $A$ and $B$ that provide an improvement.}
    \label{tab:round_2_reversal_code_parameters}
    \renewcommand{\arraystretch}{1.35}
    \setlength{\tabcolsep}{6pt}
    \centering
\begin{tabular}{|c|c|c|c|c|c|c|c|}
        \hline
        $\ell$, $m$ & $F_{i}$ & $\{a_{1},a_{2},a_{3}\}$ & $\{b_{1},b_{2},b_{3}\}$ & $f$ & $[[\tilde{n},k,\tilde{d}]]$ & $\tilde{d}_{\text{circ}}$ \\\hline
        \multirow{7}{*}{6, 6} & \cref{tab:contraction_circuits_appendix} & $\{x^3,y,y^2\}$ & $\{y^3,x,x^2\}$ &  $f_x,f_y,f_{xy}$ & $[[36,12,3]]$ & $\leq 3$ \\ \cline{2-7}
        & \multirow{6}{*}{\cref{tab:round_2_reversal}} & $\{x^3,y,y^2\}$ & $\{x,x^2,y^3\}$ &  \multirow{2}{*}{$f_x$} & \multirow{6}{*}{$[[36,12,4]]$} & \multirow{6}{*}{$\leq 4$} \\ \cline{3-4}
        && $\{x^3,y^2,y\}$ & $\{x,y^3,x^2\}$ & & & \\ \cline{3-5}
        && $\{y,y^2,x^3\}$ & $\{y^3,x,x^2\}$ &  \multirow{2}{*}{$f_y$} & & \\ \cline{3-4}
        && $\{y,x^3,y^2\}$ & $\{y^3,x^2,x\}$ & & & \\ \cline{3-5}
        && $\{y^2,y,x^3\}$ & $\{x^2,x,y^3\}$ & \multirow{2}{*}{$f_{xy}$} & & \\ \cline{3-4}
        && $\{y^2,x^3,y\}$ & $\{x^2,y^3,x\}$ & & & \\ \hline
        \multirow{5}{*}{12, 6} & \cref{tab:contraction_circuits_appendix} & $\{x^3,y,y^2\}$ & $\{y^3,x,x^2\}$ &  $f_x,f_y,f_{xy}$ & $[[72,12,6]]$ & $\leq 6$ \\ \cline{2-7}
        & \multirow{4}{*}{\cref{tab:round_2_reversal}} & $\{x^3,y^2,y\}$ & $\{x,y^3,x^2\}$ & \multirow{2}{*}{$f_{x}$} & \multirow{2}{*}{$[[72,12,8]]$} & \multirow{2}{*}{$\leq6$}\\ \cline{3-4}
        && $\{x^3,y,y^2\}$ & $\{x,x^2,y^3\}$ &&&\\ \cline{3-7}
        && $\{y,y^2,x^3\}$ & $\{y^3,x,x^2\}$ & \multirow{2}{*}{$f_{y}$} & \multirow{2}{*}{$[[72,12,7]]$} & \multirow{2}{*}{$\leq7$}\\ \cline{3-4} 
        && $\{y,x^3,y^2\}$ & $\{y^3,x^2,x\}$ &&&\\ \hline
    \end{tabular}
\end{table*}

\begin{figure}
    \centering
    \includegraphics[width = \linewidth]{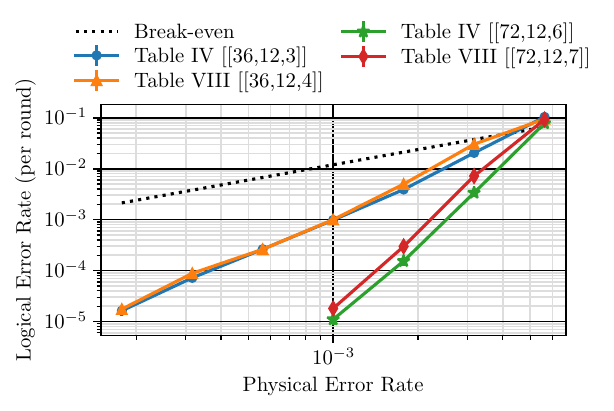}
% In particular, the protocols from \cref{tab:round_2_reversal_code_parameters} are defined using the contraction circuits from \cref{tab:round_2_reversal}, while those from Table I of the main text~\cite{Shaw24Lowering} are defined by \cref{tab:contraction_circuits_appendix}. \mhs{table no's}}
    \caption{Numerical logical performance of some end-cycle codes using the contraction circuits from \cref{tab:round_2_reversal}. As listed in \cref{tab:round_2_reversal_code_parameters}, these codes have a larger circuit-level distance upper-bound than the corresponding code in Table I of the main text~\cite{Shaw24Lowering} (whose numerical performance is also shown). The numerical performance is evaluated with respect to a uniform circuit-level depolarizing noise model and decoded using BP-OSD.} 
   \label{fig:reverse_2_numerics}
\end{figure}

Clearly, the end-cycle codes defined by the stabilizer generators in~\cref{eq:reverse_2_end-cycle_stabilizers} are not equivalent to the end-cycle codes of \cref{tab:contraction_circuits_appendix} or \ref{tab:round_3_reversal}, and therefore may have different logical properties. We therefore checked the end-cycle code distance and the BP-OSD circuit-level distance upper-bound for all the possible modified end-cycle codes for the morphing protocols listed in Table I of the main text~\cite{Shaw24Lowering}. Codes that had an improved end-cycle distance \textit{or} circuit-level distance upper-bound are listed in \cref{tab:round_2_reversal_code_parameters}. We find that two sets of end-cycle codes have an improved circuit-level distance upper-bound: one with $[[36,12,4]]$ and $d_{\text{circ}}\leq 4$ (improved from 3), and one with $[[72,12,7]]$ and $d_{\text{circ}}\leq 7$ (improved from 6). We numerically evaluate the circuit-level performance of one code representative with these improved parameters in \cref{fig:reverse_2_numerics}. Perhaps surprisingly, we do not see any improvement in the logical performance compared to the original morphing protocols defined in \cref{tab:contraction_circuits_appendix}. We leave it to future work to explore these results in more detail, as well as other potential modifications to the morphing protocol.

\bibliography{my_bib}
\end{document}